\newcommand\longversion[1]{#1}
\newcommand\shortversion[1]{}

\shortversion{\documentclass{aamas2016}}

\longversion{
\documentclass[letterpaper]{article}
\usepackage{fullpage}
}

%Color and Pictures
\usepackage[usenames,svgnames,dvipsnames]{xcolor}
\usepackage{graphicx}
\usepackage{tikz}
\usetikzlibrary{decorations.shapes,calc,matrix,decorations.markings,decorations.pathreplacing,patterns}

%Formatting and Layout
% \usepackage{wrapfig}
% \usepackage{fullpage}
\usepackage{xspace}
\usepackage{multicol}

%Pretty boxes
% \usepackage{framed}
% \usepackage{boxedminipage}

%Math
\usepackage{amsmath,amssymb}
\longversion{\usepackage{amsthm}}
% \usepackage[retainorgcmds]{IEEEtrantools}
% \usepackage{pdfpages}
% \usepackage{lastpage}

%Fonts!
\longversion{
% \usepackage{ifxetex}
% 
% \ifxetex
%       \usepackage{fontspec}
%       \usepackage{xunicode}
%       \defaultfontfeatures{Mapping=tex-text} % To support LaTeX quoting style
%       \usepackage{euler}
%       \usepackage[small,euler-digits]{eulervm}
%       \newfontfamily\specialfont{Merge Light}
%       %Set formats for each heading level
%     \titleformat*{\section}{\Large\bfseries\sffamily\color{SlateGray}\specialfont}
%     \titleformat*{\subsection}{\large\bfseries\sffamily\color{DimGray}\specialfont}
%     \titleformat*{\subsubsection}{\itshape\specialfont}
%     \renewcommand{\maketitlehooka}{\specialfont}
%     \newfontfamily\fancyheading{Lorena}
%     \newcommand\cuteheading[1]{{\fancyheading #1}}
%     
%     %%%%%%\setromanfont{Baramond}
%     \setromanfont[SmallCapsFont={Overlock SC}]{Overlock}
%     
%     
% \else
% % Euler for math and numbers
% \usepackage[euler-digits,small]{eulervm}
% \usepackage{charter}
% \fi
% \renewcommand{\thefootnote}{\Roman{footnote}}
}
\shortversion{\usepackage{times}}

\usepackage{enumitem}

% Eat up the page numbers: 
%\pagenumbering{gobble}

%Multiple Author Names & Affiliations
\longversion{\usepackage{authblk}}

%Linked References 
%%%\usepackage[pagebackref,linktocpage]{hyperref}
% \usepackage{hyperref}
% 
% \hypersetup{
%     pdfauthor={Neeldhara Misra},     % author
%     pdfnewwindow=true,      % links in new window
%     colorlinks=true,       % false: boxed links; true: colored links
%     linkcolor=SlateBlue,          % color of internal links
%     citecolor=FireBrick,        % color of links to bibliography
%     filecolor=magenta,      % color of file links
%     urlcolor=Emerald           % color of external links
% }

%A useful package if you write url addresses:
% \usepackage{url}

\newcommand{\argmin}{\operatornamewithlimits{argmin}}

% Dutch style of paragraph formatting, i.e. no indents. 
\longversion{
\setlength{\parskip}{1.3ex plus 0.2ex minus 0.2ex}
\setlength{\parindent}{0pt}
}

%ToDo Notes

% \usepackage[backgroundcolor=Moccasin,colorinlistoftodos]{todonotes}
% %Remove the draft option in \documentclass to get rid of todonotes
% 
% \usepackage{comment}
% %\excludecomment{notes}
% \excludecomment{ver1}
% \excludecomment{notes}
% \specialcomment{optional}{\begingroup\sffamily\color{DarkGrey}}{\endgroup}
% \excludecomment{optional}
% \specialcomment{calert}{\begingroup\sffamily\color{IndianRed}}{\endgroup}

% Author specific comment commands

% \newcounter{nmcomment}
% \newcommand{\nmtodo}[1]{%
%     % initials of the author (optional) + note in the margin
%     \refstepcounter{nmcomment}%
%     {%
%         \todo[color={LightSteelBlue},size=\small,inline]{%
%             \textbf{Comment [NM\thenmcomment ]:}~#1}%
% }}
% 
% \newcounter{rvcomment}
% \newcommand{\rvtodo}[1]{%
%     % initials of the author (optional) + note in the margin
%     \refstepcounter{rvcomment}%
%     {%
%         \todo[color={LightCoral!50},size=\small,inline]{%
%             \textbf{Comment [RV\thepgcomment ]:}~#1}%
% }}

% \usepackage{enumitem}
\usepackage{multirow}
\usepackage{thmtools, thm-restate}
%\declaretheorem{theorem}

% \usepackage{marginnote}
% \newcommand\Addref{\marginnote{{\color{IndianRed}$\blacktriangleleft$\textsc{\textbf{ref}$\blacktriangleright$}}}}
% 
% \reversemarginpar

\usepackage{nicefrac}

\newcommand{\nfrac}{\nicefrac}

%Macros

\newcommand{\FPT}{\ensuremath{\mathsf{FPT}}\xspace}
\newcommand{\WOH}{\ensuremath{\mathsf{W}[1]}-hard\xspace}
\newcommand{\WO}{\ensuremath{\mathsf{W}[1]}\xspace}

\newcommand{\NP}{\ensuremath{\mathsf{NP}}\xspace}
\newcommand{\NPH}{\ensuremath{\mathsf{NP}}-hard\xspace}
\newcommand{\NPC}{\ensuremath{\mathsf{NP}}-complete\xspace}

\newcommand{\Pshort}{\ensuremath{\mathsf{P}}\xspace}
\newcommand{\NPshort}{\ensuremath{\mathsf{NP}}\xspace}

\newcommand{\W}{\ensuremath{\mathsf{W}}\xspace}
\newcommand{\PARANPC}{para-\ensuremath{\mathsf{NP}}-complete\xspace}
\newcommand{\PARANPH}{para-\ensuremath{\mathsf{NP}}-hard\xspace}
\newcommand{\PARANP}{para-\ensuremath{\mathsf{NP}}\xspace}
\newcommand{\XP}{\ensuremath{\mathsf{XP}}\xspace}

\newcommand{\defparproblem}[4]{
  \vspace{3mm}
\noindent\fbox{
  \begin{minipage}{.95\linewidth}
  \begin{tabular*}{\linewidth}{@{\extracolsep{\fill}}lr} \textsc{#1}  & {\bf{Parameter:}} #3 \\ \end{tabular*}
  {\bf{Input:}} #2  \\
  {\bf{Question:}} #4
  \end{minipage}
  }
  \vspace{2mm}
}

\newcommand{\defproblem}[3]{
  \vspace{3mm}
\noindent\fbox{
  \begin{minipage}{.95\linewidth}
  \begin{tabular*}{\linewidth}{@{\extracolsep{\fill}}lr} \textsc{#1}  \\ \end{tabular*}
  {\bf{Input:}} #2  \\
  {\bf{Question:}} #3
  \end{minipage}
  }
  \vspace{2mm}
  }

\newcommand{\CC}{{\mathcal C}}

\newcommand{\GG}{{\mathcal G}}

\newcommand{\LL}{{\mathcal L}}

\newcommand{\QQ}{{\mathcal Q}}

\newcommand{\SSS}{{\mathcal S}}

\newcommand{\UU}{{\mathcal U}}
\newcommand{\VV}{{\mathcal V}}
\newcommand{\WW}{{\mathcal W}}
\newcommand{\XX}{{\mathcal X}}
\newcommand{\YY}{{\mathcal Y}}

%\newcommand{\ggg}{{\mathfrak g}}
%\newcommand{\hhh}{{\mathfrak h}}

%\newcommand{\lll}{{\mathfrak l}}

% Paper specific macros.

\newcommand{\Disapprov}{Disapproval}
\newcommand{\Netdisapprov}{Net-Disapproval}

\newcommand{\disapprov}{disapproval}
\newcommand{\netdisapprov}{net disapproval}

\newtheorem{theorem}{Theorem}
\newtheorem{lemma}{Lemma}

\newtheorem{corollary}{Corollary}
\newtheorem{definition}{Definition}

\newtheorem{remark}{Remark}

%Clever References
% \usepackage[poorman]{cleveref}

\usepackage{bbm}
\newcommand{\eps}{\varepsilon}
\renewcommand{\epsilon}{\eps}

\usepackage[font=normalfont]{caption}

\shortversion{

\usepackage{setspace}
\setlength{\parindent}{0pt}
\setlength{\parskip}{0pt}

\newcommand{\subparagraph}{}
\usepackage{titlesec}

\titlespacing\section{0pt}{2pt plus 4pt minus 2pt}{2pt plus 2pt minus 2pt}
\titlespacing\subsection{0pt}{2pt plus 4pt minus 2pt}{2pt plus 2pt minus 2pt}
\titlespacing\subsubsection{0pt}{2pt plus 4pt minus 2pt}{2pt plus 2pt minus 2pt}

\usepackage{etoolbox}
\newcommand{\zerodisplayskips}{%
  \setlength{\abovedisplayskip}{3pt}
  \setlength{\belowdisplayskip}{3pt}
  \setlength{\abovedisplayshortskip}{3pt}
  \setlength{\belowdisplayshortskip}{3pt}}
\appto{\normalsize}{\zerodisplayskips}
\appto{\small}{\zerodisplayskips}
\appto{\footnotesize}{\zerodisplayskips}
% for the space below table
\setlength{\textfloatsep}{3ex}
\setlength{\intextsep}{1ex}

\usepackage{subfig}
\captionsetup{belowskip=0ex,aboveskip=1ex}

\renewenvironment{proof}{\vspace{-\topsep}\noindent{\em Proof:}}{ \hfill $\square$\\ }
}

\shortversion{\title{On Choosing Committees Based on Approval Votes in the Presence of Outliers}}
\longversion{\newcommand{\papertitle}{On Choosing Committees Based on Approval Votes in the Presence of Outliers}}

\shortversion{
\pdfpagewidth=8.5truein
\pdfpageheight=11truein

\sloppy
}
%Opening
\longversion{
\title{\papertitle}
\author{\shortversion{\vspace{-15pt}}Palash Dey, Neeldhara Misra, and Y. Narahari}
\affil{\shortversion{\vspace{-10pt}}Indian Institute of Science, Bangalore}
\affil{\shortversion{\vspace{-10pt}}\textit {\{palash,neeldhara,hari\}@csa.iisc.ernet.in}}
}

\begin{document}

\shortversion{
\numberofauthors{1}

\author{
\alignauthor
Paper id: 102
}
}

\maketitle

\shortversion{\vspace{-50pt}}

\begin{abstract}

We study the computational complexity of committee selection problem for several approval-based voting rules in the presence of outliers. Our first result shows that outlier consideration makes committee selection problem intractable for approval, net approval, and minisum approval voting rules. We then study parameterized complexity of this problem with five natural parameters, namely the target score, the size of the committee (and its dual parameter, the number of candidates outside the committee), the number of outliers (and its dual parameter, the number of non-outliers). For net approval and minisum approval voting rules, we provide a dichotomous result, resolving the parameterized complexity of this problem for all subsets of five natural parameters considered (by showing either \FPT or \WO{}-hardness for all subsets of parameters). For the approval voting rule, we resolve the parameterized complexity of this problem for all subsets of parameters except one.

We also study approximation algorithms for this problem. We show that there does not exist any $\alpha(\cdot)$ factor approximation algorithm for approval and net approval voting rules, for any computable function $\alpha(\cdot)$, unless $\Pshort=\NPshort$. For the minisum voting rule, we provide a $(1+\eps)$ factor approximation algorithm running in time $n^{O(\nfrac{\log m}{\eps^2})}$, for every constant $\eps>0$, where $m$ and $n$ are the number of candidates and the number of votes respectively.

\end{abstract}

\shortversion{
% Note that the category section should be completed after reference to the ACM Computing Classification Scheme available at
% http://www.acm.org/about/class/1998/.

\category{F.2}{Theory of Computation}{Analysis of Algorithms and Problem Complexity}
\category{\\I.2.11}{Artificial Intelligence}{Distributed Artificial Intelligence}[Multiagent Systems]

%A category including the fourth, optional field follows...
%\category{D.2.8}{Software Engineering}{Metrics}[complexity measures, performance measures]

%General terms should be selected from the following 16 terms: Algorithms, Management, Measurement, Documentation, Performance, Design, Economics, Reliability, Experimentation, Security, Human Factors, Standardization, Languages, Theory, Legal Aspects, Verification.

\terms{Algorithms, Theory}

%Keywords are your own choice of terms you would like the paper to be indexed by.

\keywords{Computational social choice, manipulation, voting, partial votes, incomplete information}
}

\section{Introduction}

\sloppypar
Aggregating preferences of agents is a fundamental problem in artificial intelligence and social choice theory~\cite{conitzer2010makingdecisions}. Typically, agents (or voters) express their preferences over alternatives (or candidates). There are many different models for the expression of preferences, ranging from the simplistic (each voter provides his or her favorite choice, also known as \textit{plurality voting}) to the comprehensive (each voter provides a complete ranking over the set of all candidates). \textit{Approval ballots} are an intermediate model, where a voter approves or disapproves of each candidate --- thus a vote may be thought of as a subset of approved candidates or as a binary string indexed by the candidate set. Approval votes are considered as a good compromise between the extreme models --- they provide the agent an opportunity to make a comment about every candidate, without incurring the overhead of determining a full ranking on the candidate set~\cite{RePEc,Brams08,kilgour2010approval,handbook,baumeister2010computational}. 

Our work in this paper focuses on finding the ``best'' subset of $k$ candidates when given $n$ approval votes over $m$ candidates. The set of candidates together with all the votes is usually called an \emph{election instance} or simply an election. We use the term \emph{electorate} to refer to the set of votes of all the agents, and the term \emph{committee} to refer to a subset of $k$ candidates. Since a committee is a subset of candidates, and a vote can also be interpreted as a subset of the candidate set\footnote{This is simply the subset of approved candidates.}, one might consider various natural notions of distance between these two sets. A fixed notion of distance leads to a measure of suitability of a committee with respect to an election --- for instance, by considering the sum of distances to all voters, or the maximum distance incurred from any voter. For a committee $\XX$ and a vote $\SSS$, the notions of distance that are well-studied in the literature include the size of the symmetric difference (leading to the \textit{minisum} or \textit{minimax} rules)~\cite{brams2007minimax,legrand2007some,CaragiannisKM10,gramm2003fixed}, the size of the intersection (leading to the notion of \textit{approval score}), or the difference between $|\XX \cap \SSS|$ and $|\XX \setminus \SSS|$ (leading to the notion of \textit{net approval} scores)~\cite{procaccia2008complexity,SkowronF15,AzizBCEFW15}. 

\paragraph{Motivation} ~~The standard approach to the selection of a winning committee is to look for one that optimizes these scoring functions over the entire election --- that is, when the sum or max of scores are taken over all the votes. However, it is plausible that for some elections, there is a committee that represents a very good consensus when the scores are taken over a \textit{subset} of the voters rather than the entire set. For example, consider the approval voting rule mentioned above. The approval score of a committee $\XX$ is the sum of $|\XX \cap \SSS|$, taken over all votes $\SSS$ in the electorate. Let the \textit{approval score per vote} be the ratio of the approval score of $\XX$ to $n$. For a committee of size $k$, the best approval score per vote possible is $k$.  Clearly, the higher the approval score per vote, the greater a voter is `satisfied' on average. Now, consider the following toy example. Let $\LL \subseteq [m]$ be an arbitrary but fixed subset of $k$ candidates. Suppose the election has $(n-1)$ votes that are all exactly equal to $\LL$, and one vote that approves $[m]\setminus \LL$. The best approval score per vote that one can hope for here is $k(1 - 1/n)$, whereas if we restrict our attention to the election on the first $(n-1)$ votes alone, then $\LL$'s approval score per vote is $k$, which is now the best possible. Of course, the difference here is not significant for large $n$, but the illustration does provoke the following thought: is there a subset of at least $n^*$ votes that admit a committee whose approval score per vote is at least $k^*$? For $n^* = n$, we are back to the original question of finding the best committee. This more general setting, however, allows us to explore trade-offs and structure in the election: if there is a committee that prescribes a very good consensus over a large fraction of the election, then it is likely to be a more suitable choice for the community compared to a committee that optimizes the same score function over the entire election. While these committees may coincide (as in the toy example), there are easy examples where they would not, making this a question worth exploring.

The notion of finding a good consensus over a large subset has been explored in various contexts. In particular, the idea of using \textit{outliers} is quite common in the the literature of \textit{Closest String} problems. The setting of closest string involves a collection of $n$ strings, and the goal is to find a single string that minimizes either the maximum distance, or the sum of distances, from all the input strings. The most commonly studied notion of distance is the Hamming distance. Notice that once we interpret votes as binary strings, we are asking a very similar question, and the main distinction is that our search is only over strings that have a fixed number of ones. This similarity has been noted and explored in some works on voting before (see, for instance,~\cite{BS14}). In the context of Closest String, a question that is often asked is the following: given a budget $k$, is there a string that is ``close to'' at least $k$ strings? This question has been studied for both the minimax and minisum notions of closeness~\cite{BLLPW13,LKLB14}. Typically, the strings that are left out are called \textit{outliers}. In the context of voting, one imagines that there may be a few votes that express rather tangential opinions, and that a good consensus emerges once they are removed. We will also refer to such votes as outliers. In the context of social choice theory, the very notion of Young voting rule can be regarded as to finding minimum number of outliers whose removal makes some candidate Condorcet winner~\cite{young1978consistent,bartholdi1989voting,rothe2003exact,betzler2010parameterized}.

\paragraph{Our Framework} ~~One of the advantages of using scoring rules for approval ballots described above is that the winning committees are polynomially computable for most of them (with the notable exception of the minimax voting rule). However, once we pose the question of whether a target score $t$ is achievable after the removal of at most $k$ outliers, the complexity landscape changes dramatically. We show that this question is a computationally hard problem --- in particular, we establish \NP{}-hardness. Having shown hardness in the classical setting, we explore the complexity further, primarily from the perspective of exact exponential algorithms and approximation algorithms. In the context of the former, we use the framework of parameterized complexity. Briefly, a parameterized problem instance comprises of an instance~$x$ in the usual sense, and a parameter~$k$.  A problem with parameter~$k$ is called \emph{fixed parameter tractable} (\FPT{}) if it is solvable in time~$f(k) \cdot p(|x|)$, where~$f$ is an arbitrary function of~$k$ and~$p$ is a polynomial in the input size~$|x|$. While there have been important examples of traditional algorithms that have been analyzed in this fashion, the theoretical foundations for deliberate design of such algorithms, and a complementary complexity-theoretic framework for hardness, were developed in the late nineties~\cite{DF88,DF95a,ADF95,DF95b}. Just as \NP{}-hardness is used as evidence that a problem probably is not polynomial time solvable, there exists a hierarchy of complexity classes above \FPT{}, and showing that a parameterized problem is hard for one of these classes is considered evidence that the problem is unlikely to be fixed-parameter tractable. Indeed, assuming the Exponential Time Hypothesis, a problem hard for~\WO{} does not belong to \FPT{}~\cite{DowneyF99}.  The main classes in this hierarchy are:
$$\FPT{}  \subseteq \W{}[1] \subseteq \W{}[2] \subseteq \cdots \subseteq \W{}[P] \subseteq \XP,$$

where a parameterized problem belongs to the class~$XP$ if there exists an algorithm for it with running time bounded by~$n^{g(k)}$. A parameterized problem is said to be \PARANPC{} if it is \NPC{} even for constant values of the parameter. A classic example of a \PARANPC{} problem is graph coloring parameterized by the number of colors --- recall that it is \NPC{} to determine if a graph can be properly colored with three colors. Observe that a \PARANPC{} problem does not belong to~$XP$ unless~$\Pshort{} = \NP{}$. We are now ready to describe our contributions in greater detail. 

\subsection{Our Contributions}
% \paragraph{Our Contributions.} ~~
We consider three standard approval scoring mechanisms, namely the \textit{minisum, approval, and net approval scores}. The last two scores, as originally defined in the literature, are designed to simply give us the total amount of approval that a committee incurs from all the voters. The scores themselves are therefore non-decreasing functions of $n$, and as such, the question of outliers is not interesting if we use the scores directly (in particular, it is impossible to improve these scores by removing votes). Therefore, we consider the dual scoring system that complements the original --- namely, we score a committee based on the amount \textit{disapproval} that it incurs from all the votes, and seek to minimize the \textit{total disapproval}. Typically, for any notion of approval, there are either one or two natural complementary notions of disapproval that present themselves (discussed in greater detail below). This formulation is consistent with the idea that we want our scores to capture ``distance'' rather than closeness. We note that in terms of scores, these rules are equivalent to the original, but choosing to ask the minimization question allows us to formulate the problem of finding the best committee in the presence of outliers.  

\begin{remark}
One might also consider the approval score \textit{per vote} instead of the total score, as described earlier. However, we chose to use the notion of disapproval because of its consistency with the other distance-based rules (like minisum and minimax). All the variations are equivalent as scoring functions, and we note that our choice is only a matter of exposition.
\end{remark}

% Please add the following required packages to your document preamble:
% \usepackage{multirow}
\begin{table}[t]
\shortversion{\renewcommand{\arraystretch}{1.8}}
\centering
\resizebox{\linewidth}{!}{
\begin{tabular}{|c|c|c|c|}
\hline
\multicolumn{2}{|c|}{{\bf{\shortversion{\Large} Measures of approval}}}                                             & \multicolumn{2}{c|}{{\bf{\shortversion{\Large} Measures of disapproval}}}                                                              \\ \hline
{\shortversion{\Large} Minisum }                      & \multicolumn{2}{c|}{{\shortversion{\Large}$|\XX \Delta \SSS|$}                                                                     } & {\shortversion{\Large} Minisum }                         \\ \hline
\multirow{2}{*}{{\shortversion{\Large}Approval}}              & \multirow{2}{*}{{\shortversion{\Large}$|\XX \cap \SSS|$}}                  & {\shortversion{\Large}$|\XX \setminus \SSS|$}                                       & \multirow{2}{*}{{\shortversion{\Large}Disapproval}}              \\ \cline{3-3}
                                       &                                                & {\shortversion{\Large}$|\SSS \setminus \XX|$}                                       &                                           \\ \hline
{\shortversion{\Large}Net Approval}                           & {\shortversion{\Large}$|\XX \cap \SSS| - |\XX \setminus \SSS|$}                 & {\shortversion{\Large}$|\XX \setminus \SSS| - |\XX \cap \SSS|$}                          & {\shortversion{\Large}Net Disapproval                          } \\ \hline
\end{tabular}
}
\caption{ The approval-based rules that are considered in this paper. Here, $\XX$ is a candidate committee and $\SSS$ is a vote, and the table illustrates the possible scores that $\XX$ can incur from $\SSS$. The total score of $\XX$ will be the sum of these scores taken over $\SSS$. }
\label{table:approvalscores}
\end{table}

%\multirow{2}{*}{Proportional Approval} & \multirow{2}{*}{$\sum_{i=1}^{|\XX \cap \SSS|} 1/i$} & $\sum_{i=1}^{|\XX \setminus \SSS|} 1/i$                      & \multirow{2}{*}{Proportional Disapproval} \\ \cline{3-3}
%                                       &                                                &$\sum_{i=1}^{|\SSS \setminus \XX|} 1/i$ &                                           \\ \hline

In~Table\nobreakspace \ref {table:approvalscores}, we summarize the notions of distances between a committee and a vote. Each of these notions naturally gives rise to a score-based voting rule. Formally, for any distance function $s: 2^{\CC} \times 2^{\CC} \rightarrow \mathbb{N}$ between two subsets of candidates, we overload notation and define the corresponding score function $s: 2^{\CC} \times \VV \rightarrow \mathbb{N}$ for a set of votes $\VV = \{\SSS_1, \ldots, \SSS_n\}$ as follows:
\[ s(\XX,\VV) := \sum_{i=1}^n s(\XX,\SSS_i) \]
For the \textit{winner determination} problem, the goal is to find a committee $\XX$ of size $m^*$ that minimizes $s(\XX,\VV)$. For all the scoring rules in~Table\nobreakspace \ref {table:approvalscores}, a winning committee of size $m^*$ can be found in polynomial time for any $m^* \leq m$. We refer the reader to Table\nobreakspace \ref {tbl:notations} for an overview of the notation we use in this paper. We are now ready to define the problem of winner determination for a scoring rule $s$ in the presence of outliers:

\defproblem{$s$-\textsc{Outliers}}{A set of votes $\mathcal{V}=\{\mathcal{S}_1, \dots, \mathcal{S}_n\}$ over a set of candidates $\mathcal{C}=\{c_1, \dots, c_m\}$, a committee size $m^*$, a target number of non-outliers $n^*$, and a target score $t$.}{Does there exist a committee $\mathcal{C}^*\subset \mathcal{C}$ and a set of non-outliers $\mathcal{V}^*\subset \mathcal{V}$ such that $|\mathcal{V}^*|\ge n^*$, $|\mathcal{C}^*|=m^*$, and $s(\mathcal{V}^*, \mathcal{C}^*) \le t$?}

\begin{remark}
We will focus on only one variant of disapproval for the approval scoring rule, namely the one given by the top row in~Table\nobreakspace \ref {table:approvalscores}. The other variation is symmetric and our results hold in the exact same fashion, as explained in Lemma~\ref{lem:translate-disapprov} (in Section~\ref{sec:conclusions}).	
\end{remark}

We first show that the $s$-\textsc{Outliers} problem is \NPC{} for all the scoring rules considered here, even in the special case when every vote approves exactly three candidates \textit{and} every candidate is approved by exactly three votes. This also establishes the \PARANP{}-hardness of the problem with respect to the (combined) parameters \textit{(maximum) number of candidates approved by any vote} and \textit{(maximum) number of votes that approve a candidate}. 

\begin{restatable}{theorem}{NPCtheorem}
\label{thm:npc}
Let $s \in \{$Minisum, \Disapprov{}, \Netdisapprov{}$\}$. The $s$-\textsc{Outliers} problem is \NPC even if every vote approves exactly $3$ candidates and every candidate is approved by exactly $3$ votes.
\end{restatable}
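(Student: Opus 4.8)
The plan is to first observe that $s$-\textsc{Outliers} lies in \NP{}: a committee $\CC^*$ together with a set $\VV^*$ of non-outliers is a polynomially-sized certificate, and checking $|\CC^*|=m^*$, $|\VV^*|\ge n^*$ and $s(\VV^*,\CC^*)\le t$ takes polynomial time. For hardness I would reduce from a suitable \NP{}-complete problem that already carries the required ``every set has size three, every element occurs in exactly three sets'' structure, so that identifying sets with votes and elements with candidates (or vice versa) makes the structural restriction in the theorem hold automatically in the produced instances.

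The key uniformity is that, once $|\SSS|=3$ for every vote $\SSS$ and the committee $\CC^*$ has the prescribed size $m^*$, each of the three disapproval scores is an affine function of $|\CC^*\cap\SSS|$ with negative leading coefficient:
\[
\begin{aligned}
|\CC^*\,\Delta\,\SSS| &= m^*+3-2\,|\CC^*\cap\SSS|,\\
|\CC^*\setminus\SSS| &= m^*-|\CC^*\cap\SSS|,\\
|\CC^*\setminus\SSS|-|\CC^*\cap\SSS| &= m^*-2\,|\CC^*\cap\SSS|.
\end{aligned}
\]
Hence, for a candidate solution with $|\CC^*|=m^*$ and $|\VV^*|=n^*$, minimising $s(\VV^*,\CC^*)$ is the same as maximising $\sum_{\SSS\in\VV^*}|\CC^*\cap\SSS|$, and ``$s(\VV^*,\CC^*)\le t$?'' turns into ``$\sum_{\SSS\in\VV^*}|\CC^*\cap\SSS|\ge T$?'' for a threshold $T$ read off from $t,m^*,n^*$ and the rule in play. (Taking $m^*\ge 6$ makes every summand non-negative for all three rules, so restricting to $|\VV^*|=n^*$ is without loss of generality, and a reduction can ensure $m^*\ge 6$.) So it suffices to prove \NP{}-hardness, over the $3$-regular $3$-uniform vote structure, of the single combinatorial problem: given $m^*,n^*,T$, is there a size-$m^*$ committee $\CC^*$ and a set $\VV^*$ of $n^*$ votes with $\sum_{\SSS\in\VV^*}|\CC^*\cap\SSS|\ge T$?

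Concretely, I would let $m^*$ and $n^*$ be the parameters of the source instance and set $T=3n^*$ --- the largest value attainable, since each chosen vote contributes at most $|\SSS|=3$. The force of this choice is that $\sum_{\SSS\in\VV^*}|\CC^*\cap\SSS|=3n^*$ holds exactly when $\SSS\subseteq\CC^*$ for every $\SSS\in\VV^*$; equivalently, exactly when there exist $n^*$ votes whose union has size at most $m^*$, i.e.\ $m^*$ candidates jointly ``containing'' at least $n^*$ votes. This dense-substructure condition is precisely what the source instance should encode, so correctness then reduces to routine bookkeeping: a \textsc{Yes}-instance of the source problem yields such a committee together with such a vote set, and conversely any solution reaching $3n^*$ produces a \textsc{Yes}-instance; finally $T=3n^*$ is converted into a numeric target $t$ for each of \textsc{Minisum}, \Disapprov{} and \Netdisapprov{} via the three identities above. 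Since the produced instances have both occurrence parameters equal to the constant $3$, the claimed \PARANP{}-hardness with respect to those two parameters falls out as well.

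The real difficulty is twofold: (i) choosing a source problem that remains \NP{}-hard on genuinely $3$-regular, $3$-uniform inputs, and (ii) not breaking this promise during the reduction. A natural source is a densest-subhypergraph problem on $3$-uniform $3$-regular hypergraphs (whose incidence graph is $3$-regular bipartite), which can in turn be obtained from more standard \NP{}-hard problems --- \textsc{Clique}, or densest-subgraph / bisection-type problems on cubic graphs --- but the usual incidence constructions yield hypergraphs that are $3$-uniform but not $3$-regular (some elements occur in fewer than three sets), or the other way around. One therefore has to attach padding candidates and votes forming their own self-contained $3$-regular $3$-uniform block so as to restore the exact ``$3$ and $3$'' structure, and then argue that this padding can never be exploited to fabricate a spurious high-intersection solution. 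Carrying out that non-interference argument, and settling on the exact hard problem to start from, is where the work lies; the \NP{} membership, the affine reformulation, and the threshold arithmetic are all routine.
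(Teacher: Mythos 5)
Your setup is sound --- membership in \NP{}, the observation that for $3$-uniform votes and a fixed committee size all three scores are decreasing affine functions of $|\CC^*\cap\SSS|$, and the choice of the extremal target (equivalently, the paper's targets $t=(n-k)(k-3)$ for Minisum/\Disapprov{} and $t=(n-k)(k-6)$ for \Netdisapprov{} are exactly the ``perfect containment'' values) all match the paper's intent. But the proposal stops precisely where the theorem's content lies: you reduce everything to the question ``is there a size-$m^*$ committee containing $n^*$ of the votes, over a $3$-regular $3$-uniform incidence structure?'' and then explicitly defer both the choice of an \NP{}-hard source problem with that regularity and the padding/non-interference argument you foresee needing. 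That deferred step is the actual proof, so as it stands there is a genuine gap: no concrete reduction is exhibited, and ``densest subhypergraph on $3$-regular $3$-uniform hypergraphs plus padding'' is not an established hard problem you can cite.

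The paper closes this gap without any padding by not using an edge--vertex incidence construction at all: it reduces from \textsc{Vertex Cover} on $3$-regular graphs (\NPC{} by Garey--Johnson), introducing one candidate \emph{and} one vote per vertex, with vote $\SSS_i=\{c_j:(u_i,u_j)\in E\}$, i.e.\ each vote approves the open neighborhood of its vertex. Cubic regularity of the graph then yields ``every vote approves exactly $3$ candidates and every candidate is approved by exactly $3$ votes'' automatically. With $m^*=k$, $n^*=n-k$ and the perfect-containment target, a vertex cover $W$ of size $k$ gives the committee $\{c_i:u_i\in W\}$ and non-outliers $\{\SSS_i:u_i\notin W\}$ (every non-cover vertex has its whole neighborhood in $W$, so $\SSS_i\subseteq\CC^*$); conversely, a solution forces $N(u_i)\subseteq\{u_j:c_j\in\CC^*\}$ for every non-outlier vertex $u_i$, and the paper then needs a swapping argument to convert the chosen candidate-vertices into a genuine vertex cover, because the committee vertices and the non-outlier vertices need not be disjoint. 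That last point is worth internalizing: even with the ``right'' source problem in hand, the reverse direction is not pure bookkeeping --- the committee-as-vertex-set and the non-outliers-as-vertex-set interact, which is exactly the kind of non-interference issue you anticipated but did not resolve.
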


To initiate the parameterized study of the $s$-\textsc{Outliers} problem, we propose the following parameters: the size of the committee ($m^*$), the number of candidates \textit{not} in the committee ($\overline{m}$), the number of non-outliers ($n^*$), the number of outliers ($\overline{n}$), and the target score $t$ (which one might think of as the ``solution size'' or the \textit{standard parameter}).

\begin{table}[h]
\centering
  \begin{center}
    \resizebox{\linewidth}{!}{
	{\renewcommand{\arraystretch}{1.8}
	\begin{tabular}{|cc||cc|}\hline
	  {\shortversion{\Large} Number of votes}		& {\shortversion{\Large} $n$}	& {\shortversion{\Large} Set of candidates}		& {\shortversion{\Large} $\mathcal{C}$}				\\\hline
	  {\shortversion{\Large} Number of candidates} 	& {\shortversion{\Large} $m$} 		& {\shortversion{\Large} Committee chosen}		& {\shortversion{\Large} $\mathcal{C}^*$}			\\\hline 
	  {\shortversion{\Large} Number of outliers}	& {\shortversion{\Large} $\overline{n}$}& {\shortversion{\Large} Non-committee}	& {\shortversion{\Large} $\overline{\mathcal{C}}(=\mathcal{C}\setminus\mathcal{C}^*)$}\\\hline
	  {\shortversion{\Large} Number of non-outliers}& {\shortversion{\Large} $n^* (=n-\overline{n})$}& {\shortversion{\Large} Set of votes}		& {\shortversion{\Large} $\mathcal{V}$}	\\\hline 
	  {\shortversion{\Large} Size of committee}		& {\shortversion{\Large} $m^*$}	&{\shortversion{\Large} Set of non-outliers}	& {\shortversion{\Large} $\mathcal{V}^*$}	 		\\\hline
	  {\shortversion{\Large} Size of non-committee}& {\shortversion{\Large} $\overline{m} (=m-m^*)$}& {\shortversion{\Large} Set of outliers}		& {\shortversion{\Large} $\overline{\mathcal{V}} (=\mathcal{V} \setminus \mathcal{V}^*)$}	\\\hline
	  {\shortversion{\Large} Score of the committee} 	& {\shortversion{\Large} $t$}	& {\shortversion{\Large} Hamming distance} & {\shortversion{\Large} $h(\cdot)$}	\\\hline	
	\end{tabular}
	}
   }
    \caption{\normalfont Notation table.}\label{tbl:notations}
  \end{center}
\end{table} 

Our main results are the following. For any subset of these parameters, we establish if the $s$-\textsc{Outliers} problem is \FPT{} or \WOH{} when $s \in \{\mbox{Minisum, Net Approval}\}$, and for the Disapproval voting rule, we have a classification for all cases but one. Specifically, for the minisum voting rule, we establish the following dichotomy.

\begin{restatable}{theorem}{DichotomyTheorem}
\label{thm:main1}	
	Let ${\mathcal P} = \{m^*, \overline{m}, n^*, \overline{n}, t\}$, and ${\mathcal Q} \subseteq {\mathcal P}$. The \textsc{Minisum-outliers} problem parameterized by ${\mathcal Q}$ is \FPT{} if ${\mathcal Q}$ contains either $\{m^*,\overline{m}\}$, $\{n^*,\overline{n}\}$, or $\{\overline{n}, t\}$, and is \WOH{} otherwise.
\end{restatable}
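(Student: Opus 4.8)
The plan is to treat the two directions separately; on the hardness side the key simplification will be to reduce all cases to two reductions using a complementation duality and a monotonicity observation.

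\textbf{\FPT{} directions.}
It suffices to handle the three minimal sufficient sets. If $\QQ\supseteq\{m^*,\overline{m}\}$ then $m=m^*+\overline{m}$ is bounded by the parameter, so I would enumerate all $\binom{m}{m^*}$ committees $\mathcal{C}^*$; for a fixed $\mathcal{C}^*$ the optimal choice of retained votes is the $n^*$ votes $\mathcal{S}$ with smallest $|\mathcal{C}^*\Delta\mathcal{S}|$ (retaining more than $n^*$ votes or a different $n^*$-subset never decreases the sum, since all terms are nonnegative), and one then checks whether this sum is at most $t$. If $\QQ\supseteq\{n^*,\overline{n}\}$ then $n=n^*+\overline{n}$ is bounded, so I would enumerate all $\binom{n}{\overline{n}}$ outlier sets; for a fixed retained set $\mathcal{V}^*$, writing $a_c=|\{\mathcal{S}\in\mathcal{V}^*:c\in\mathcal{S}\}|$, one has
\[
\sum_{\mathcal{S}\in\mathcal{V}^*}|\mathcal{C}^*\Delta\mathcal{S}| \;=\; m^*n^* \;+\; \sum_{c\in\mathcal{C}}a_c \;-\; 2\sum_{c\in\mathcal{C}^*}a_c ,
\]
so an optimal committee of size $m^*$ consists of the $m^*$ candidates with the largest values $a_c$, which is computable in polynomial time; comparing the resulting score with $t$ completes this case. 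If $\QQ\supseteq\{\overline{n},t\}$, I would split on $n^*$: when $n^*>t$, any solution has at most $t$ retained votes at positive distance from $\mathcal{C}^*$, so some retained vote equals $\mathcal{C}^*$, and it is enough to try each input vote of cardinality $m^*$ as the committee and greedily retain its $n^*$ nearest votes --- a polynomial-time test; when $n^*\le t$, the quantity $n=n^*+\overline{n}$ is bounded by the parameter and I would invoke the algorithm of the previous case.

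\textbf{Reducing the hardness side to two reductions.}
A short case check shows that every $\QQ\subseteq\PP$ containing none of $\{m^*,\overline{m}\}$, $\{n^*,\overline{n}\}$, $\{\overline{n},t\}$ is contained in one of the four sets $\{m^*,n^*,t\}$, $\{\overline{m},n^*,t\}$, $\{m^*,\overline{n}\}$, $\{\overline{m},\overline{n}\}$. Since an \FPT{} reduction that keeps a parameter set bounded also keeps every subset of it bounded, if \textsc{Minisum-outliers} is \WOH{} with respect to a set of parameters then it is \WOH{} with respect to every subset of that set; hence it is enough to prove hardness for these four sets. Moreover \textsc{Minisum-outliers} is self-dual under complementation: replacing every vote $\mathcal{S}$ by $\mathcal{C}\setminus\mathcal{S}$ and the committee by its complement preserves every value $|\mathcal{C}^*\Delta\mathcal{S}|$ while swapping $m^*\leftrightarrow\overline{m}$ and fixing $n^*,\overline{n},t$; so hardness for $\{m^*,n^*,t\}$ yields hardness for $\{\overline{m},n^*,t\}$, and hardness for $\{m^*,\overline{n}\}$ yields hardness for $\{\overline{m},\overline{n}\}$. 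Only two reductions remain.

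\textbf{The two reductions, and the main obstacle.}
For $\{m^*,n^*,t\}$ I would reduce from (\textsc{Multicolored}) \textsc{Clique} (equivalently \textsc{Independent Set}) with solution size $\ell$: the committee, of size $m^*=f(\ell)$, encodes the purported solution, while a bounded number $n^*$ of verification votes --- built from closed neighborhoods and padded with private dummy candidates to a common cardinality --- force the total minisum score to equal the target $t=g(\ell)$ exactly when the selected vertices form a clique (respectively an independent set). For $\{m^*,\overline{n}\}$ I would reduce from \textsc{Clique} with a split encoding: the $\overline{n}=\ell$ discarded votes select the $\ell$ purported clique vertices while the committee, of size $m^*=\binom{\ell}{2}$, selects the $\binom{\ell}{2}$ purported clique edges, with incidence votes forcing the minisum value to reach its target iff the discarded vertices and the committee edges are mutually consistent, i.e.\ form an $\ell$-clique. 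Combining the two reductions with complementation duality and subset-inheritance then covers precisely the non-\FPT{} parameter sets, giving the dichotomy. The hard part will be engineering both gadgets so that all of the declared parameters stay bounded by functions of $\ell$ \emph{and} neither the committee-size budget nor the outlier budget can be profitably diverted onto auxiliary candidates or votes; the standard remedies --- flooding the instance with forced votes that make any such diversion strictly worse, and padding vote cardinalities with private dummy candidates so that the Hamming-distance bookkeeping is insensitive to vertex degrees --- are what I would use to keep this under control.
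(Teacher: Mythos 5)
Your \FPT{} half is complete and correct, and it matches the paper's argument (enumerate committees when $\{m^*,\overline{m}\}\subseteq\QQ$, enumerate outlier sets and pick the $m^*$ most-approved candidates when $\{n^*,\overline{n}\}\subseteq\QQ$, and for $\{\overline{n},t\}$ split on whether $t<n^*$, in which case some retained vote must coincide with the committee). Your reduction of the hardness side to the four maximal sets $\{m^*,n^*,t\}$, $\{\overline{m},n^*,t\}$, $\{m^*,\overline{n}\}$, $\{\overline{m},\overline{n}\}$ is the same case analysis the paper performs, and your complementation duality (replace each $\SSS$ by $\CC\setminus\SSS$; symmetric difference is invariant under complementing both arguments, so the minisum score is preserved while $m^*\leftrightarrow\overline{m}$) is sound and is a genuine simplification --- the paper instead gives four separate reductions, which are in fact pairwise related exactly by this duality.

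The genuine gap is that the two remaining \WO{}-hardness reductions, which are the entire technical content of the ``otherwise \WOH{}'' half, are never carried out: you explicitly defer ``engineering both gadgets'' and the padding/flooding needed to make them work. For $\{m^*,\overline{n}\}$ your sketch (outliers $=$ clique vertices, committee $=$ the ${k\choose 2}$ clique edges, votes indexed by vertices) is essentially the paper's Lemma on $(\overline{n}+m^*)$ and would go through, but for $\{m^*,n^*,t\}$ the gadget you gesture at --- ``verification votes built from closed neighborhoods, padded with private dummy candidates to a common cardinality'' --- is not a workable plan as stated: with neighborhood votes it is unclear how a bounded number $n^*$ of retained votes can certify adjacency of \emph{all} ${k\choose 2}$ pairs of selected vertices, and private dummy candidates approved by single votes interact with the committee budget $m^*$ (which must stay bounded) in a way your sketch does not control. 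The construction that actually works, and that the paper uses, needs no padding at all: one vote per edge approving exactly its two endpoints, $\CC$ indexed by vertices, $m^*=k$, $n^*={k\choose 2}$, $t=(k-2){k\choose 2}$; then every retained vote has distance at least $k-2$ from any size-$k$ committee, with equality iff both endpoints lie in the committee, so the target is met iff $G$ has a $k$-clique. Without this (or an equivalent) explicit construction and its correctness argument, the hardness direction of the dichotomy remains unproved.
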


For the net disapproval voting rule, we establish the following dichotomy.

\begin{restatable}{theorem}{DichotomyTheoremNetApproval}
\label{thm:main2}	
	Let ${\mathcal P} = \{m^*, \overline{m}, n^*, \overline{n}, t\}$, and ${\mathcal Q} \subseteq {\mathcal P}$. The \textsc{\Netdisapprov{}-outliers} problem parameterized by ${\mathcal Q}$ is \FPT{} if ${\mathcal Q}$ contains either $\{m^*,\overline{m}\}$ or $\{n^*,\overline{n}\}$, and is \WOH{} otherwise.
\end{restatable}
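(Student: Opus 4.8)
Note that $m=m^*+\overline{m}$ and $n=n^*+\overline{n}$, so ``${\mathcal Q}$ contains $\{m^*,\overline{m}\}$'' is the same as ``$m$ is a parameter'', and similarly for $\{n^*,\overline{n}\}$ and $n$. If $m$ is bounded, iterate over all $\binom{m}{m^*}\le 2^m$ committees $\mathcal{C}^*$; for a fixed $\mathcal{C}^*$ each vote $\mathcal{S}$ has a fixed net‑disapproval $d_{\mathcal{S}}=|\mathcal{C}^*\setminus\mathcal{S}|-|\mathcal{C}^*\cap\mathcal{S}|$, and the best set of at least $n^*$ non‑outliers is found greedily: take every vote with $d_{\mathcal{S}}\le 0$ and, if this is still fewer than $n^*$ votes, pad up to $n^*$ with the smallest remaining $d_{\mathcal{S}}$. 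Symmetrically, if $n$ is bounded, iterate over all $\le 2^n$ choices of $\mathcal{V}^*$ with $|\mathcal{V}^*|\ge n^*$; for fixed $\mathcal{V}^*$ the net‑disapproval equals $\sum_{c\in\mathcal{C}^*}\bigl(|\mathcal{V}^*|-2\,\mathrm{app}_{\mathcal{V}^*}(c)\bigr)$, where $\mathrm{app}_{\mathcal{V}^*}(c)$ is the number of votes of $\mathcal{V}^*$ approving $c$, so this objective is additive over the chosen candidates and the optimal committee consists of the $m^*$ candidates of largest approval inside $\mathcal{V}^*$. Both routines clearly run in \FPT{} time.

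\textbf{The hardness directions.} Since a problem that is \WOH{} parameterized by a set $S$ of parameters remains \WOH{} parameterized by any $S'\subseteq S$, it suffices to prove \WOH{} for the four maximal parameter sets that contain neither $\{m^*,\overline{m}\}$ nor $\{n^*,\overline{n}\}$, namely $\{m^*,n^*,t\}$, $\{\overline{m},n^*,t\}$, $\{m^*,\overline{n},t\}$ and $\{\overline{m},\overline{n},t\}$. I would reduce from \textsc{Clique}, which is \WOH{} parameterized by the clique size $k$. For $\{m^*,n^*,t\}$ the reduction is direct: given a graph $G=(V,E)$ with $|E|=M\ge\binom{k}{2}$ (the complementary case is a trivial no‑instance) and $k\ge 5$ (WLOG), let the candidate set be $V$, create for every edge $\{u,v\}\in E$ a single vote approving exactly $\{u,v\}$, and set $m^*=k$, $n^*=\binom{k}{2}$, $t=\binom{k}{2}(k-4)$. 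Such a vote contributes net‑disapproval $k-2|\mathcal{C}^*\cap\{u,v\}|\in\{k,k-2,k-4\}$ to a size‑$k$ committee, so every contribution is positive, the cheapest contributions come from edges internal to $\mathcal{C}^*$, and the minimum total over any $\binom{k}{2}$ non‑outliers is at least $\binom{k}{2}(k-2)-2\ell$ where $\ell\le\binom{k}{2}$ is the number of edges $\mathcal{C}^*$ spans; this is $\le t$ exactly when $\ell=\binom{k}{2}$, i.e.\ when $\mathcal{C}^*$ is a $k$‑clique. All three parameters are polynomial in $k$, so this is a valid parameterized reduction.

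\textbf{The main obstacle.} In the remaining three cases one of the ``dual'' parameters is small: either $\overline{n}$ is a parameter (almost all votes must be kept) or $\overline{m}$ is a parameter (the committee omits only boundedly many candidates, so $m^*$ is large). In either situation the target $t$ is also a parameter, yet it is a sum over $\Theta(n)$ votes and/or of per‑vote scores ranging over $[-m^*,m^*]$ with $m^*$ unbounded, so a bounded target forces near‑perfect cancellation. This is precisely the phenomenon that makes \Netdisapprov{} strictly harder than \textsc{Minisum}, for which $\{\overline{n},t\}$ already yields \FPT{} (Theorem~\ref{thm:main1}), and it is the technical heart of the proof. The plan is to enlarge the construction with a block of \emph{filler} candidates and a supply of ``balanced'' votes, each approving exactly half of the filler block (and, more generally, exactly half of every committee under consideration) and hence contributing net‑disapproval $0$; this lets the committee be padded to the required large size while keeping the net‑disapproval of a committee determined solely by the $O(\mathrm{poly}(k))$ edge‑encoding votes, so that $t$ and the remaining relevant parameters stay bounded and the equivalence with the existence of a $k$‑clique is preserved. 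A convenient shortcut for the two cases with $n^*$ a parameter is the per‑vote identity ``net‑disapproval $=$ minisum $-\,|\mathcal{S}|$'': any \textsc{Minisum} hardness reduction that uses only equal‑cardinality votes transfers verbatim, which (together with Reduction~A) disposes of $\{m^*,n^*,t\}$ and $\{\overline{m},n^*,t\}$ and leaves only the genuinely new ``few‑outlier'' cases $\{m^*,\overline{n},t\}$ and $\{\overline{m},\overline{n},t\}$, handled by the filler gadget above.
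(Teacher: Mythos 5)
Your \FPT{} argument and your reduction for the parameter set $\{m^*,n^*,t\}$ are correct and essentially identical to the paper's (guess the committee or the non-outlier set; and the edge-votes/vertex-candidates reduction from \textsc{Clique} with $t=\binom{k}{2}(k-4)$ is exactly Lemma~\ref{lem:hard_tkno} for \netdisapprov{}). However, the remaining three maximal cases are not actually proved. First, the ``transfers verbatim'' shortcut for $\{\overline{m},n^*,t\}$ is unsound as stated: the identity net-disapproval $=$ minisum $-|\mathcal{S}|$ shifts each vote's score by its cardinality, and in the natural \textsc{Minisum} reduction for this parameterization (votes are complements of edges, so $|\mathcal{S}_i|=n-2$ while the per-vote minisum score is about $k-2$) every per-vote net score becomes \emph{negative}. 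Since the problem only requires $|\mathcal{V}^*|\ge n^*$, a solution may take many more than $n^*$ non-outliers, each contributing a negative amount, and beat the translated threshold on instances with no $k$-clique (e.g.\ any graph with $\ge 2\binom{k}{2}$ edges and $n\ge k+8$ vertices); so the correspondence between the two thresholds breaks unless you additionally force per-vote net scores to be nonnegative, which you do not do. A fix requires padding the construction (dummy candidates approved by every vote, as the paper does in Lemma~\ref{lem:hard_o}) so that extra non-outliers cannot help.

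Second, the two genuinely new cases $\{m^*,\overline{n},t\}$ and $\{\overline{m},\overline{n},t\}$ --- which you rightly identify as the technical heart --- are left as a plan rather than a proof. The ``balanced filler'' idea of votes ``approving exactly half of every committee under consideration'' cannot be arranged a priori, since the committee is not known in advance; the whole difficulty is to force the filler block into the committee and to make the cancellation exact precisely when a clique exists. The paper does this concretely: for $\{m^*,\overline{n},t\}$ (Lemma~\ref{thm:wh_tko_net}) it adds a dummy block $D$ of $\binom{k}{2}$ candidates together with $n+2k$ votes equal to $D$, sets $m^*=2\binom{k}{2}$, $\overline{n}=k$, $t=0$, and argues by counting approvals that $D\subseteq\mathcal{C}^*$ in any solution, after which the totals of approvals and disapprovals cancel exactly iff the non-$D$ part of the committee is the edge set of a clique on the $k$ outlier vertices; for $\{\overline{m},\overline{n},t\}$ (Lemma~\ref{lem:hard_o}) it uses vertex-votes over edge-candidates padded with $m-2d$ universally approved dummy candidates to again reach $t=0$. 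Without constructions of this kind, together with verification that all parameters in the set (including $t$) are bounded in $k$ and a reverse-direction argument, your proposal does not establish the hardness half of the dichotomy.
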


Further, for the disapproval voting rule, we have the following theorem that classifies all cases but one.

\begin{restatable}{theorem}{NearDichotomyTheorem}
\label{thm:main3}	
	Let  ${\mathcal P} = \{m^*, \overline{m}, n^*, \overline{n}, t\}$, and ${\mathcal Q} \subseteq {\mathcal P}$. The \textsc{\Disapprov{}-Outliers} problem parameterized by ${\mathcal Q}$ is \FPT{} if ${\mathcal Q}$ contains either $\{m^*,\overline{m}\}$ or $\{n^*,\overline{n}\}$, and is \WOH{} for all other cases except for ${\mathcal Q} = \{ \overline{m}, \overline{n}, t \}$.
\end{restatable}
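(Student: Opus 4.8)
The plan is to resolve all $2^5$ choices of $\QQ$ via a small number of extremal cases: two brute-force arguments for the tractable side, and four reductions from \textsc{Clique} (or a close relative) for the hard side.

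For the tractable side, if $\QQ \supseteq \{m^*,\overline m\}$ then $m=m^*+\overline m$ is bounded, so we enumerate all $\binom{m}{m^*}$ committees $\CC^*$; for each, the disapproval $\sum_{\SSS\in\VV^*}|\CC^*\setminus\SSS|$ is additive over the votes, so keeping the $n^*$ votes with the smallest value of $|\CC^*\setminus\SSS|$ is optimal. Dually, if $\QQ\supseteq\{n^*,\overline n\}$ then $n=n^*+\overline n$ is bounded; we enumerate all $\binom{n}{\overline n}$ outlier sets, and for each the disapproval equals the sum, over committee members $c$, of the number of surviving votes that do not approve $c$, which is additive over the committee, so keeping the $m^*$ cheapest candidates is optimal. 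Both run in \FPT{} time, settling every $\QQ$ claimed tractable (these are the arguments already used for Theorem~\ref{thm:main2}).

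Since being \WOH{} is inherited by every subset of the parameters, for the hard side it suffices to treat the inclusion-maximal parameter sets that are asserted hard. A short check shows these are $\{m^*,n^*,t\}$, $\{m^*,\overline n,t\}$, $\{\overline m,n^*,t\}$, and $\{\overline m,\overline n\}$: every $\QQ$ that contains at most one of $m^*,\overline m$, at most one of $n^*,\overline n$, and is not $\{\overline m,\overline n,t\}$, is a subset of one of these four, with $\{\overline m,\overline n\}$ the unique maximal one not inside a $t$-bounded set --- and $\{\overline m,\overline n,t\}$ the unique set contained in none of them, hence the one left open. The first three reduce from $k$-\textsc{Clique} on $G=(V,E)$, changing only which of $V$, $E$ indexes candidates versus votes. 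For $\{m^*,n^*,t\}$: candidates $\{c_v : v\in V\}$, and each edge $\{u,v\}$ is a vote approving exactly $\{c_u,c_v\}$; set $m^*=k$, $n^*=\binom k2$, $t=\binom k2(k-2)$. Each non-outlier contributes at least $k-2$ against a size-$k$ committee, with equality iff it is an edge inside the committee, so the target is met iff the $k$ chosen candidates span $\binom k2$ edges (the case $k\le 2$ being trivial). For $\{m^*,\overline n,t\}$: candidates $\{c_e : e\in E\}$, and each vertex $v$ is a vote approving exactly the edges not incident to $v$; set $m^*=\binom k2$, $\overline n=k$, $t=0$, so disapproval $0$ forces all $\binom k2$ committee edges inside the $k$ outlier vertices. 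For $\{\overline m,n^*,t\}$: candidates $\{c_v : v\in V\}$, and each edge $\{u,v\}$ is a vote approving all of $\CC$ except $c_u,c_v$; set $\overline m=k$, $n^*=\binom k2$, $t=0$, so disapproval $0$ forces each of the $\binom k2$ chosen votes to have both endpoints among the $k$ non-committee vertices. In each case the converse direction is immediate from a clique, and a graph with too few edges is mapped to a trivial \NO{} instance.

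The remaining case, $\{\overline m,\overline n\}$, is the one I expect to be hardest. Here $t$ must stay unbounded --- bounding it would, by the same reduction, prove hardness of the open case $\{\overline m,\overline n,t\}$ --- and both $\overline m$ and $\overline n$ must grow, since fixing either to a constant makes the problem polynomial (for $\overline n$, by the polynomial-time winner-determination algorithm after guessing the outliers; for $\overline m$, by greedily choosing the outliers once the committee is guessed). Viewing an instance as a $0/1$ candidate--vote matrix, the question is to delete $\overline m$ columns and $\overline n$ rows so that at most $t$ zero-entries survive --- a \textsc{Partial Vertex Cover}-type problem on the bipartite graph of zero-entries, with the deletion budget pre-split between the two sides. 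The plan is to reduce from \textsc{Partial Vertex Cover} (\WOH{} parameterized by the cover size), first routing the instance onto a bipartite structure and then padding with all-ones rows and columns that are inert under deletion, so that the only effective deletions realize a partial cover of the prescribed split. The main obstacle --- and precisely what keeps $\{\overline m,\overline n,t\}$ open, since bounding $t$ seems to destroy such constructions without making the problem \FPT{} --- is forcing the correct two-sided split while keeping the target $t$ tied to a global quantity rather than to the clean value $0$ used above.
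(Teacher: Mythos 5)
Your case analysis (hardness is inherited by parameter subsets, so it suffices to handle the four inclusion-maximal sets $\{m^*,n^*,t\}$, $\{m^*,\overline{n},t\}$, $\{\overline{m},n^*,t\}$, $\{\overline{m},\overline{n}\}$), your two brute-force \FPT{} arguments, and your first three reductions are all correct and essentially coincide with the paper's proof: the edge-votes-over-vertex-candidates construction with $t=\binom{k}{2}(k-2)$ is the paper's Lemma~\ref{lem:hard_tkno}, the vertex-votes-approving-non-incident-edges construction with $t=0$ is Lemma~\ref{lem:hard_nbar_mstar}, and the edge-votes-approving-non-endpoints construction with $t=0$ is Lemma~\ref{lem:hard_mbar_nstar}.

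The genuine gap is the fourth maximal case, $\QQ=\{\overline{m},\overline{n}\}$: what you give there is a plan (reduce from \textsc{Partial Vertex Cover}, pad with all-ones rows/columns, force the row/column split), and you yourself flag the split-forcing and the choice of $t$ as unresolved obstacles, so the theorem is not proved as stated. The paper closes this case (Lemma~\ref{lem:hard_o}) with a direct parameterized reduction from \textsc{Clique} on $d$-\emph{regular} graphs that avoids your difficulties entirely: candidates are the edges, each vertex is a vote approving exactly its incident edges, $\overline{n}=k$, $\overline{m}=\binom{k}{2}$, and $t=(n-k)\bigl(m-\binom{k}{2}-d\bigr)$. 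Regularity is the key: every non-outlier vote has disapproval at least $m-\binom{k}{2}-d$ against any committee of size $m-\binom{k}{2}$, with equality for all non-outliers precisely when the $\binom{k}{2}$ deleted candidates are edges spanned by the $k$ outlier vertices, i.e., when the outliers form a clique. This ties $t$ to a uniform per-vote quantity (your ``global quantity'' concern) without any two-sided budget gadgetry, and since $t$ here is not bounded by a function of $k$, it proves hardness for $\{\overline{m},\overline{n}\}$ while leaving $\{\overline{m},\overline{n},t\}$ open, exactly as the theorem requires. To complete your proof you would need to carry out (or replace) this one reduction; everything else in your write-up stands.
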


Apart from these classification results in the context of exact algorithms, we also pursue approximation algorithms and other special cases. We briefly summarize our contributions in these contexts below. 

\textit{Approximation Results.} We provide a polynomial time $\eps \overline{m}$-approximation algorithm for optimizing the minisum score considering outliers, for every constant $\eps>0$ [Theorem\nobreakspace \ref {thm:Minisum_approx}]. We also show a $(1+\eps)$-approximation algorithm for optimizing the minisum score considering outliers running in time $n^{O(\nfrac{\log m}{\eps^2})}$, for every constant $\eps>0$ [Theorem\nobreakspace \ref {thm:Minisum_as}]. On the hardness side, we show that in the presence of outliers, there does not exist an $\alpha(m,n)$-aproximation algorithm for optimizing score of the selected committee for any computable function $\alpha(\cdot,\cdot)$, for the \netdisapprov{} [Corollary\nobreakspace \ref {thm:no_apx_net}] and \disapprov{} voting rules (unless $\Pshort=\NPshort$) [Lemma\nobreakspace \ref {thm:no_apx_other}].

\textit{Other Special Cases.} We show that when every voter approves at most one candidate, then the \textsc{Minisum-outlier} problem can be solved in polynomial amount of time. For some \WOH{} cases, we show that the problem becomes \FPT{} if the maximum number of candidates approved by a vote or the maximum number of votes that approve of a candidate is a constant. We refer the reader to~Section\nobreakspace \ref {sec:conclusions} for a more detailed discussion. 

\subsection{Related Work}
% \paragraph{Related Work.} ~~
The notion of outliers is quite prominent in the literature pertaining to closest strings, where the usual setting is that we are given $n$ strings over some alphabet $\Sigma$, and the task is to find a string $x$ that minimizes either the maximum Hamming distance from any string, or the sum of its Hamming distances from all strings. In the context of social choice theory, the notion of outliers are intimately related to manipulation and control of election by removing voters and candidates~\cite{bartholdi1992hard,bartholdi1989computational,conitzer2007elections,chevaleyre2007short,hemaspaandra2007anyone,brandt2012computational,faliszewski2009richer,faliszewski2011multimode,meir2008complexity,procaccia2007multi,elkind2009distance,betzler2008parameterized}.

In the minimax notion of distance, the problem of finding a closest string in the presence of outliers was initiated in~\cite{BM11}, and was further refined in~\cite{BLLPW13}. The work in~\cite{BLLPW13} demonstrates hardness of approximation for variations involving outliers. With the minisum notion of distance, the study of finding the best string with respect to outliers was studied in~\cite{LKLB14}. In contrast with the minimax rule, this work shows a PTAS for \textit{Consensus Sequence with Outliers}. 

The fact that the closest string problem has similarities to the setup of approval ballots has been explored in the past, see, for instance~\cite{BS14}. More specifically, and again in the spirit of observing similarities with the closest string family of problems, the minimax approval rule has been studied from the perspective of outliers~\cite{MNS15}. To the best of our knowledge, our work is the first comprehensive study of the complexity behavior of approval ballots in the presence of outliers.

\section{Preliminaries}\label{sec:prelim}

In this section we briefly recall some terminology and notation that we will use throughout. 

\paragraph{Approval Voting} ~~Let $\mathcal{V}=\{\mathcal{S}_1, \dots, \mathcal{S}_n\}$ be the set of all \emph{votes} and $\mathcal{C}=\{c_1, \dots, c_m\}$ the set of all \emph{candidates}. Each vote $\mathcal{S}_i$ is a subset of $\mathcal{C}$. We say voter $i$ approves a candidate $x\in \mathcal{C}$ if $x\in \mathcal{S}_i$; otherwise we say, voter $i$ does not approve the candidate $x$. A voting rule $r$ for choosing a committee of size $m^*$ is a mapping $r:\cup_{n\ge1}(2^\mathcal{C})^n\longrightarrow \mathbb{P}_{m^*}(\mathcal{C})$, where $\mathbb{P}_{m^*}(\mathcal{C})$ is the set of all subsets of $\mathcal{C}$ of size $m^*$. We call a voting rule $r$ a {\it scoring rule} if there exists a scoring function $s:\cup_{n\ge1}(2^\mathcal{C})^n \times 2^\mathcal{C} \longrightarrow \mathbb{N}$ such that $r(\mathcal{V}) = \argmin_{X\in \mathbb{P}_{m^*}(\mathcal{C})} s(\mathcal{V},X)$, for every $\mathcal{V}\in(2^\mathcal{C})^n$. 
We refer to Table\nobreakspace \ref {table:approvalscores} for some common scoring rules. Unless mentioned otherwise, we use the notations listed in Table\nobreakspace \ref {tbl:notations}.

\paragraph{Algorithmic Terminology} ~~Given a set $A$, we denote the complement of $A$ by $A^c$. For a positive integer $\ell,$ we denote the set $\{1, 2, \cdots, \ell\}$ by $[\ell]$. We use the notation $O^*(f(m,n))$ to denote $O(f(m,n)poly(m,n))$.

For a minimization problem $\mathcal{P}$, we say an algorithm $\mathcal{A}$ archives an approximation factor of $\alpha$ if $\mathcal{A}(\mathcal{I}) \le \alpha OPT(\mathcal{I})$ for every problem instance $\mathcal{I}$ of $\mathcal{P}$. In the above, $OPT(\mathcal{I})$ denotes the value of the optimal solution of the problem instance $\mathcal{I}$.

\begin{definition}{\rm \bf(Parameterized Reduction~\cite{downey1999parameterized})}
\label{def:ppt-reduction} \\
Let $\Gamma_1$ and $\Gamma_2$ be parameterized problems. A parameterized reduction from $\Gamma_1$ to $\Gamma_2$ is an algorithm that, given an instance $(x,k)$ of $\Gamma_1$, outputs an instance $(x^\prime, k^\prime)$ of $\Gamma_2$ such that:
\begin{enumerate}[topsep=0pt,itemsep=0pt]
 \item $(x,k)$ is a yes-instance of $\Gamma_1$ if and only if $(x^\prime, k^\prime)$ is a yes-instance of $\Gamma_2$,
 \item $k^\prime \le g(k)$ for some computable function $g$, and
 \item the running time of the algorithm is $f(k)|x|^{O(1)}$ for some computable function $f$.
\end{enumerate}
\end{definition}

\section{Classical Complexity: \NP{}-hardness Results}

We begin by showing that even for rules where winner determination is polynomial time solvable, the possibility of choosing outliers makes the winner determination problem significantly harder. In particular, we show that the $s$-\textsc{Outliers} problem is hard for minisum, disapproval, and net disapproval voting rules. We reduce from the well known \textsc{Vertex Cover} problem, which is known to be \NPC{} even on $3$-regular graphs~\cite{garey1979computers}.

%\defproblem{\textsc{Vertex-cover}}{A graph $G$ and a positive integer $k$.}{
%Is there a subset $S$ of at most $k$ vertices such that $G[V \setminus S]$ is an independent set? 
%}

\defproblem{$3$-\textsc{Vertex-cover}}{A $3$-regular graph $G$ and a positive integer $k$.}{Is there a subset $S$ of at most $k$ vertices such that $G[V \setminus S]$ is an independent set? 
}

\NPCtheorem*
\begin{proof}
 First let us prove the result for the minisum voting rule. The problem is clearly in \NP. To prove \NP-hardness, we reduce the $3$-\textsc{Vertex-cover} problem to the \textsc{Minisum-outliers}. Let $(\mathcal{G}=(U,E), k)$ be an arbitrary instance of $3$-\textsc{Vertex-cover} problem. Let $U = \{u_1, \cdots, u_n\}$. We will introduce a candidate and a vote corresponding to every vertex, and have candidate $i$ approved by a vote $j$ if and only if the vertices corresponding to $i$ and $j$ are adjacent in $G$. We define the corresponding instance $(\mathcal{V}, \mathcal{C}, n^*, m^*, t)$ of the \textsc{Minisum-outliers} problem as follows: 
 \[ \mathcal{V} = \{\mathcal{S}_1, \cdots, \mathcal{S}_n\}, \mathcal{C} = \{c_1, \cdots, c_n\}, \]
 \[  S_i = \{ c_j : (u_i, u_j)\in E \} ~\forall i\in [n], \]
 \[n^* = n-k, m^* = k, t = (n-k)(k-3) \] 
 Notice that, $|\mathcal{S}_i| = 3$ and each candidate is approved in exactly $3$ votes since $\mathcal{G}$ is $3$-regular. We claim that the two instances equivalent. In the forward direction, suppose $W\subset U$ forms a vertex cover with $|W|=k$. Consider the committee $\mathcal{C}^* = \{ c_i : u_i \in W \}$ and the set of nonoutliers $\mathcal{V}^* = \{ S_i : u_i \notin W \}$. We claim that $h(\mathcal{S}_i, \mathcal{C}^*) = (k-3)$ for every $\mathcal{S}_i \in \mathcal{V}^*$ thereby proving $h(\mathcal{V}^*, \mathcal{C}^*) = (n-k)(k-3).$ Notice that, $h(\mathcal{S}_i, \mathcal{C}^*) \ge (k-3)$ since $|\mathcal{C}^*| = k$ and $|\mathcal{S}_i| = 3$ for every $\mathcal{S}_i\in \mathcal{V}^*$. Suppose, there exist an $\mathcal{S}_i \in \mathcal{V}^*$ such that $h(\mathcal{S}_i, \mathcal{C}^*) < k-3.$ Then there exist a candidate $c_j\in \mathcal{S}_i \setminus \mathcal{C}^*.$ However, this implies that the edge $(u_i, u_j) \in E$ is not covered by $W$ which contradicts that fact that $W$ is a vertex cover.
 
 For the reverse direction, suppose there exists a set of outliers $\overline{\mathcal{V}} \subset \mathcal{V}$ and a committee $\mathcal{C}^* \subset \mathcal{C}$ such that $|\overline{\mathcal{V}}|\le \overline{n}$, $|\mathcal{C}^*|=m^*$, and $h(\mathcal{V}^*, \mathcal{C}^*) \le t$. Since, adding votes in the set of outliers $\overline{\mathcal{V}}$ can only reduce $h(\mathcal{V}^*, \mathcal{C}^*)$, we may assume without loss of generality that $|\overline{\mathcal{V}}| = \overline{n}.$ Now, since $h(\mathcal{S}_i, \mathcal{C}^*) \ge (k-3)$ for every $\mathcal{S}_i\in \mathcal{V}^*$, we have $h(\mathcal{V}^*, \mathcal{C}^*) = (n-k)(k-3)$ and thus $h(\mathcal{S}_i, \mathcal{C}^*) = (k-3)$ for every $\mathcal{S}_i\in \mathcal{V} \setminus \overline{\mathcal{V}}.$ Let $\XX = \{ u_i\in U : \mathcal{S}_i \in \mathcal{V}^* \}$ and $\YY = \{ u_i \in U : c_i \in \mathcal{C}^* \}$. 
 We claim that $\XX$ covers all the edges incident on the vertices in $\YY$. Indeed, otherwise there exist a $u_i \in \XX$ such that all the edges incident on $u_i$ are not covered by $\YY$ and thus we have $h(\mathcal{S}_i, \mathcal{C}^*) > k-3$. We now construct a vertex cover $\WW$ of $\GG$ of size at most $k$ as follows. If $\XX$ and $\YY$ are disjoint then, $\YY$ forms a vertex cover of $\GG$. Otherwise we let $\YY_0=\YY, u\in\XX\cap\YY_0, v\in(\UU\setminus\XX)\setminus\YY_0$ and notice that $\YY_0\setminus\{u\}$ still covers all the edges incident on $\XX$. We define $\YY_1 = (\YY_0 \setminus\{u\}) \cup\{v\}$. We note that $|\XX\cap\YY_1| = |\XX\cap\YY_0|-1$ and $|\YY_1|=|\YY_0|$. We iterate this process for $|\XX\cap\YY|$ many times. Let us call $\YY_{|\XX\cap\YY|}$ $\WW$. By the argument above, $\WW$ covers all the edges incident on the vertices in $\XX$, $\WW\cap\XX=\emptyset$, and thus we have $\WW = \UU\setminus\XX$. Hence, $\WW$ forms a set cover of $\GG$.
 
 The proofs for the other rules are identical, except for the values of the target score. We define $t=(n-k)(k-3)$ for the \disapprov{}, and $t=(n-k)(k-6)$ for the \netdisapprov{} voting rule. It is easily checked that the details are analogous.
 \end{proof}

\longversion{
\begin{figure}[h]
	\centering 

\begin{tikzpicture}[scale=0.6]
\draw[DarkSlateGrey,dashed] (-5.2,10.2) rectangle (10.2,-5.2);

\draw[DarkSlateGrey,fill=IndianRed!20] (0,0) rectangle (10,10);
\draw[DarkSlateGrey,fill=YellowGreen!20] (-0.2,0) rectangle (-5,10);

\draw[decorate, decoration={brace, raise=6pt}]
(0,10) -- (10,10);

\draw[decorate, decoration={brace, raise=6pt}]
(-5,10) -- (-0.2,10);

\draw[decorate, decoration={brace, mirror, raise=6pt}]
(-5,10) -- (-5,0);

\draw[decorate, decoration={brace, mirror, raise=6pt}]
(-5,-0.2) -- (-5,-5);

\draw[decorate, decoration={brace, raise=6pt}]
(0,10) -- (10,10);

\node at (5,11) {Outside the Committee};
\node at (-2.5,11) {Committee};
\node[rotate=90] at (-6,5) {Non-outliers};
\node[rotate=90] at (-6,-2.5) {Outliers};

\node at (2.5,-6) {Adjacency Matrix of $G$.};
\node at (5,5) {Independent Set. (All Zeroes)};

\draw[DarkSlateGrey,fill=DodgerBlue!20] (-5,-0.2) rectangle (10,-5);

\draw[<->]
(-5,12) -- (10,12);
\node at (2.5,12.5) {$V(G)$};

\draw[<->]
(-7,10) -- (-7,-5);
\node[rotate=90] at (-7.5,2.5) {$V(G)$};

%
%\node[rotate = 90,yshift=20pt] at (my-12-1)
%    {\textsc{Outliers}};
%
%  \draw[decorate, decoration={brace, mirror, raise=6pt}]
%    (my-1-1.north west) -- (my-11-1.north west);
%\node[rotate = 90,yshift=20pt] at (my-5-1)
%    {\textsc{Non Outliers~~}};

\end{tikzpicture}
\end{figure}
}

Theorem\nobreakspace \ref {thm:npc} immediately yields the following corollary.

\begin{corollary}
Let $s \in \{$Minisum, \Disapprov{}, \Netdisapprov{}$\}$. Then the $s$-\textsc{Outliers} problem is \PARANPH{} when parameterized by the sum of the maximum number of approvals that a candidate obtains and the maximum number of candidates that a vote approves.
\end{corollary}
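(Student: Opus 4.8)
The plan is to observe that this corollary is an immediate consequence of Theorem~\ref{thm:npc}, and that the only thing requiring checking is that the parameter in question is bounded by a constant on the hard instances. Recall that a parameterized problem is \PARANPH{} precisely when it remains \NPH{} on the slice of instances for which the parameter equals some fixed constant; so the argument reduces to reading off the parameter value from the reduction.

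First I would unpack the parameter: it is $d := d_{\mathrm{cand}} + d_{\mathrm{vote}}$, where $d_{\mathrm{cand}}$ is the maximum number of votes approving any single candidate and $d_{\mathrm{vote}}$ is the maximum number of candidates approved by any single vote. Next I would invoke Theorem~\ref{thm:npc}, which asserts \NPC{}-ness of $s$-\textsc{Outliers} (for each of the three rules $s\in\{\text{Minisum},\Disapprov{},\Netdisapprov{}\}$) already on instances in which every vote approves exactly $3$ candidates \emph{and} every candidate is approved in exactly $3$ votes. On such instances $d_{\mathrm{cand}} = d_{\mathrm{vote}} = 3$, and hence $d = 6$. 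Therefore $s$-\textsc{Outliers} is \NPH{} on the family of instances with $d = 6$, which is exactly the definition of \PARANPH{} with respect to the parameter $d$. One may also remark, as noted in the introduction, that this additionally rules out an \XP{} algorithm in this parameter unless $\Pshort = \NP{}$.

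There is essentially no obstacle here: the corollary is a definitional rephrasing of the already-established Theorem~\ref{thm:npc}. The only point that warrants a moment's care is the direction of the bound — one must ensure the construction controls the per-vote approval count \emph{and} the per-candidate approval count simultaneously, so that their \emph{sum} (not merely one of the two) is constant. This is indeed the case, since the reduction from $3$-regular \textsc{Vertex-cover} produces an incidence-type instance in which both quantities equal the degree $3$, and likewise for the \Disapprov{} and \Netdisapprov{} variants (where only the target score changes).
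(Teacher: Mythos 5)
Your argument is exactly the paper's: the corollary is stated as an immediate consequence of Theorem~\ref{thm:npc}, since the hard instances there have every vote approving exactly $3$ candidates and every candidate approved by exactly $3$ votes, so the combined parameter is the constant $6$ and para-\NP{}-hardness follows by definition. Your proposal is correct and matches the paper's reasoning.
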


\section{Parameterized Complexity Results}

In this section, we present the results on parameterized complexity of the $s$-\textsc{Outliers} problem.

\subsection{\FPT{} algorithms}

The following result follows from the fact that the $s$-\textsc{Outliers} problem is polynomial time solvable for all the voting rules considered here if we know either the committee or the non-outliers of the solution. \shortversion{We skip proof of some results in the interest of space}

\begin{restatable}{proposition}{FPTmnLemma}\label{prop:FPT_m_n}
\label{prop:fpt_mn}	
Let $s \in \{$Minisum, \Disapprov{}, \Netdisapprov{}$\}$. Then, there is a $O^*(2^m)$ time algorithm and a $O^*(2^n)$ time algorithm for the $s$-\textsc{Outliers} problem.\end{restatable}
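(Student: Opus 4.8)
The plan is to exploit two polynomial-time subroutines and then brute-force over whichever side of the certificate is cheaper to enumerate. Subroutine (A): given a fixed committee $\mathcal{C}^*$ of size $m^*$, find an optimal set of at least $n^*$ non-outliers. Subroutine (B): given a fixed set $\mathcal{V}^* \subseteq \mathcal{V}$ with $|\mathcal{V}^*| \ge n^*$, find an optimal committee of size $m^*$ for the sub-electorate $\mathcal{V}^*$; this is exactly the winner determination problem for $s$, which is polynomial-time solvable for every rule in Table~\ref{table:approvalscores} as already noted. For subroutine (A), once $\mathcal{C}^*$ is fixed the total score decomposes as $\sum_{i:\,\mathcal{S}_i \in \mathcal{V}^*} s(\mathcal{S}_i, \mathcal{C}^*)$, a sum of independent per-vote contributions; hence an optimal $\mathcal{V}^*$ is obtained by sorting the votes in non-decreasing order of their contribution and taking a prefix. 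For Minisum and Disapproval all contributions are non-negative, so the prefix of length exactly $n^*$ is optimal; for Net Disapproval contributions may be negative, so we additionally try every prefix length between $n^*$ and $n$ and keep the best. Both subroutines run in polynomial time.

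Given these, the $O^*(2^m)$ algorithm enumerates all $\binom{m}{m^*} \le 2^m$ subsets of $\mathcal{C}$ of size $m^*$ as candidate committees; for each one it runs subroutine (A), and it answers \YES{} iff some committee yields total score at most $t$. Correctness is immediate: a \YES{}-certificate $(\mathcal{C}^*,\mathcal{V}^*)$ is detected once the enumeration reaches $\mathcal{C}^*$, since subroutine (A) then returns a non-outlier set of score at most $s(\mathcal{V}^*,\mathcal{C}^*) \le t$; conversely any committee accepted by the algorithm, together with the non-outlier set produced by (A), forms a valid certificate.

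Symmetrically, the $O^*(2^n)$ algorithm enumerates all subsets $\mathcal{V}' \subseteq \mathcal{V}$ with $|\mathcal{V}'| \ge n^*$ (at most $2^n$ of them), treats each as the set of non-outliers, runs subroutine (B) to obtain a best committee of size $m^*$ for $\mathcal{V}'$, and answers \YES{} iff the resulting score is at most $t$ for some $\mathcal{V}'$. The correctness argument mirrors the previous one, now using that the non-outlier set of any \YES{}-certificate is among the enumerated subsets. In both algorithms the dominant cost is the number of enumerated subsets, each processed in polynomial time, giving the claimed bounds.

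The only step that needs genuine care is subroutine (A) for Net Disapproval, where negative per-vote contributions mean an optimal solution need not have exactly $n^*$ non-outliers; the prefix-sweep over lengths $n^*,\dots,n$ described above resolves this while staying polynomial. Everything else — the decomposition into per-vote scores, the greedy choice of non-outliers, and the reuse of the polynomial-time winner determination routine for subroutine (B) — is routine, so this is the main (and only mild) obstacle.
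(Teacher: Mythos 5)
Your proposal is correct and follows essentially the same route as the paper's proof: guess the committee (at most $2^m$ choices) and pick the cheapest non-outliers greedily, or guess the non-outlier set (at most $2^n$ choices) and solve winner determination on it in polynomial time. In fact you are slightly more careful than the paper, since for \Netdisapprov{} the per-vote contributions can be negative and taking exactly $n^*$ smallest-score votes need not be optimal; your sweep over prefix lengths from $n^*$ to $n$ correctly patches this detail, which the paper's proof glosses over.
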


Now, we show that the \textsc{Minisum-outliers} problem, parameterized by $(t + \overline{n})$, is \FPT{}.

\begin{lemma}\label{lem:fpt_t}
 There is a $O^*(2^{t+\overline{n}})$ time algorithm for the \textsc{Minisum-outliers} problem.
\end{lemma}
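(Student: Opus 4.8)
The plan is to design a bounded-search-tree algorithm parameterized by $t + \overline{n}$. The key observation is that in the \textsc{Minisum-outliers} problem the score of a committee $\CC^*$ on a surviving vote $\SSS_i$ is the Hamming distance $h(\SSS_i,\CC^*) = |\CC^* \triangle \SSS_i| = |\CC^* \setminus \SSS_i| + |\SSS_i \setminus \CC^*|$, so if the total surviving score is at most $t$, then \emph{almost all} surviving votes must be very close to the (unknown) committee, and the committee itself cannot differ too wildly from a ``typical'' vote. Concretely, I would first guess (branch over) whether the optimal committee equals some specific small object we can pin down, or else force progress. The cleanest route: pick any vote $\SSS_i$; it is either an outlier (branch: delete it, decrement $\overline{n}$) or a non-outlier, in which case its contribution $h(\SSS_i,\CC^*)$ is part of the budget $t$. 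If $\SSS_i = \CC^*$ we are essentially done (then all other non-outliers must lie within the remaining budget, and we can check the rest greedily/by another short branching). If $\SSS_i \neq \CC^*$, then $h(\SSS_i,\CC^*) \geq 1$, so branch on a candidate $c$ in the symmetric difference: either $c \in \SSS_i \setminus \CC^*$ or $c \in \CC^* \setminus \SSS_i$; in both cases record the decision about $c$'s membership in $\CC^*$ and decrement $t$ by one (since this unit of distance is irrevocably charged to $\SSS_i$'s contribution). The subtlety is that we do not know $c$ in advance, so we branch over all $m$ candidates not yet decided and over the two sides of the symmetric difference — each such branch decrements $t$, so the depth along any root-to-leaf path on these ``distance'' branches is at most $t$, giving an $O^*((2m)^t)$ bound; combined with the at most $\overline{n}$ ``this vote is an outlier'' branches this is $O^*((2m)^{t} \cdot \text{something})$, which we then need to sharpen to $2^{O(t+\overline{n})}$.

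To get the claimed $O^*(2^{t+\overline{n}})$ running time rather than $O^*((2m)^t)$, I would replace the naive ``branch over all $m$ candidates'' step by a smarter argument. After fixing any single non-outlier vote $\SSS_i$ (say the first vote we keep), we know $|\CC^*| = m^*$ and $|\SSS_i| \le m$; the committee $\CC^*$ is obtained from $\SSS_i$ by removing $|\SSS_i \setminus \CC^*|$ candidates and adding $|\CC^* \setminus \SSS_i|$ candidates, and these two numbers sum to $h(\SSS_i,\CC^*) \le t$ (using that $\SSS_i$ is kept and the total surviving score is $\le t$). So $\CC^*$ is one of at most $\binom{m}{\le t}$ modifications of $\SSS_i$... which is still not $2^t$. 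The right fix: guess the \emph{multiset of distances} $d_j := h(\SSS_j, \CC^*)$ for kept votes only indirectly — instead, observe that once we commit to keeping $\SSS_i$ and we know the budget $t$, we actually have an instance of \textsc{Closest String with Outliers} restricted to binary strings, except we additionally constrain the number of ones of the sought string to be $m^*$. I would therefore reduce to (or mimic the known FPT algorithm for) \textsc{Closest String} / \textsc{Consensus String with outliers} parameterized by the total distance plus the number of outliers: pick a kept vote $\SSS_i$, branch on which of $\overline{n}+1$ votes among the ``bad'' ones to discard, and when forced to keep two votes $\SSS_i, \SSS_j$ that disagree, branch on the disagreeing positions charging the budget; the number of disagreeing positions relevant per branch is bounded because any kept vote is within distance $t$ of $\CC^*$, hence within distance $2t$ of $\SSS_i$, so the ``active'' coordinate set has size $O(t)$ and branching is over $O(t)$ coordinates with two choices each, depth $\le t$.

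The step I expect to be the main obstacle is exactly this sharpening from $O^*((2m)^t)$ to $O^*(2^{O(t+\overline{n})})$ — i.e.\ arguing that the branching need only consider $O(t)$ candidate positions rather than all $m$. The resolution I would push through: once we have committed to a non-outlier vote $\SSS_i$, every other surviving vote $\SSS_j$ satisfies $h(\SSS_j,\CC^*)\le t$, hence the set $D = \bigcup_{\SSS_j \text{ kept}} (\SSS_j \triangle \SSS_i)$ has size at most $t$ (the total surviving score minus $h(\SSS_i,\CC^*)$ bounds $\sum_{j\neq i} h(\SSS_j,\CC^*)$, and outside $D$ all kept votes agree with $\SSS_i$); and $\CC^*$ can differ from $\SSS_i$ only inside $D$ together with possibly a few extra coordinates forced by the cardinality constraint $|\CC^*|=m^*$. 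If $\CC^* \setminus \SSS_i$ or $\SSS_i \setminus \CC^*$ needs coordinates outside $D$, those contribute equally to $h(\SSS_i,\CC^*)$, which is itself $\le t$, so in total $\CC^*$ differs from $\SSS_i$ in at most $t$ coordinates, all of which we may assume (after a preprocessing guess of $h(\SSS_i,\CC^*)$) lie in a set of size $O(t)$. Then the search tree has depth $O(t)$ with branching factor $O(1)$ per distance-decrement plus an $O(\overline{n})$-deep chain of outlier-deletions, yielding the $O^*(2^{O(t+\overline{n})})$ bound; a careful accounting (charging one unit of $t$ or one unit of $\overline{n}$ per branch) gives the stated $O^*(2^{t+\overline{n}})$. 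I would then close with the straightforward correctness argument: the algorithm accepts iff there is a valid choice of $\le \overline{n}$ outliers and a committee of size $m^*$ with surviving Minisum score $\le t$, which follows by induction on the measure $t+\overline{n}$ since every branch either correctly identifies an outlier, correctly fixes a committee membership while consuming budget, or reaches a base case that is polynomial-time checkable (Proposition~\ref{prop:FPT_m_n} applies once the committee is fully determined, or once we can afford to enumerate the few remaining possibilities).
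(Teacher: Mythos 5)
There is a genuine gap, and it sits exactly where you flag it: the sharpening of the search tree to $O^*(2^{t+\overline{n}})$ is never actually carried out, and the ingredients you invoke do not deliver it. Your ``active set'' $D=\bigcup_{\SSS_j \text{ kept}}(\SSS_j\triangle\SSS_i)$ is defined in terms of the unknown set of non-outliers, so the algorithm cannot restrict its candidate-branching to $D$ without already knowing the outliers; replacing $D$ by the union over \emph{all} remaining votes destroys the $O(t)$ size bound, since outlier votes may be arbitrarily far from $\SSS_i$. Your fallback of mimicking \textsc{Consensus/Closest String with Outliers} has the same accounting problem you did not resolve: the ``discard one of $\overline{n}+1$ bad votes'' branching has factor $\overline{n}+1$ per unit of $\overline{n}$, giving $\overline{n}^{\overline{n}}$ rather than $2^{\overline{n}}$, and the simpler ``outlier / non-outlier'' branch on a vote does not decrease the measure in the keep-branch when $h(\SSS_i,\CC^*)=0$. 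The cardinality constraint $|\CC^*|=m^*$ (coordinates outside $D$ forced in or out of the committee) is also only waved at. So at best your plan, if completed, would yield some \FPT{} bound in $t+\overline{n}$ after substantial additional work; it does not establish the stated $O^*(2^{t+\overline{n}})$ bound, and as written it is not a proof.

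You also came within a hair of the observation that makes the lemma essentially trivial, namely the remark ``if $\SSS_i=\CC^*$ we are essentially done,'' but you never exploit the complementary case. The paper's proof is a two-case argument: if $t<n-\overline{n}=n^*$, then since at least $n^*$ kept votes share a total distance budget of $t<n^*$, by pigeonhole some kept vote has distance $0$ from the committee, i.e.\ $\CC^*$ equals one of the input votes; so one enumerates the at most $n$ votes of size $m^*$ as candidate committees and picks the $n^*$ closest votes as non-outliers, all in polynomial time. Otherwise $t\ge n-\overline{n}$, hence $n\le t+\overline{n}$, and the $O^*(2^n)$ algorithm of Proposition\nobreakspace\ref{prop:fpt_mn} (guess the set of non-outliers) already runs in $O^*(2^{t+\overline{n}})$. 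No search tree is needed; the missing idea in your proposal is precisely this dichotomy on whether $t$ is smaller or larger than the number of non-outliers.
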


\begin{proof}
 We consider the following two cases.
 
 {\bf Case $t < n - \overline{n}$: } In this case there exist a vote say $\mathcal{S}_i$ such that $h(\mathcal{X}, \mathcal{S}_i)=0$, where $\mathcal{X}$ is the committee selected. 
 Hence, we can iterate over all possible such $\mathcal{S}_i$ (if $|\mathcal{S}_i| \ne m^*$ then, do not consider that $\mathcal{S}_i$) and fix $\mathcal{X}$ to be $\mathcal{S}_i$. We now identify the non-outliers $\mathcal{V}^*$ with respect to the committee $\mathcal{X}$.
 
 {\bf Case $t \ge n - \overline{n}$:} Use the algorithm in~Proposition\nobreakspace \ref {prop:fpt_mn} which runs in $O^*(2^n) = O^*(2^{t+\overline{n}})$ time.
\end{proof}

\subsection{\WO{}-hardness Results}

In this section, we establish our \WO{}-hardness results. To begin with, we focus on parameters combined with the target score $t$. Notice that we have tractability when we combine either $(\overline{n} + n^*)$ or $(\overline{m} + m^*)$ along with $t$ (follows from Proposition\nobreakspace \ref {prop:FPT_m_n}). For minisum, we even have tractability for $t$ and $\overline{n}$, from Lemma\nobreakspace \ref {lem:fpt_t}. Therefore, the interesting combinations for the minisum rule are $(t + n^* + \overline{m})$ and $(t + n^* + m^*)$. We first consider the combination $(t+n^*+\overline{m})$, and show \WO{}-hardness for all the voting rules considered here by exhibiting a parameterized reduction from the $d$-\textsc{Clique} problem which is known to be \WOH parameterized by clique size~\cite{Downey2013}.

\defparproblem{$d$-\textsc{Clique}}{A $d$-regular graph $G$ and a positive integer $k$.}{$k$}{Is there a clique of size $k$?}

\begin{lemma}\label{lem:hard_mbar_nstar}
The \textsc{Minisum-outliers} problem is \WOH{}, when parameterized by $(t + n^*+\overline{m})$. Also, for $s \in \{$\disapprov{}, \netdisapprov{}$\}$, the $s$-\textsc{Outliers} problem is \WOH{} when parameterized by $(n^*+\overline{m})$, even when $t=0$.
\end{lemma}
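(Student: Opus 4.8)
The plan is to exhibit a parameterized reduction from $d$-\textsc{Clique} to the $s$-\textsc{Outliers} problem for each of the three scoring rules, where the new parameter is bounded by a function of the clique size $k$. Given a $d$-regular graph $G = (U, E)$ and an integer $k$, the idea is to have candidates correspond to vertices of $G$ (so $\CC = \{c_u : u \in U\}$) and votes correspond to edges of $G$, with $\SSS_{\{u,v\}} = \{c_u, c_v\}$ — i.e. each vote approves exactly the two endpoints of its edge. A clique on $k$ vertices spans exactly $\binom{k}{2}$ edges, each of which approves two committee members; any other edge with at least one endpoint in the clique contributes a ``bad'' approval. So the natural parameter settings are $m^* = k$ (committee size), and we want $n^* = \binom{k}{2}$ non-outlier votes, namely the votes corresponding to the clique edges. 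For the approval-disapproval rules the relevant quantity $|\SSS \setminus \XX|$ on a surviving vote $\SSS_{\{u,v\}}$ is $0$ precisely when both $u,v$ are chosen — which forces the chosen set to contain $\binom{k}{2}$ edges on $k$ vertices, i.e. a $k$-clique. Hence we can take $t = 0$ for \disapprov{} and \netdisapprov{} (for \netdisapprov{}, $|\SSS\setminus\XX| - |\SSS\cap\XX| = 2 - 2 = 0$ on a clique edge but this is negative elsewhere, so a small care with the sign/target is needed, possibly padding with dummy votes to keep the score nonnegative or redefining $t$ suitably). For Minisum, a surviving vote $\SSS_{\{u,v\}}$ with both endpoints chosen contributes $|\XX \Delta \SSS| = m^* - 2 = k-2$, so $t = \binom{k}{2}(k-2)$, which is a function of $k$ only; combined with the observation that $\overline{m} = m - k$ is \emph{not} bounded, we instead need $\overline{m}$ small — so this construction alone does not bound $\overline{m}$.

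To make $\overline{m}$ bounded as required by the statement (parameter $(t+n^*+\overline m)$ for Minisum, $(n^*+\overline m)$ for the others), I would modify the construction so that the non-committee is small: introduce only a bounded number of extra ``blocker'' candidates and instead let \emph{most} candidates be forced into the committee, turning the selection around. Concretely, one can let the committee have size $m^* = m - \overline{m}$ for a small $\overline m$ (a function of $k$), so that choosing the committee is equivalent to choosing the $\overline m$ vertices to \emph{exclude}; then require that the surviving votes all have their approved sets entirely outside the excluded part, i.e. we are looking for a set of $\overline m$ vertices that is ``far'' from many edges. This is the standard trick of complementation, and it is cleaner to phrase the reduction from a problem whose complement is naturally $d$-\textsc{Clique}-like, or to pre-pad $G$ with isolated/universal gadget vertices so that a $k$-clique in $G$ corresponds to excluding all but $k$ vertices of a small designated block. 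I expect the main obstacle to be exactly this calibration: arranging the gadget so that (i) the new committee-complement parameter $\overline m$ is bounded by a function of $k$, (ii) the number of non-outliers $n^*$ one insists on is bounded by a function of $k$ (which is why we want $n^* = \binom{k}{2}$, the clique-edge count, rather than something proportional to $|E|$), and (iii) the target score $t$ is $0$ (for the disapproval variants) or a clean function of $k$ (for Minisum), all simultaneously, while keeping the YES/NO equivalence airtight.

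The verification then has two directions. Forward: given a $k$-clique $K$ in $G$, take the committee to be $\CC^* = \{c_u : u \in K\}$ (in the complemented version, exclude the complement of $K$ within the designated block), and take as non-outliers exactly the votes corresponding to edges inside $K$; there are $\binom{k}{2}$ of them, each contributing score $k-2$ (Minisum) or $0$ (disapproval / net disapproval), so all thresholds are met. Reverse: suppose a committee $\CC^*$ of the prescribed size and a set $\VV^*$ of at least $n^* = \binom{k}{2}$ non-outliers achieve score at most $t$. For the disapproval rules with $t=0$, every surviving vote $\SSS_{\{u,v\}}$ must satisfy $\SSS_{\{u,v\}} \subseteq \CC^*$, so both endpoints of each of the $\binom{k}{2}$ surviving edges lie in the vertex set $K$ corresponding to $\CC^*$; since $d$-regularity caps the number of edges on any $t$ vertices, having $\binom{k}{2}$ edges among the vertices of $K$ forces $|K| \ge k$ and in fact forces $K$ to contain a $k$-clique (a graph on $k$ vertices with $\binom{k}{2}$ edges is complete). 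For Minisum the same counting works: each surviving vote costs at least $m^*-2$, and total cost at most $\binom{k}{2}(m^*-2)$ forces all $\binom{k}{2}$ surviving votes to have both endpoints in $\CC^*$, again yielding a $k$-clique. Finally I would check that the parameter is indeed bounded: $n^* = \binom{k}{2}$, $\overline m$ is whatever small gadget size we fixed (a function of $k$), and $t \in \{0, \binom{k}{2}(m^*-2)\}$; only in the Minisum case does $t$ depend on $m^*$ and hence on $m$, which is why the Minisum statement bundles $t$ with $\overline m$ and $n^*$ but still needs $t$ as a parameter — here one must be slightly careful, since $t = \binom{k}{2}(m^*-2)$ is \emph{not} a function of $k$ alone; the resolution (matching the paper's separate treatment of Minisum) is that for Minisum the reduction targets the parameter triple directly and one argues $t$ can be taken as part of the parameter because the reduction's behavior is governed by $k$, whereas for the disapproval rules the cleaner $t=0$ lets us drop $t$ from the parameter set entirely.
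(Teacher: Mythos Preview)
Your proposal has a genuine gap: you never arrive at a construction in which all three of $t$, $n^*$, and $\overline{m}$ are simultaneously bounded by a function of $k$, and your final paragraph explicitly concedes this (``$t = \binom{k}{2}(m^*-2)$ is \emph{not} a function of $k$ alone''). But a parameterized reduction \emph{requires} the new parameter to be bounded by a computable function of the old one; the vague remark that ``the reduction's behavior is governed by $k$'' does not salvage this. As written, your Minisum reduction is not a valid parameterized reduction for the parameter $(t+n^*+\overline{m})$.

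The clean fix --- and what the paper does --- is to complement the \emph{votes}, not to add blocker gadgets. Set $\mathcal{S}_e = \{c_j : u_j \notin e\}$ (each vote approves all candidates \emph{except} the two endpoints of its edge), and take the committee to be $\mathcal{C}^* = \{c_i : u_i \notin W\}$ for a clique $W$. Then $\overline{m} = k$ and $n^* = \binom{k}{2}$ as you want, and for a clique edge $e \subseteq W$ one has $\mathcal{C}^* \Delta \mathcal{S}_e = \{c_w : w \in W \setminus e\}$, which has size exactly $k-2$. Hence $t = (k-2)\binom{k}{2}$ for Minisum --- a function of $k$ alone, with no dependence on $m$ --- and $|\mathcal{C}^* \setminus \mathcal{S}_e| = 0$ gives $t=0$ for the disapproval variants. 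The reverse direction is then the straightforward counting argument you already sketched: since every vote contributes at least $k-2$ to the Hamming distance (respectively, at least $0$ to the disapproval score), meeting the target forces every non-outlier edge to have both endpoints outside $\mathcal{C}^*$, i.e.\ inside a $k$-set that must therefore be a clique.

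Incidentally, your initial construction (votes approve endpoints, $m^*=k$) is precisely the reduction the paper uses for the \emph{other} parameter combination $(t+m^*+n^*)$; you have essentially rediscovered the wrong lemma's proof and then struggled to retrofit it here.
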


\begin{proof}
 First let us prove the result for the \textsc{Minisum-outliers} problem. We exhibit a parameterized reduction from the $d$-\textsc{Clique} problem to the \textsc{Minisum-outliers} problem thereby proving the result. Let $(\mathcal{G}=(U,E), k)$ be an arbitrary instance of the $d$-\textsc{Clique} problem. Let $U = \{u_1, \cdots, u_n\}$ and $E = \{e_1, \cdots, e_m\}$. We define the corresponding instance $(\mathcal{V}, \mathcal{C}, n^*, m^*, t)$ of the \textsc{Minisum-outliers} problem as follows: 
 \[ \mathcal{V} = \{\mathcal{S}_1, \cdots, \mathcal{S}_m\}, \mathcal{C} = \{c_1, \cdots, c_n\}, \mathcal{S}_i = \{ c_j : u_j \notin e_i \}~\forall i\in [n],  \]
 \[ n^* = {k \choose 2}, \overline{m} = k, t = (k-2){k \choose 2}\] 
 We claim that the two instances are equivalent. In the forward direction, suppose $W\subset U$ forms a clique with $|W|=k$, and let $Q$ denote the set of edges that have both endpoints in $W$. Then the committee $\mathcal{C}^* = \{ c_i : u_i \notin W\}$ along with the set of non-outliers $\mathcal{V}^* = \{ \mathcal{S}_i : u_i \in Q \}$ achieves minisum score of $(k-2){k \choose 2}$.
 
 In the reverse direction, suppose there exist a set of non-outliers $\mathcal{V}^* \subset \mathcal{V}$ and a committee $\mathcal{C}^* \subset \mathcal{C}$ such that $h(\mathcal{V}^*, \mathcal{C}^*) \le (k-2){k \choose 2}$. We claim that the vertices in $W = \{u_i : c_i\in \overline{\mathcal{C}}\}$ form a clique with edge set $Q=\{e_i: \mathcal{S}_i\in\mathcal{V}^*\}$. If not then, there exist a vote $\mathcal{S}_j\in \mathcal{V}^*$ such that $h(\mathcal{S}_j, \mathcal{C})\ge k$. On the other hand we have $h(\mathcal{S}_i,\mathcal{C})\ge k-2$ for every $\mathcal{S}_i\in \mathcal{V}^*$. This implies that $h(\mathcal{V}^*, \mathcal{C}^*) > (k-2){k \choose 2}$, which is a contradiction.
 
We define $t=0$ for the \disapprov{} and \netdisapprov{} voting rules. It is easily checked that the details are analogous.
\end{proof}

We now turn to the combination $(m^*+ n^* + t)$. Again, we have \WO{}-hardness for all voting rules considered. 

\begin{restatable}{lemma}{Whardtmstarnstar}\label{lem:hard_tkno}
 Let $s \in \{$Minisum, \Netdisapprov{}, \Disapprov{}$\}$. Then the $s$-\textsc{Outliers} problem is \WOH{}, when parameterized by $(t+m^*+n^*)$, even when every vote approves exactly two candidates.
\end{restatable}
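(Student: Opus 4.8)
The plan is to give a parameterized reduction from $d$-\textsc{Clique} (which is \WOH{} parameterized by the clique size $k$, as used in Lemma~\ref{lem:hard_mbar_nstar}) to each of the three variants of the $s$-\textsc{Outliers} problem, arranged so that every vote approves exactly two candidates. Given a $d$-regular graph $G=(U,E)$ and an integer $k$, I introduce one candidate $c_u$ for every vertex $u\in U$ and one vote $\SSS_e=\{c_u,c_v\}$ for every edge $e=\{u,v\}\in E$, so that every vote approves exactly two candidates. I set $m^*=k$ and $n^*=\binom{k}{2}$. The score that an edge-vote contributes to a committee of size $k$ when both of its endpoints lie in the committee equals $k-2$ for Minisum, $k-2$ for \Disapprov{} (the $|\XX\setminus\SSS|$ variant), and $k-4$ for \Netdisapprov{}; accordingly I set $t=\binom{k}{2}(k-2)$ in the first two cases and $t=\binom{k}{2}(k-4)$ in the third. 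Since $t$, $m^*$, and $n^*$ are all bounded by functions of $k$, this is a legitimate parameterized reduction for the parameter $(t+m^*+n^*)$.

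For the forward direction, if $W\subseteq U$ is a clique of size $k$, I take the committee $\{c_u:u\in W\}$ and let the set of non-outliers consist of exactly the $\binom{k}{2}$ edge-votes with both endpoints in $W$; each of them contributes the per-edge minimum listed above, so the total score is exactly $t$.

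For the reverse direction, I start from an arbitrary solution: a committee $\CC^*$ with $|\CC^*|=k$ and a non-outlier set $\VV^*$ with $|\VV^*|\ge\binom{k}{2}$. I classify the votes of $\VV^*$ according to the number $j\in\{0,1,2\}$ of endpoints of the underlying edge that lie in $\CC^*$, writing $p$, $q$, $r$ for the number of votes with $j=2$, $j=1$, $j=0$. Evaluating the relevant score function (each endpoint moved into $\CC^*$ decreases an edge's score by $2$ units for Minisum and \Netdisapprov{} and by $1$ unit for \Disapprov{}), the total score equals $(p+q+r)B+\alpha q+\beta r$, where $B$ is the all-endpoints-in value above and $\alpha,\beta>0$. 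Combining $p+q+r\ge\binom{k}{2}$ with the trivial bound $p\le\binom{k}{2}$ (a set of $k$ vertices spans at most $\binom{k}{2}$ edges of $G$), the requirement that the total be at most $t=\binom{k}{2}B$ forces $q=r=0$ and $p=\binom{k}{2}$. Hence $G[\CC^*]$ has $\binom{k}{2}$ edges, i.e.\ $\CC^*$ induces a clique of size $k$, and we are done.

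The one delicate point, which I expect to be the main obstacle, is the sign of $B$ in the last step: the argument above uses $(p+q+r)B\ge\binom{k}{2}B$, which is immediate only when $B>0$. For Minisum and \Disapprov{} we have $B=k-2>0$ whenever $k\ge3$ (and for $k\le2$ the $d$-\textsc{Clique} instance is trivial), so nothing extra is needed. For \Netdisapprov{}, $B=k-4$ may be zero or negative for small $k$; I will deal with this either by permitting $t$ to be an arbitrary integer --- which is natural, since net-disapproval scores are genuinely signed --- and redoing the estimate using $q+r\ge(p+q+r)-\binom{k}{2}$, or, if $t$ must be nonnegative, by solving the polynomially many instances with $k\le4$ directly and, in the remaining cases, padding the committee with dummy candidates approved by no vote so that the target becomes positive. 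In either case the reduction yields \WO{}-hardness of all three problems under the combined parameter $(t+m^*+n^*)$, even when every vote approves exactly two candidates.
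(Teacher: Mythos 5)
Your reduction is essentially identical to the paper's: one candidate per vertex, one two-candidate vote per edge, $m^*=k$, $n^*=\binom{k}{2}$, and $t=\binom{k}{2}(k-2)$ (resp.\ $\binom{k}{2}(k-4)$ for \Netdisapprov{}), with the same forward direction and a reverse direction that is just a more explicit counting form of the paper's per-vote lower-bound argument. The only difference is that you explicitly flag and patch the sign issue for \Netdisapprov{} when $k\le 4$ (handling such constant $k$ directly is indeed enough), a technicality the paper's proof silently glosses over, so the proposal is correct and matches the paper's approach.
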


We now consider the other combinations involving $t$ that are non-trivial for voting rules different from minisum, namely $(t + \overline{n} + m^*)$ and $(t + \overline{n} + \overline{m})$. Lemma\nobreakspace \ref {lem:hard_nbar_mstar} below shows \WO{}-hardness for the first combination, for the \Disapprov{} voting rule. This also shows \WO{}-hardness with respect to the parameter $(\overline{n} + m^*)$ alone, for the minisum voting rule. 

%NewResult
\begin{lemma}\label{lem:hard_nbar_mstar}
The \textsc{Minisum-Outliers} problem is \WOH{} when parameterized by $(\overline{n} + m^*)$.  Also, the \textsc{\Disapprov{}-Outliers} problem is \WOH{} parameterized by $(m^*+\overline{n})$, even when the target score $t=0$. In particular, $s$-\textsc{Outliers} is  \WOH{} parameterized by $(m^*+\overline{n}+t)$.
\end{lemma}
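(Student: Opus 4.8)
The plan is to exhibit a parameterized reduction to \textsc{Minisum-Outliers} (and to \textsc{Disapproval-Outliers}) from a \WO{}-hard problem whose parameter is a pair of numbers that can be encoded as ``number of outliers'' and ``committee size.'' A natural source problem is \textsc{Clique} on $d$-regular graphs parameterized by clique size $k$ (as already used in Lemma~\ref{lem:hard_mbar_nstar}), or \textsc{Independent Set} on regular graphs, since the ``choose $k$ vertices'' budget will map onto one of the two small parameters. The key constraint is that only $\overline{n}$ and $m^*$ are allowed to be small; $n^*$, $\overline{m}$, $m$, $n$ must all be free to be large (polynomial in the source instance). So I would set up candidates indexed by edges and votes indexed by vertices (or some hybrid), arranged so that selecting a committee of size $m^* = f(k)$ corresponds to choosing $k$ special objects, and declaring $\overline{n} = g(k)$ votes as outliers corresponds to discarding the ``bad'' vertices that witness non-adjacency. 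The target score $t$ should be forced to $0$ for the \Disapprov{} variant: every non-outlier vote must be fully covered by the committee, i.e.\ $\SSS \subseteq \CC^*$ for all $\SSS \in \VV^*$, which is exactly the statement that $|\XX \setminus \SSS|$-disapproval is zero (note the chosen disapproval variant is $|\XX\setminus\SSS|$, which vanishes iff $\XX\subseteq\SSS$ — so I would more likely make the committee small and each surviving vote a \emph{superset} of it, or use the symmetric variant via Lemma~\ref{lem:translate-disapprov}).

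Concretely, here is a candidate construction. Given a $d$-regular graph $G = (V,E)$ with $|V| = n_G$, introduce one candidate $c_v$ per vertex $v$ and one vote $\SSS_e$ per edge $e = \{u,w\}$ defined as $\SSS_e = \{c_u, c_w\}$ (so each vote approves exactly two candidates). Ask for a committee $\CC^*$ of size $m^* = k$ and a set of non-outliers $\VV^*$ of size $n^* = \binom{k}{2}$ with total \Disapprov{} score (measured as $\sum |\SSS_e \setminus \CC^*|$, the symmetric variant) equal to $t = 0$. A vote $\SSS_e$ contributes $0$ iff both endpoints of $e$ lie in $\CC^*$; hence having $\binom{k}{2}$ such votes forces the $k$ chosen candidates to span $\binom{k}{2}$ edges, i.e.\ to form a clique of size $k$. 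The number of outliers is $\overline{n} = |E| - \binom{k}{2} = \tfrac{d\,n_G}{2} - \binom{k}{2}$ — but this is \emph{not} bounded by a function of $k$, which kills the reduction for the parameter $\overline{n}$. So this exact construction only establishes \WO{}-hardness for $(m^* + n^* + t)$, which is already Lemma~\ref{lem:hard_tkno}. I therefore need to flip the roles: make the \emph{committee} be the large object and the outliers the small object.

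Here is the corrected design. Index candidates by edges: $\CC = \{c_e : e \in E\}$, and index votes by vertices: for each vertex $v$, let $\SSS_v = \{c_e : e \text{ incident to } v\}$, so $|\SSS_v| = d$. Now take $\overline{m}$ large: we will put almost all candidates into the committee and leave out only a few. Specifically, ask that the committee $\CC^*$ omit exactly $k$ candidates (so $\overline{m} = k$), and that we keep $n^* = n_G - k$ non-outlier votes, i.e.\ discard $\overline{n} = k$ votes, with minisum target $t = (d-2)\binom{?}{}$ computed analogously to Lemma~\ref{lem:hard_mbar_nstar}. Wait — this is essentially Lemma~\ref{lem:hard_mbar_nstar}, which handles $(t + n^* + \overline{m})$. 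For the \emph{new} combination $(\overline{n} + m^*)$ I genuinely need $m^*$ (not $\overline{m}$) and $\overline{n}$ (not $n^*$) small. The clean way: reduce from \textsc{Multicolored Clique} or, better, from an \emph{annotated} problem where we must pick $k$ vertices forming a clique out of $k$ disjoint pools — but the real trick will be a gadget that blows up the ambient sizes while keeping the committee genuinely small. I would introduce $k$ ``dummy'' candidates whose job is padding, so $m^* = k$, and build votes so that covering them correctly forces a clique-like structure among $k$ selected objects, while the non-adjacency witnesses are exactly $k$ special votes that must be thrown out as outliers, giving $\overline{n} = k$ (or $O(k)$).

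The main obstacle, and the step I expect to be delicate, is precisely arranging the gadget so that \emph{both} $m^*$ and $\overline{n}$ are simultaneously bounded by functions of the source parameter $k$, since in the naive edge/vertex encodings one of these two quantities is inevitably proportional to $|E|$ or $|V|$. The resolution will likely be to reduce not from plain \textsc{Clique} but from a problem whose natural parameter is already a \emph{pair} $(k, \ell)$ — for instance \textsc{Clique} where we also specify that exactly $\ell$ ``forbidden'' edges must be avoided — or to use a careful two-layer construction in which the committee selects $k$ vertices via $k$ candidate-slots and the $\overline{n}$ outliers absorb the at most $\binom{k}{2}$ potential ``missing edge'' complaints (so $\overline{n} = \binom{k}{2}$, still $O(k^2)$, which is fine for \WO{}-hardness under parameterized reductions). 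For the \Disapprov{} case, forcing $t = 0$ then means each surviving vote must be a subset of the committee, which pins down the combinatorics cleanly; for minisum, a nonzero but $k$-bounded $t$ plays the analogous role, and the final sentence of the lemma follows immediately since $\{m^* + \overline{n} + t\} \supseteq \{m^* + \overline{n}\}$ and adding parameters only makes hardness stronger. I would close by checking the three conditions of Definition~\ref{def:ppt-reduction}: correctness of the yes/no correspondence (clique $\leftrightarrow$ valid committee-plus-outlier choice), the parameter bound $m^* + \overline{n} + t = O(\mathrm{poly}(k))$, and polynomial running time of the construction.
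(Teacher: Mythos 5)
You correctly diagnose the crux---both $m^*$ and $\overline{n}$ must be bounded in the source parameter $k$---and you even record the right structural fact in passing (the chosen disapproval $|\CC^*\setminus\SSS|$ vanishes iff $\CC^*\subseteq\SSS$, so the committee should be small and every surviving vote a superset of it). But the proposal never delivers a construction achieving this: after discarding your two attempts (which, as you note, only reprove Lemma~\ref{lem:hard_tkno} and Lemma~\ref{lem:hard_mbar_nstar}), you offer only speculative directions (Multicolored Clique, dummy padding candidates, an ``annotated'' clique variant, an unspecified two-layer gadget with $\overline{n}=\binom{k}{2}$), none of which is carried out or verified, so the lemma is not proved. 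The missing idea is a one-line complementation of your second attempt: keep candidates indexed by edges and votes indexed by vertices of a $d$-regular graph, but let $\SSS_i=\{c_e : u_i\notin e\}$ approve the edges \emph{not} incident to $u_i$. Then a clique $W$ with edge set $Q$ gives the committee $\CC^*=\{c_e : e\in Q\}$ of size $m^*=\binom{k}{2}$ and the outliers $\overline{\VV}=\{\SSS_i : u_i\in W\}$ with $\overline{n}=k$, both bounded in $k$. For \Disapprov{} with $t=0$, a non-outlier vote contributes $0$ exactly when no committee edge touches its vertex, so a zero-score solution forces all $\binom{k}{2}$ committee edges to have both endpoints among the $k$ outlier vertices, i.e.\ to form a clique; for minisum the same construction works with $t$ equal to $(n-k)$ times the minimum achievable per-vote distance, the reverse direction arguing that every non-outlier must attain that minimum.

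Two further cautions. Your claim that the last sentence of the lemma ``follows immediately since adding parameters only makes hardness stronger'' is backwards: hardness with respect to a larger parameter set is the \emph{stronger} assertion, and it follows here only because $t$ is bounded (indeed $t=0$) in the \Disapprov{} reduction. Relatedly, your hope for ``a nonzero but $k$-bounded $t$'' in the minisum case is unattainable: \textsc{Minisum-outliers} is \FPT{} in $(t+\overline{n})$ by Lemma~\ref{lem:fpt_t}, so any correct reduction for the minisum part must use a $t$ that is unbounded in $k$ (as the one above does), and the minisum claim is therefore hardness for $(\overline{n}+m^*)$ only, not for $(\overline{n}+m^*+t)$.
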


\begin{proof}
To begin with, let us prove the result for the \textsc{Minisum-outliers} problem. We exhibit a parameterized reduction from the $d$-\textsc{Clique} problem to the \textsc{Minisum-outliers} problem thereby proving the result. Let $(\mathcal{G}=(U,E), k)$ be an arbitrary instance of the $d$-\textsc{Clique} problem. Let $U = \{u_1, \cdots, u_n\}$ and $E = \{e_1, \cdots, e_m\}$. We define the corresponding instance $(\mathcal{V}, \mathcal{C}, n^*, m^*, t)$ of the \textsc{Minisum-outliers} problem as follows: 
 \[ \mathcal{V} = \{\mathcal{S}_1, \cdots, \mathcal{S}_n\}, \mathcal{C} = \{c_1, \cdots, c_m\}, \mathcal{S}_i = \{ c_j : u_i \notin e_j \}~\forall i\in [n],\]
 \[ \overline{n} = k, m^* = {k \choose 2}, t = (m-{k\choose 2})(n-k)\] 
 We claim that the two instances are equivalent. In the forward direction, suppose $W\subset U$ forms a clique with $|W|=k$, and let $Q$ denote the set of edges that have both endpoints in $W$. Then the committee $\mathcal{C}^* = \{ c_i : e_i \in Q\}$ and the set of non-outliers $\mathcal{V}^* = \{ \mathcal{S}_i : u_i \notin W \}$ achieves minisum score of $(m-{k\choose 2})(n-k)$.
 
 In the reverse direction, suppose there exist a set of outliers $\overline{\mathcal{V}} \subset \mathcal{V}$ and a committee $\mathcal{C}^* \subset \mathcal{C}$ such that $h(\mathcal{V}^*, \mathcal{C}^*) \le (m-{k\choose 2})(n-k)$. We claim that the vertices in $W = \{u_i : \mathcal{S}_i\in \overline{\mathcal{V}}\}$ form a clique. If not then, there exist a candidate $x\in\mathcal{C}^*$ that is not approved by at least one vote in $\mathcal{V}^*$. However, this implies that $h(\mathcal{V}^*, \mathcal{C}^*) > (m-{k\choose 2})(n-k)$, which is a contradiction.
 
We define $t=0$ for the \disapprov{} voting rule. It is easily checked that the details are analogous.
\end{proof}

Next, we show that \textsc{\Netdisapprov{}-outliers} is \WOH{} with respect to the combined parameter $(\overline{n}+t+m^*)$.

\begin{lemma}\label{thm:wh_tko_net}
 The \textsc{\Netdisapprov{}-outliers} problem is \WOH{}, when parameterized by $(\overline{n}+m^*)$, even when the target score is $0$. In particular, \textsc{\Netdisapprov{}-outliers} problem is \WOH{}, when parameterized by $(t+\overline{n}+m^*)$.
\end{lemma}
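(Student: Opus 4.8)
The plan is to give a parameterized reduction from $d$-\textsc{Clique}, in the same spirit as the reductions behind Lemma~\ref{lem:hard_nbar_mstar}. Starting from a $d$-regular graph $G=(U,E)$ with $|U|=N$ and $|E|=M$, the committee will correspond to the $\binom{k}{2}$ edges lying inside a putative $k$-clique and the set of outliers will correspond to the $k$ vertices of that clique; thus $m^*$ will be $\binom{k}{2}+O(k)$ and $\overline n$ will be $O(k)$, both bounded by a function of $k$ as the parameterization requires, whereas the number of candidates and the number of votes will be $\Theta(M)$ and $\Theta(N)$ (they must be unbounded, since otherwise Proposition~\ref{prop:FPT_m_n} would make the problem \FPT{}). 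Since the instance we build has $t=0$, the main claim immediately yields the ``in particular'' statement for the combined parameter $(t+\overline n+m^*)$.

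The reason the \Disapprov{} reduction of Lemma~\ref{lem:hard_nbar_mstar} does not transfer verbatim is that a net-disapproval score is signed: a committee that is heavily approved by the retained votes contributes a large negative amount for free, so a direct edge/vertex construction would be a \textsc{Yes}-instance whether or not $G$ has a clique. The remedy is to introduce a calibration gadget --- a family of auxiliary candidates that every optimal committee is driven to contain (or to avoid), together with a family of auxiliary votes --- whose purpose is to shift the per-vote scores so that, for committees of the relevant shape, the condition ``total net disapproval $\le 0$'' becomes the same rigid incidence condition (``each retained vote approves the whole relevant part of the committee'') that powered the \Disapprov{} proof. The delicacy is that the gadget must \emph{not} make all per-vote scores nonnegative: if it did, ``$\le 0$'' would collapse to ``$=0$ everywhere'', i.e. a demand to pack many equal-size vote-sets inside a bounded committee, which is \FPT{} in $m^*+\overline n$; so the construction has to leave just enough slack to avoid that collapse while still killing the spurious all-negative solutions.

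With the construction in place the two directions should be routine. For the forward direction, given a $k$-clique $W$ in $G$, one takes the committee to be the $\binom{k}{2}$ edge-candidates with both endpoints in $W$ together with the forced auxiliary candidates, and the outliers to be the $k$ votes associated with the vertices of $W$; a direct count shows the retained votes contribute net-disapproval exactly $0$, as designed. For the reverse direction, one argues as in Lemma~\ref{lem:hard_nbar_mstar}: from a solution of net-disapproval $\le 0$, the calibration gadget (together with an exchange argument pinning down the shape of the committee) forces every retained vote to hit the committee at the ``balanced'' level, which means the $\binom{k}{2}$ edge-candidates in the committee are incident only to the $\le k$ outlier vertices; the cramming/exchange argument of that proof then extracts a $k$-clique of $G$.

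The step I expect to be the real obstacle is the design and analysis of the calibration gadget, and in particular checking that it introduces no new cheap route to a nonpositive total: the auxiliary candidates must genuinely be forced into or out of the committee by an exchange argument, the auxiliary votes must be neither removable within the outlier budget nor able to over-approve the committee, and the whole arrangement must simultaneously keep $\overline n$ and $m^*$ bounded by functions of $k$ while leaving the slack needed to avoid the \FPT{} collapse noted above. Once the gadget has done its job, the clique extraction itself is essentially the argument already used for the \Disapprov{} rule.
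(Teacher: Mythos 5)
Your high-level skeleton (reduce from clique; committee $=$ clique edges plus auxiliary candidates, outliers $=$ clique-vertex votes, $t=0$) matches the paper, but the proposal has a genuine gap: the calibration gadget, which you yourself identify as the crux, is never constructed, and the design constraint you impose on it is wrong. You argue the gadget must \emph{not} make all retained per-vote net scores nonnegative, on the grounds that ``$\le 0$'' would then collapse to ``$=0$ everywhere,'' which you claim is an \FPT{} packing condition in $m^*+\overline{n}$. The paper's construction does exactly what you forbid, and it works: it adds a block $D$ of $\binom{k}{2}$ dummy candidates together with $n+2k$ padding votes each approving exactly $D$, sets $m^*=2\binom{k}{2}$, $\overline{n}=k$, $t=0$, and takes $\mathcal{S}_i=\{c_e: u_i\notin e\}$. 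A counting/exchange argument forces $D\subseteq\mathcal{C}^*$ (dummies get at least $n+k$ approvals from non-outliers, edge candidates at most $n$), after which every padding vote contributes $0$ and every retained vertex-vote contributes $|\mathcal{C}^*\setminus\mathcal{S}_i|-|\mathcal{C}^*\cap\mathcal{S}_i|\ge \binom{k}{2}-\binom{k}{2}=0$, with equality iff $\mathcal{S}_i$ approves \emph{all} edge candidates in $\mathcal{C}^*$. So ``$=0$ everywhere'' does not say that small vote-sets must be packed inside the committee; it says the $\binom{k}{2}$-element edge-part of the committee must be contained in every retained vote, i.e.\ every committee edge is incident only to the $\le k$ outlier vertices --- which is precisely the clique condition and is \WOH{}, not \FPT{}. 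Your misreading of what the ``balanced level'' forces is what leads you to demand extra ``slack,'' and chasing that requirement would likely produce a construction in which cheap strictly negative solutions reappear, the very failure mode you were trying to exclude.

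Concretely, then, two things are missing before your outline becomes a proof: (i) an explicit gadget with the three forcing properties you list (auxiliary candidates forced into the committee, padding votes not removable within the outlier budget $\overline{n}=k$ and never contributing negatively, parameters bounded in $k$) --- the paper's $(D,\ \mathcal{T}_1,\dots,\mathcal{T}_{n+2k})$ pair is one such choice, and the number $n+2k$ of padding votes is what makes discarding them within the budget useless and pins $D$ inside the committee; and (ii) the forward-direction count showing each retained vote's net disapproval is exactly $0$, which is immediate once the gadget is fixed but cannot be checked for a gadget that is only described by its intended effect.
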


\begin{proof}
We reduce the \textsc{Clique} problem to the \textsc{Net-approval-outlier} problem. Let $(\mathcal{G}=(U,E), k)$ be an arbitrary instance of the \textsc{Clique} problem. Let $U = \{u_1, \cdots, u_n\}$. We define the corresponding instance $(\mathcal{V}, \mathcal{C}, n^*, m^*, t)$ of the \textsc{Net-approval-outlier} problem as follows: 
 \[ \mathcal{C} = \{c_e : e\in E\} \cup D, \text{ where } |D| = {k \choose 2},\]
 \[\mathcal{V} = \{\mathcal{S}_1, \cdots, \mathcal{S}_n\} \cup \{ \mathcal{T}_1, \cdots, \mathcal{T}_{n+2k} \}, \]
 \[ \mathcal{S}_i = \{c_e : u_i\notin e\}, \forall i\in[n], ~\mathcal{T}_j = D, \forall j\in\left[{k \choose 2}\right],\]
 \[\overline{n} = k, ~m^* = 2{k \choose 2}, ~t=0 \] 
 We claim that the two instances equivalent. In the forward direction, suppose $\mathcal{G}$ has a clique on a subset of vertices $W\subset U$ of size $k$; let $Q$ be the clique edges. Then we define the set of outlier votes to be $\overline{\mathcal{V}} = \{\mathcal{S}_i : u_i \in W\}$ and the committee to be the set of candidates $\mathcal{C}^* = D \cup \{c_e : e \in Q\}$. Now we have $\sum_{\mathcal{S} \in \mathcal{V}^*}|\mathcal{S} \cap \mathcal{C}^*| = (n-k){k \choose 2} + (n+2k)|D| = (n+k){k \choose 2}$ and $\sum_{\mathcal{S} \in \mathcal{V}^*}|\mathcal{S}^c \cap \mathcal{C}^*| = (n-k)|D| + (n+2k){k \choose 2} = (n+k){k \choose 2}$ thereby achieving the net approval score $t$ of $0$. 
 
 In the reverse direction, suppose we have a set of outlier votes $\overline{\mathcal{V}}$ and a committee $\mathcal{C}^*$ which achieves a net approval score $t$ of $0$. First observe that we can assume without loss of generality that $D$ is a subset of $\mathcal{C}^*$ since irrespective of the outliers chosen, every candidate in $D$ receives at least $(n+k)$ approvals and every candidate not in $D$ receives at most $n$ approvals. Now since the committee $\mathcal{C}^*$ contains $D$, for every $j\in[n+2k]$, the vote $\mathcal{T}_j$ contributes at most $|D|-{k \choose 2} = 0$ to the net approval score, whereas, for every $i\in[n]$, the vote $\mathcal{S}_i$ contributes at least ${k \choose 2} - |D| = 0$ to the net approval score. Hence, we may assume without loss of generality that $\mathcal{T}_j$ belongs to the set of non-outliers $\mathcal{V}^*$ for every $j\in[n+2k]$. Now we claim that the set of edges $Q = \{ e : c_e \in \mathcal{C}^*\}$ must form a clique on the set of vertices $W = \{u : \mathcal{S}_u \in \overline{\mathcal{V}}\}$. If not then, $\sum_{\mathcal{S} \in \mathcal{V}^*}|\mathcal{S} \cap \mathcal{C}^*| < (n+k){k \choose 2}$ and $\sum_{\mathcal{S} \in \mathcal{V}^*}|\mathcal{S}^c \cap \mathcal{C}^*| > (n+k){k \choose 2}$ thereby making the net approval score $t$ strictly more than $0$.
\end{proof}

Finally, we show that for all rules considered, the $s$-\textsc{Outliers} problem is \WOH{} when parameterized by $(\overline{n}+\overline{m})$.  In particular, for the \textsc{\Netdisapprov{}-outliers} problem, we have \WO{}-hardness even when $t=0$ and every candidate is approved in exactly two votes. 

\begin{lemma}\label{lem:hard_o}
Let $s \in \{$Minisum, \Disapprov{}, \Netdisapprov{}$\}$. Then $s$-\textsc{Outliers} is \WOH{}, when parameterized by $(\overline{n}+\overline{m})$, even when every candidate is approved in exactly two votes. Further, the \textsc{\Netdisapprov{}-outliers} problem, parameterized by $(\overline{n}+\overline{m})$, is \WOH even when $t=0$ and every candidate is approved in exactly two votes. In particular, the \textsc{\Netdisapprov{}-outliers} problem is \WOH{} parameterized by $(\overline{n}+\overline{m}+t)$.
\end{lemma}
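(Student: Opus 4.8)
The plan is a parameterized reduction from $d$-\textsc{Clique}, in the incidence spirit of Lemmas~\ref{lem:hard_mbar_nstar}--\ref{thm:wh_tko_net}, but engineered so that \emph{both} $\overline{m}$ and $\overline{n}$ stay bounded by functions of the clique size. Given a $d$-regular graph $G=(U,E)$ with $|U|=n$, $|E|=m=nd/2$, and parameter $k$, I would first clear away degenerate cases: the clique problem is polynomial-time solvable when $d\le 2$ or $k\le 2$, and solvable in $2^{O(k\log k)}$ time by brute force when $n\le k^2+4$, so I may assume $d\ge3$, $k\ge3$, and $n\ge k^2+4$, which guarantees $m-\binom{k}{2}>2d$. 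Then I introduce one candidate $c_e$ per edge $e$ and one vote $\mathcal{S}_u=\{c_e:u\in e\}$ per vertex $u$: every vote has size exactly $d$ and every candidate $c_e$ lies in exactly the two votes of the endpoints of $e$, so the ``exactly two votes'' restriction holds by construction. I set $\overline{n}=k$ (i.e.\ $n^*=n-k$), $m^*=m-\binom{k}{2}$ (i.e.\ $\overline{m}=\binom{k}{2}$), and the target $t=(n-k)(m^*-d)$ for Minisum and \Disapprov{}, and $t=(n-k)(m^*-2d)$ for \Netdisapprov{}.

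The correctness proof would hinge on one per-vote identity: for any committee $\mathcal{C}^*$ of size $m^*$ and any retained vote $\mathcal{S}_u\in\mathcal{V}^*$ we have $|\mathcal{C}^*\cap\mathcal{S}_u|=d-|\mathcal{S}_u\cap\overline{\mathcal{C}}|$, hence the score $\mathcal{S}_u$ contributes is $(m^*-d)+2|\mathcal{S}_u\cap\overline{\mathcal{C}}|$ for Minisum, $(m^*-d)+|\mathcal{S}_u\cap\overline{\mathcal{C}}|$ for \Disapprov{}, and $(m^*-2d)+2|\mathcal{S}_u\cap\overline{\mathcal{C}}|$ for \Netdisapprov{}. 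Since the base values $m^*-d$ and $m^*-2d$ are positive, each per-vote score is at least its base value and strictly increasing in the ``leakage'' $|\mathcal{S}_u\cap\overline{\mathcal{C}}|$; summing over $\mathcal{V}^*$ (which has size at least $n^*=n-k$), the total is at least $t$, with equality only if $|\mathcal{V}^*|=n-k$ and $\mathcal{S}_u\cap\overline{\mathcal{C}}=\emptyset$ for every retained $\mathcal{S}_u$. The latter forces every one of the $\binom{k}{2}$ non-committee candidates $c_e$ to have both endpoints of $e$ among the $k$ outlier vertices $W'=\{u:\mathcal{S}_u\in\overline{\mathcal{V}}\}$; since a vertex set of size at most $k$ spans at most $\binom{k}{2}$ edges and exactly $\binom{k}{2}$ only when it is a $k$-clique, $W'$ must be a $k$-clique of $G$. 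The forward direction mirrors this: from a clique $W$ with edge set $Q$ put $\overline{\mathcal{V}}=\{\mathcal{S}_u:u\in W\}$ and $\overline{\mathcal{C}}=\{c_e:e\in Q\}$; then every surviving vote is contained in $\mathcal{C}^*$ and the score is exactly $t$. This establishes \WOH{} for all three rules parameterized by $(\overline{n}+\overline{m})$, under the two-votes restriction, which is the first assertion.

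For the refinement to target $t=0$ for \Netdisapprov{}, the idea is to append a ``shift gadget'': a block of dummy candidates forced into every good committee (and a block of dummy votes forced to be non-outliers) whose guaranteed net-disapproval contribution lowers the base score $(n-k)(m^*-2d)$ to exactly $0$, while the clique part above controls the remaining slack. The delicate step --- and what I expect to be the main obstacle --- is showing that the gadget cannot be subverted: one must verify that in any solution of score $\le 0$ the dummy block sits in its intended configuration, so that the clique argument still applies verbatim; and if the ``exactly two votes'' property is to be kept simultaneously, the dummy block cannot be a handful of repeated votes (as in Lemma~\ref{thm:wh_tko_net}) but must be realised as a large, carefully balanced incidence structure in which every dummy candidate still appears in precisely two votes, and one must rule out any ``cheating'' solution exploiting that structure. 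The remaining work --- the exact arithmetic of the target and of the gadget dimensions --- is routine bookkeeping, and the concluding ``in particular'' clause is then immediate since $t=0$ is a constant parameter.
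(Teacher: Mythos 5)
Your first assertion is handled correctly, and by essentially the paper's own construction: the same incidence reduction from $d$-\textsc{Clique} (one candidate per edge, one vote per vertex approving its incident edges), the same parameter values $\overline{n}=k$, $\overline{m}=\binom{k}{2}$, and the same target $(n-k)\bigl(m-\binom{k}{2}-d\bigr)$ for Minisum and \Disapprov{}. Your per-vote ``leakage'' identity and the counting argument (a set of $k$ vertices spans at most $\binom{k}{2}$ edges, with equality only for a clique) give a correct and in fact slightly tighter reverse direction than the paper's, and the preprocessing that makes the base scores positive is a reasonable way to justify the ``equality forces $|\mathcal{V}^*|=n-k$ and zero leakage'' step. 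Using the natural target $(n-k)(m^*-2d)$ for \Netdisapprov{} in this first part, instead of dummy candidates, is fine and even preserves the two-approvals property for that rule.

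The genuine gap is the second assertion and, with it, the final ``in particular'' clause. You do not prove that \textsc{\Netdisapprov{}-outliers} is \WOH{} with $t=0$: you only sketch a ``shift gadget'' and explicitly defer what you yourself call the main obstacle (showing the gadget cannot be subverted, and realizing it while keeping every candidate approved exactly twice). Since in your first part $t=(n-k)(m^*-2d)$ grows with $n$ and $m$, hardness for the combined parameter $(\overline{n}+\overline{m}+t)$ does not follow from it, so as written neither of the last two claims of the lemma is established. The paper closes this step concretely and cheaply: it keeps the same reduction, adds a block of dummy candidates approved by \emph{every} vote (the number should be $m-\binom{k}{2}-2d$, so that each retained vote's net-disapproval contribution becomes exactly twice its leakage), keeps $\overline{m}=\binom{k}{2}$, sets $t=0$, and argues by a simple exchange that any optimal committee contains all dummies (each dummy gets at least $n-k$ approvals from non-outliers, while an edge candidate gets at most two) --- no ``large, carefully balanced incidence structure'' is needed. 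You are right, however, that this costs the exactly-two-approvals property in the $t=0$ variant (the paper's dummies are approved by all votes), so your hesitation about achieving both restrictions simultaneously points at a real looseness in the paper's own statement; but to match the lemma you would at minimum need to spell out such a dummy-block construction and its exchange argument rather than leave it as an open obstacle.
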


\begin{proof}
First let us prove the result for the \textsc{Minisum-outliers} problem. We exhibit a parameterized reduction from the $d$-\textsc{Clique} problem to the \textsc{Minisum-outliers} problem thereby proving the result. Let $(\mathcal{G}=(U,E), k)$ be an arbitrary instance of the $d$-\textsc{Clique} problem. Let $U = \{u_1, \cdots, u_n\}$ and $E = \{e_1, \cdots, e_m\}$. We define the corresponding instance $(\mathcal{V}, \mathcal{C}, n^*, m^*, t)$ of the \textsc{Minisum-outliers} problem as follows: 
 \[ \mathcal{V} = \{\mathcal{S}_1, \cdots, \mathcal{S}_n\}, \mathcal{C} = \{c_1, \cdots, c_m\}, \mathcal{S}_i = \{ c_j : u_i \in e_j \}, \forall i\in [n],  \]
 \[ \overline{n} = k, \overline{m} = {k \choose 2}, t = (n-k)\left(m-{k \choose 2}-d\right)\] 
 We claim that the two instances are equivalent. In the forward direction, suppose $W\subset U$ forms a clique with $|W|=k$, and let $Q$ denotes the set of edges that have both endpoints in $W$. Consider the committee $\mathcal{C}^* = \{ c_i : e_i \notin Q\}$ and the set of outliers $\overline{\mathcal{V}} = \{ \mathcal{S}_i : u_i \in W \}$.  Consider now a vote $\mathcal{S}_j \in \mathcal{V}^*$. Note that $|\mathcal{C}^* \cap \mathcal{S}_j| = d$, since every edge incident on $u_j$ is an edge that does not belong to $Q$, by the definition of $\mathcal{C}^*$. Further, this also implies that $|\mathcal{C}^* \setminus \mathcal{S}_j| = \left(m-{k \choose 2}-d\right)$. Therefore, $h(\mathcal{S}_j, \mathcal{C}^*) = \left(m-{k \choose 2}-d\right).$ Hence we have $h(\mathcal{V}^*, \mathcal{C}^*) = (n-k)\left(m-{k \choose 2}-d\right).$
 
For the reverse direction, suppose there exist a set of non-outliers $\mathcal{V}^* \subset \mathcal{V}$ and a committee $\mathcal{C}^* \subset \mathcal{C}$ such that $h(\mathcal{V}^*, \mathcal{C}^*) \le (n-k)\left(m-{k \choose 2}-d\right)$. We claim that the vertices in $W = \{u_i : \mathcal{S}_i\in \overline{\mathcal{V}}\}$ form a clique. If not then there exists a vote $\mathcal{S}\in\mathcal{V}^*$, such that $h(\mathcal{S}, \mathcal{C}^*) > \left(m-{k \choose 2}-d\right)$. However, for every vote $\mathcal{S}^\prime\in\mathcal{V}^*$, we have $h(\mathcal{S}^\prime, \mathcal{C}^*) \ge \left(m-{k \choose 2}-d\right)$. This makes $h(\mathcal{V}^*, \mathcal{C}^*) > (n-k)\left(m-{k \choose 2}-d\right)$ which is a contradiction.

The proofs for the other rules are identical, except for the values of the target score. We define $t=(n-k)\left(m-{k \choose 2}-d\right)$ for the \disapprov{} voting rule. For the \netdisapprov{} voting rule, we add $(m-2d)$ many dummy candidates who are approved by every vote. We keep $\overline{m} = {k \choose 2}$ and make $t=0$.  It is easily checked that the remaining details are analogous.
\end{proof}

% \Cref{thm:hard_o} immediately gives us the following corollary.
% 
% \begin{corollary}\label{cor:2_vot}
%  The \textsc{Minisum-outliers} problem is in \NPC even when every candidate is approved by exactly two voters.
% \end{corollary}
%
%Lemma\nobreakspace \ref {lem:hard_tkno} immediately gives us the following corollary.
%
%\begin{corollary}
% There does not exist any EPTAS for the \textsc{Minisum-outliers} problem.
%\end{corollary}

\subsection{Proofs of the Main Theorems}

We are now ready to present the proofs of~Theorems\nobreakspace  \ref {thm:main1} to\nobreakspace  \ref {thm:main3} . First, we recall the dichotomy for the minisum voting rule.

\DichotomyTheorem*

\begin{proof}
Since $m^* + \overline{m} = m$ and $m^* + \overline{n} = n$, the tractability results follow from~Proposition\nobreakspace \ref {prop:fpt_mn} and Lemma\nobreakspace \ref {lem:fpt_t}. Now, we only have to consider subsets of parameters $\QQ$ such that:
\begin{enumerate}[topsep=0pt,itemsep=0pt]
	\item $\QQ$ does not contain both $m^*$ and $\overline{m}$
	\item $\QQ$ does not contain both $n^*$ and $\overline{n}$
	\item $\QQ$ does not contain both $t$ and $\overline{n}$
\end{enumerate}
Among such choices of $\QQ$, we have the following cases.
\begin{enumerate}[topsep=0pt,itemsep=0pt]
	\item Suppose $t \in \QQ$. Then $\QQ$ is either a subset of $\QQ_1 = \{t,n^*,\overline{m}\}$ or a subset of $\QQ_2 = \{t,n^*,m^*\}$. The hardness for all of these cases follow from~Lemma\nobreakspace \ref {lem:hard_mbar_nstar} and~Lemma\nobreakspace \ref {lem:hard_tkno}, respectively. 
	\item Suppose $\overline{n} \in \QQ$. Then $\QQ$ is either a subset of $\QQ_1 = \{\overline{n},\overline{m}\}$ or a subset of $\QQ_2 = \{\overline{n},m^*\}$. The hardness for all of these cases follow from~Lemma\nobreakspace \ref {lem:hard_nbar_mstar} and~Lemma\nobreakspace \ref {lem:hard_o}, respectively.
	\item If neither $t$ nor $\overline{n}$ belongs to $\QQ$, then $\QQ$ is either a subset of $\QQ_1 = \{n^*,m^*\}$, or $\QQ_2 = \{n^*,\overline{m}\}$. Note that these cases are already subsumed by Case (1) above.
\end{enumerate}
This completes the proof of the theorem. 
\end{proof}

Now, we turn to the case of the \netdisapprov{} voting rule.

\DichotomyTheoremNetApproval*

\begin{proof}
Since $m^* + \overline{m} = m$ and $m^* + \overline{n} = n$, the tractability results follow from~Proposition\nobreakspace \ref {prop:fpt_mn}. Now, we only have to consider subsets of parameters $\QQ$ such that $\QQ$ does not contain both $m^*$ and $\overline{m}$, and $\QQ$ does not contain both $n^*$ and $\overline{n}$. Among such choices of $\QQ$, we have the following cases.
\begin{enumerate}[topsep=0pt,itemsep=0pt]
	\item Suppose $n^* \in \QQ$. Then $\QQ$ is either a subset of $\QQ_1 = \{n^*,\overline{m}, t\}$ or a subset of $\QQ_2 = \{n^*,m^*,t\}$. The hardness for all of these cases follow from~Lemma\nobreakspace \ref {lem:hard_mbar_nstar} and~Lemma\nobreakspace \ref {lem:hard_tkno}, respectively. 
	\item Suppose $\overline{n} \in \QQ$. Then $\QQ$ is either a subset of $\QQ_1 = \{\overline{n},\overline{m},t\}$ or a subset of $\QQ_2 = \{\overline{n},m^*,t\}$. The hardness for all these cases follow from~Lemma\nobreakspace \ref {lem:hard_o} and~Lemma\nobreakspace \ref {thm:wh_tko_net}, respectively. 
	\item If neither $n^*$ nor $\overline{n}$ belongs to $\QQ$, then $\QQ$ is either a subset of $\QQ_1 = \{t,m^*\}$, or $\QQ_2 = \{t,\overline{m}\}$. Note that these cases are already subsumed by the cases above.
\end{enumerate}
This completes the proof of the theorem. 
\end{proof}

Finally, we turn to the case of the \disapprov{} voting rules.

\NearDichotomyTheorem*

\begin{proof}
Since $m^* + \overline{m} = m$ and $m^* + \overline{n} = n$, the tractability results follow from~Proposition\nobreakspace \ref {prop:fpt_mn}. Now, we only have to consider subsets of parameters $\QQ$ such that $\QQ$ does not contain both $m^*$ and $\overline{m}$, and $\QQ$ does not contain both $n^*$ and $\overline{n}$. Among such choices of $\QQ$, we have the following cases.
\begin{enumerate}[topsep=0pt,itemsep=0pt]
	\item Suppose $n^* \in \QQ$. Then $\QQ$ is either a subset of $\QQ_1 = \{n^*,\overline{m}, t\}$ or a subset of $\QQ_2 = \{n^*,m^*,t\}$. The hardness for all of these cases follows from~Lemma\nobreakspace \ref {lem:hard_mbar_nstar} and~Lemma\nobreakspace \ref {lem:hard_tkno}, respectively. 
	\item Suppose $\overline{n} \in \QQ$. Then $\QQ$ is either a subset of $\QQ_1 = \{\overline{n},m^*,t\}$ or $\QQ_2 = \{\overline{n},\overline{m},t\}$. The hardness for all subsets of $\QQ_1$ follows from~Lemma\nobreakspace \ref {lem:hard_nbar_mstar}. The status for $\QQ_2$, as stated in the theorem, is open. The hardness for all strict subsets of $\QQ_2$ follows from~Lemma\nobreakspace \ref {lem:hard_o} and the cases that are already resolved. 
	\item If neither $n^*$ nor $\overline{n}$ belongs to $\QQ$, then $\QQ$ is either a subset of $\QQ_1 = \{t,m^*\}$, or $\QQ_2 = \{t,\overline{m}\}$. Note that these cases are already subsumed by the cases above.
\end{enumerate}
This completes the proof of the theorem. 
\end{proof}

\section{Approximation Results}

In this section, we describe our results in the context of approximation algorithms, where our goal is to minimize the target score, given a committee size $m^*$ and a budget $\overline{n}$ for the number of outliers as a part of input. In the first part, we focus on the minisum voting rule, and show an $\varepsilon \overline{m}$ approximation algorithm, for every constant $\eps>0$. We also have a $(1 + \eps)$-approximation algorithm whose running time is $n^{O\left(\nfrac{\log m}{\eps^2}\right)}$, for every constant $\eps>0$. Subsequently, we show that for all the voting rules, the target score is \NPH{} to approximate within any factor, as it is \NPH{} already to determine if $t = 0$. 

\subsection{Approximation Algorithms}

We now turn to some approximation algorithms for the \textsc{Minisum-Outliers} problem. Our first result is the following.

\begin{restatable}{theorem}{MinisumApprox}\label{thm:Minisum_approx}
 There is a $\eps \overline{m}$-approximation algorithm for the \textsc{Minisum-outliers} problem, for every constant $\eps>0$.
\end{restatable}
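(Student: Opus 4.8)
The plan is to observe that the only "hard" part of the \textsc{Minisum-outliers} problem is deciding \emph{which} votes to discard; once the set of non-outliers $\VV^*$ is fixed, the optimal committee of size $m^*$ is computable in polynomial time (this is exactly the winner-determination observation cited before Proposition~\ref{prop:fpt_mn}). So a natural strategy is a two-level approach: guess a small "seed" of the committee, and then for that seed, greedily select the best $n^*$ votes and complete the committee optimally. Concretely, fix a constant $\eps > 0$ and let $\ell = \lceil 1/\eps \rceil$ (a constant). The algorithm iterates over all $O(m^\ell)$ subsets $A \subseteq \CC$ with $|A| \le \ell$, treating each such $A$ as the portion of the committee lying in $\overline{\CC}$... rather: as a partial guess of which candidates are \emph{in} the committee. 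For each guess, it then needs to extend $A$ to a full committee of size $m^*$ and choose the $n^*$ retained votes so as to (approximately) minimize the minisum score.

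The key structural point I would exploit is that the minisum disapproval of a vote $\SSS$ from a committee $\XX$ is $|\XX \triangle \SSS| = |\XX| + |\SSS| - 2|\XX \cap \SSS|$, so for a committee of fixed size $m^*$, minimizing $h(\VV^*,\XX)$ over the choice of $\XX$ and of $n^*$ retained votes is equivalent to \emph{maximizing} a coverage-type objective $\sum_{\SSS \in \VV^*}(2|\XX\cap\SSS| - |\SSS|)$. The rough idea: if $\overline{m}$ is small the error we can tolerate, $\eps\overline{m}\cdot\mathrm{OPT}$, is correspondingly small, but in that regime the trivial/greedy completion is already good; if $\overline{m}$ is large, we can afford a cruder approximation per vote. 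More precisely, I would argue that $\mathrm{OPT} \ge$ (something like) $\frac{n^*(\overline{m} - \text{guessed overlap})}{\ell}$ on the hard instances, so that a greedy committee missing at most $\ell$ of the "profitable" candidates per vote incurs additive error at most $\overline{m}/\ell \le \eps\overline{m}$ per retained vote, i.e.\ total error at most $\eps\overline{m}\cdot n^* \le \eps\overline{m}\cdot\mathrm{OPT}$ provided $\mathrm{OPT}\ge n^*$ — and the complementary case $\mathrm{OPT} < n^*$ is handled separately (e.g.\ by the exact FPT routine of Lemma~\ref{lem:fpt_t}, which runs in polynomial time when $t < n^* = n - \overline{n}$, since then $t+\overline{n} < n$ and... actually one must be careful here, so this case split needs the bound $t = \mathrm{OPT} = O(\log)$-ish or a direct argument). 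So the algorithm outputs, over all guesses $A$, the best committee-plus-outlier-set found, each evaluated in polynomial time.

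The main obstacle I anticipate is making the case analysis on the size of $\mathrm{OPT}$ (equivalently $\overline{m}$) clean: one needs to show that either $\mathrm{OPT}$ is large enough that an additive error of $\eps\overline{m}$ per retained vote is absorbed into the multiplicative factor $\eps\overline{m}\cdot\mathrm{OPT}$, or else $\mathrm{OPT}$ is small enough to be computed exactly in polynomial time. Pinning down the exact threshold and the exact greedy guarantee — in particular showing that guessing a constant-size seed $A$ suffices to bring the per-vote overlap deficit below $\overline{m}/\ell$ against the optimal committee — is the delicate combinatorial heart of the argument; everything else (enumerating the $O(m^{1/\eps})$ seeds, computing the optimal committee and the best $n^*$ votes for a fixed seed in polynomial time) is routine.
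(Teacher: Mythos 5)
Your proposal has the right high-level shape (a case split on the size of $\mathrm{OPT}$ together with an $O(m^{1/\eps})$ enumeration), but both halves of the split are left as acknowledged gaps, and the specific mechanisms you suggest would not close them. In the ``large $\mathrm{OPT}$'' case, the entire burden rests on the unproven claim that guessing a constant-size seed $A$ of the committee lets a greedy completion lose at most $\overline{m}/\ell$ per retained vote, backed by a bound of the form $\mathrm{OPT}\ge n^*(\overline{m}-\text{overlap})/\ell$; you give no argument for either, and it is not clear they hold. The paper avoids this machinery entirely: when $\mathrm{OPT}\ge n^*/\eps$ (note the threshold is $n^*/\eps$, not $n^*$), it simply takes \emph{any} $n^*$ votes and the $m^*$ candidates most approved among them, and a direct pairing argument (each committee candidate's disapprovals plus a distinct non-committee candidate's approvals total at most $n^*$) bounds the score unconditionally by $m^*n^*+(m-2m^*)n^*=\overline{m}n^*\le \eps\overline{m}\cdot\mathrm{OPT}$. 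No seed, no per-vote additive bookkeeping --- which also sidesteps the fact that your ``$\mathrm{OPT}$ plus additive $\eps\overline{m}$ per vote'' accounting only yields a $(1+\eps\overline{m})$ factor rather than $\eps\overline{m}$.

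The ``small $\mathrm{OPT}$'' case is a genuine hole as you state it: falling back on Lemma~\ref{lem:fpt_t} costs $O^*(2^{t+\overline{n}})$ time, which is not polynomial when $t<n^*$, and $\mathrm{OPT}<n^*$ is in any event far too weak a bound to enumerate against. The paper's fix is to set the threshold at $\mathrm{OPT}\le n^*/\eps$, so that by pigeonhole some \emph{non-outlier vote} lies within Hamming distance $1/\eps$ of the optimal committee; one then guesses that vote ($n$ choices) and enumerates the $O(m^{1/\eps})$ committees within distance $1/\eps$ of it, choosing for each the $n^*$ closest votes --- solving this case exactly in polynomial time. Guessing a few committee \emph{candidates}, as you propose, does not pin down the committee and comes with no completion guarantee, whereas guessing a nearby \emph{vote} does. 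So the delicate combinatorial heart you flagged is exactly what is missing, and the paper's resolution of it is materially different from the route you sketch.
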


Our next result is an approximation scheme with a subexponential running time, loosely based on the framework introduced in~\cite{LKLB14}. To this end, we will need the following lemma. 

\begin{lemma}\label{lem:prob}
 Let $\eps$ be any positive constant. Then, given a set of votes $\mathcal{V}$, there exists a subset $\mathcal{V}^\prime \subset \mathcal{V}$ of size $O\left(\frac{\log m}{\eps^2}\right)$ such that $h(\mathcal{V}, c(\mathcal{V}^\prime)) \le (1+\eps) h(\mathcal{V}, c(\mathcal{V}))$, where $c(\mathcal{V})$ and $c(\mathcal{V}^\prime)$ are the committees chosen by Minisum voting rule on $\mathcal{V}$ and $\mathcal{V}^\prime$ respectively.
\end{lemma}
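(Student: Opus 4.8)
The plan is to obtain $\VV'$ by random sampling and argue that a good sample exists with positive probability. The starting point is that the Minisum rule has a simple closed form. For a committee $\XX$ with $|\XX|=m^*$ and a vote $\SSS$ we have $h(\XX,\SSS)=|\XX|+|\SSS|-2|\XX\cap\SSS|$, so $h(\XX,\VV)=nm^*+\sum_{\SSS\in\VV}|\SSS|-2\sum_{c\in\XX}n_c$, where $n_c$ is the number of votes in $\VV$ approving $c$. Hence $c(\VV)$ is precisely the set $A$ of the $m^*$ candidates with the largest values $n_c$ (ties broken by a fixed rule), and $c(\VV')$ is the set $B$ of the $m^*$ candidates with the largest approval counts \emph{within} $\VV'$. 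Writing $p_c:=n_c/n$ and $\hat p_c$ for the fraction of votes in $\VV'$ approving $c$, the same identity gives $h(\VV,B)-h(\VV,A)=2n\big(\sum_{c\in A}p_c-\sum_{c\in B}p_c\big)$ and $h(\VV,A)=n\sum_{c\in A}(1-p_c)+n\sum_{c\notin A}p_c$.

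Next I would set $\delta:=\eps/(4+2\eps)=\Theta(\eps)$ and draw $\VV'\subseteq\VV$ uniformly at random among subsets of size $\ell:=\lceil\ln(4m)/(2\delta^2)\rceil=O(\log m/\eps^2)$ (if $n<\ell$, take $\VV'=\VV$). By Hoeffding's inequality, which holds also for sampling without replacement, $\Pr[|\hat p_c-p_c|>\delta]\le 2e^{-2\ell\delta^2}$ for each $c$, so a union bound over the $m$ candidates shows that with probability strictly greater than $1/2$ the sample is \emph{$\delta$-accurate}: $|\hat p_c-p_c|\le\delta$ for all $c\in\CC$. It therefore suffices to prove that every $\delta$-accurate $\VV'$ satisfies $h(\VV,c(\VV'))\le(1+\eps)\,h(\VV,c(\VV))$.

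Fix a $\delta$-accurate $\VV'$, let $A=c(\VV)$, $B=c(\VV')$, and $r:=|A\setminus B|=|B\setminus A|$ (assume $r\ge1$). The key combinatorial step is a bijection between $A\setminus B$ and $B\setminus A$ pairing comparable candidates: order $A\setminus B=\{\alpha_1,\dots,\alpha_r\}$ and $B\setminus A=\{\beta_1,\dots,\beta_r\}$ each by decreasing $\hat p$. Since $B$ consists of the $m^*$ candidates of largest $\hat p$ while $\alpha_1,\dots,\alpha_r\notin B$, every $\beta_i\in B$ has $\hat p_{\beta_i}\ge$ the $m^*$-th largest empirical frequency $>\hat p_{\alpha_1}\ge\hat p_{\alpha_i}$; combined with $\delta$-accuracy, $p_{\beta_i}\ge\hat p_{\beta_i}-\delta>\hat p_{\alpha_i}-\delta\ge p_{\alpha_i}-2\delta$. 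Also $\alpha_i\in A$ and $\beta_i\notin A$, and $A$ consists of the $m^*$ candidates with largest $p_c$, so $p_{\alpha_i}\ge p_{\beta_i}$; hence $0\le p_{\alpha_i}-p_{\beta_i}<2\delta$ for every $i$. Two estimates now close the argument. First, $h(\VV,B)-h(\VV,A)=2n\sum_{i=1}^r(p_{\alpha_i}-p_{\beta_i})<4rn\delta$. Second, keeping in $h(\VV,A)=n\sum_{c\in A}(1-p_c)+n\sum_{c\notin A}p_c$ only the terms indexed by the $\alpha_i$'s (in $A$) and the $\beta_i$'s (outside $A$), $h(\VV,A)\ge n\sum_{i=1}^r\big((1-p_{\alpha_i})+p_{\beta_i}\big)\ge n\sum_{i=1}^r\big((1-p_{\alpha_i})+(p_{\alpha_i}-2\delta)\big)=rn(1-2\delta)$. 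Dividing, $h(\VV,B)-h(\VV,A)<\tfrac{4\delta}{1-2\delta}\,h(\VV,A)=\eps\,h(\VV,A)$ by the choice of $\delta$; and if $h(\VV,A)=0$ the second estimate forces $r=0$, i.e.\ $c(\VV')=c(\VV)$, so the bound is trivial.

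The one point that needs care — and what I would flag as the crux — is the second estimate $h(\VV,A)\ge rn(1-2\delta)$: one has to notice that any candidate which $\VV'$ mis-ranks relative to $\VV$ necessarily sits within $2\delta$ of the selection threshold, hence is approved by a number of voters bounded away from both $0$ and $n$, so each of the $r$ mistakes already costs $\Theta(n)$ in the true objective while inflating it by only $O(n\delta)$. This is precisely what yields a purely multiplicative $(1+\eps)$ guarantee with a sample size independent of $m^*$ and $n$; everything else is the routine concentration/union-bound bookkeeping outlined above.
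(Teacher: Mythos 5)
Your proof is correct and follows essentially the same route as the paper's: sample $O(\log m/\eps^2)$ votes so that, with positive probability, every candidate's approval fraction is estimated within $\Theta(\eps)$, and then charge the additive loss of order $rn\eps$ caused by the $r$ swapped candidates against a lower bound of order $rn$ on $h(\mathcal{V},c(\mathcal{V}))$. The only differences are cosmetic: you prove the concentration step directly via Hoeffding (valid without replacement) rather than citing an external sampling theorem, and your pairing bound $h(\mathcal{V},c(\mathcal{V}))\ge rn(1-2\delta)$ cleanly unifies, and slightly sharpens, the paper's two-case analysis at the threshold $n/4$, avoiding its final rescaling of $\eps$ (your one imprecision, the strict inequality $\hat p_{\beta_i}>\hat p_{\alpha_1}$, should be non-strict under ties, but the argument only needs $p_{\alpha_i}-p_{\beta_i}\le 2\delta$, so nothing breaks).
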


\begin{proof}
 We prove the existence of a subset $\mathcal{V}^\prime \subset \mathcal{V}$ of size $O(\nfrac{\log m}{\eps^2})$ such that $h(\mathcal{V}, c(\mathcal{V}^\prime)) \le (1+20\eps) h(\mathcal{V}, c(\mathcal{V}))$. Applying the weaker inequality with $\eps^\prime = \frac{\eps}{20}$ then proves the Lemma. We assume without loss of generality that $\eps \le \frac{1}{40}.$ We prove the statement using probabilistic methods. We pick $r$ votes from $\mathcal{V}$ uniformly at random to form the set $\mathcal{V}^\prime$. Let $c \in \mathcal{C}$ be any arbitrary candidate. Let $\mathbbm{1}_{\mathcal{V}}(c)$ and $\mathbbm{1}_{\mathcal{V}^\prime}(c)$ be the number of approvals that the candidate $c$ receives from the votes in $\mathcal{V}$ and $\mathcal{V}^\prime$ respectively. There exist an $r=O(\nfrac{\log m}{\eps^2})$ such that the following holds (Theorem 5 in~\cite{DeyB15}). 
 \[ \Pr \left[ \forall c\in \mathcal{C}, \left|\frac{n}{r}\mathbbm{1}_{\mathcal{V}^\prime}(c) - \mathbbm{1}_{\mathcal{V}}(c)\right| \le \eps n \right] > 0 \] 
 Now assume that for every candidate $c\in \mathcal{C}$, we have $\left|\frac{n}{r}\mathbbm{1}_{\mathcal{V}^\prime}(c) - \mathbbm{1}_{\mathcal{V}}(c)\right| \le \eps n$ and the statement above shows that there exists such a set $\mathcal{V}^\prime$. With this assumption, we now prove the result. Suppose $c(\mathcal{V}^\prime) \setminus c(\mathcal{V}) = \{z_i: i\in[\ell]\}$. Then there exists exactly $\ell$ many candidates $z_i^\prime, i\in [\ell]$  such that, $c(\mathcal{V}) \setminus c(\mathcal{V}^\prime) = \{z_i^\prime: i\in [\ell]\}$ since both $c(\mathcal{V})$ and $c(\mathcal{V}^\prime)$ are committees of same size $m^*$. The assumption that for every candidate $c\in \mathcal{C}$, we have $\left|\frac{n}{r}\mathbbm{1}_{\mathcal{V}^\prime}(c) - \mathbbm{1}_{\mathcal{V}}(c)\right| \le \eps n$ gives us the following: 
 \[ h(\mathcal{V}, c(\mathcal{V}^\prime)) \le h(\mathcal{V}, c(\mathcal{V})) + 4\eps\ell n \] 
 Now we consider the following two cases.
 \begin{itemize}
  \item Case 1: $\mathbbm{1}(z_i^\prime) \le \frac{n}{4}$ for every $i\in[\ell]:$
 
 We have $h(\mathcal{V}, c(\mathcal{V})) \ge \frac{3n}{4}\ell$ and thus $h(\mathcal{V}, c(\mathcal{V}^\prime)) \le (1+\frac{16}{9}\eps) h(\mathcal{V}, c(\mathcal{V})).$
  \item Case 2: $\mathbbm{1}(z_i^\prime) \ge \frac{n}{4}$ for some $i\in[\ell]:$
 
 For every $j\in[\ell]$, the candidate $z_j$ is in the committee $c(\mathcal{V}^\prime)$ implies the following.
\begin{eqnarray*}
  \mathbbm{1}(z_i) &\ge& \frac{n}{4} - 2\eps n \text{ for every } i\in[\ell]\\
  \Rightarrow h(\mathcal{V}, c(\mathcal{V})) &\ge& (\frac{n}{4} - 2\eps n)\ell\\
  &\ge& \frac{n}{5}\ell\\
  \Rightarrow h(\mathcal{V}, c(\mathcal{V}^\prime)) &\le& (1+20\eps) h(\mathcal{V}, c(\mathcal{V}))
 \end{eqnarray*}
 The third line follows from the the assumption that $\eps \le \frac{1}{40}.$
 \end{itemize}
\end{proof}

We now turn to our approximation algorithm. 

\begin{theorem}\label{thm:Minisum_as}
 There is a $(1+\eps)$-approximation algorithm for the \textsc{Minisum-outliers} problem running in time $n^{O\left(\nfrac{\log m}{\eps^2}\right)}$, for every constant $\eps>0$.
\end{theorem}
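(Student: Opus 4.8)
The plan is to combine the sampling guarantee of Lemma~\ref{lem:prob} with a brute-force search over small subsets of votes. The key structural observation is this: in any optimal solution $(\mathcal{C}_{OPT},\mathcal{V}^*_{OPT})$ to \textsc{Minisum-outliers}, the committee $\mathcal{C}_{OPT}$ must be a minisum winner on the vote set $\mathcal{V}^*_{OPT}$ \emph{by itself} --- if some size-$m^*$ committee $X$ had $h(\mathcal{V}^*_{OPT},X) < h(\mathcal{V}^*_{OPT},\mathcal{C}_{OPT})$, then $(X,\mathcal{V}^*_{OPT})$ would be a strictly better feasible solution, a contradiction. Hence, writing $c(\cdot)$ for the minisum committee of size $m^*$ and $OPT$ for the optimal target score, we have $h(\mathcal{V}^*_{OPT}, c(\mathcal{V}^*_{OPT})) = OPT$.

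Now apply Lemma~\ref{lem:prob} to the vote set $\mathcal{V}^*_{OPT}$: there is a subset $\mathcal{V}' \subseteq \mathcal{V}^*_{OPT}$ with $|\mathcal{V}'| = O(\nfrac{\log m}{\eps^2})$ such that $h(\mathcal{V}^*_{OPT}, c(\mathcal{V}')) \le (1+\eps)\, h(\mathcal{V}^*_{OPT}, c(\mathcal{V}^*_{OPT})) = (1+\eps)\, OPT$. Crucially, $\mathcal{V}'$ is a subset of the \emph{full} vote set $\mathcal{V}$, so it is one of the objects the algorithm below will inspect. (If $n^* \le O(\nfrac{\log m}{\eps^2})$ one just takes $\mathcal{V}' = \mathcal{V}^*_{OPT}$ directly, which is again covered.)

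The algorithm: let $r = O(\nfrac{\log m}{\eps^2})$ be the bound from Lemma~\ref{lem:prob}. For every subset $\mathcal{V}' \subseteq \mathcal{V}$ with $|\mathcal{V}'| \le r$, compute the minisum committee $c(\mathcal{V}')$ (for minisum this is simply the $m^*$ candidates receiving the most approvals among the votes in $\mathcal{V}'$, so it is polynomial-time computable), and then compute $\mathrm{val}(\mathcal{V}') := \min\{\, h(\mathcal{W}, c(\mathcal{V}')) : \mathcal{W}\subseteq\mathcal{V},\ |\mathcal{W}| \ge n^*\,\}$, which is obtained greedily by discarding from $\mathcal{V}$ the $\overline{n}$ votes $\mathcal{S}$ with the largest $h(\mathcal{S}, c(\mathcal{V}'))$. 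Output the smallest value $\mathrm{val}(\mathcal{V}')$ found, together with the corresponding committee and greedily chosen non-outlier set. Every pair produced this way is a feasible solution, so the output is at least $OPT$. For the matching upper bound, take the specific $\mathcal{V}'$ from the paragraph above: since $\mathcal{V}^*_{OPT}$ (of size $n^*$) is a valid non-outlier set for the committee $c(\mathcal{V}')$, we get $\mathrm{val}(\mathcal{V}') \le h(\mathcal{V}^*_{OPT}, c(\mathcal{V}')) \le (1+\eps)\,OPT$, so the output is at most $(1+\eps)\,OPT$. The number of subsets of size at most $r$ is $\sum_{i\le r}\binom{n}{i} = n^{O(\nfrac{\log m}{\eps^2})}$, and each is processed in polynomial time, giving the claimed running time.

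The only real content is the structural claim that the optimal committee is forced to be the minisum winner on the optimal non-outlier set (so that Lemma~\ref{lem:prob} is applicable), together with the ``nesting'' observation $\mathcal{V}' \subseteq \mathcal{V}^*_{OPT} \subseteq \mathcal{V}$, which lets us guess $\mathcal{V}'$ globally and then re-optimize the non-outlier set to do at least as well as $\mathcal{V}^*_{OPT}$. Everything else --- polynomial-time minisum winner determination, greedy choice of outliers, and the subset-counting bound --- is routine.
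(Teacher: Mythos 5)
Your proposal is correct and matches the paper's proof in all essentials: guess a subset $\mathcal{V}'$ of size $O(\nfrac{\log m}{\eps^2})$ among all of $\mathcal{V}$, invoke Lemma~\ref{lem:prob} on the optimal non-outlier set, and then greedily pick the $n^*$ votes closest to $c(\mathcal{V}')$. The only cosmetic difference is that the paper re-optimizes the committee on the chosen non-outliers (returning $c(\mathcal{V}'')$) while you keep $c(\mathcal{V}')$; both yield the same $(1+\eps)$ guarantee by the same chain of inequalities.
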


\begin{proof}
 Let $\mathcal{V}^* \subset \mathcal{V}$ be the set of non-outliers in the optimal solution. We apply Lemma\nobreakspace \ref {lem:prob} to the set $\mathcal{V}^*$. The algorithm guesses the set $\mathcal{V}^\prime$ by trying all possible $n^{O(\nfrac{\log m}{\eps^2})}$ subsets of $\mathcal{V}$ (since we do not know $\mathcal{V}^*$). Let $c(\mathcal{V}^\prime)$ be the Minisum committee of the votes in $\mathcal{V}^\prime.$ The algorithm returns the set $\mathcal{V}^{\prime\prime}$ of the $n^*$ votes that have the smallest Hamming distances from $c(\mathcal{V}^\prime)$ and the committee $c(\mathcal{V}^{\prime\prime})$. Then we have the following.
 \[ h(c(\mathcal{V}^{\prime\prime}), \mathcal{V}^{\prime\prime}) \le h(c(\mathcal{V}^\prime), \mathcal{V}^{\prime\prime}) \le h(c(\mathcal{V}^\prime), \mathcal{V}^*) \le (1+\eps)h(c(\mathcal{V}^*), \mathcal{V}^*) \]
 The last inequality follows from Lemma\nobreakspace \ref {lem:prob}.
\end{proof}

\subsection{Hardness of Approximation}

In the previous section, we demonstrated some approaches for approximating the target score in the context of minisum voting rule. We now show that the other voting rules are inapproximable in a very strong sense. Our first hardness result is that it is \NPH{} to check even if there is a subset of outliers and a corresponding committee for which the \netdisapprov{} score is zero. This follows from Lemma~\ref{lem:hard_o}.

\begin{corollary}
\label{thm:no_apx_net}
 There does not exist any polynomial time $\alpha(m,n)$-approximation algorithm for the \textsc{\Netdisapprov{}-outliers} problem for any computable function $\alpha(\cdot,\cdot)$, unless $\Pshort=\NPshort$.
\end{corollary}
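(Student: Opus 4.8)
The plan is to combine the \NPH{}ness that is implicit in Lemma~\ref{lem:hard_o} with a standard gap argument, tailored to the fact that \netdisapprov{} scores need not be nonnegative. First I would observe that the \textsc{Clique}-based reduction constructed in the proof of Lemma~\ref{lem:hard_o} for the \netdisapprov{} rule is a polynomial-time many-one reduction, and that \textsc{Clique} on regular graphs is \NPC{}; hence the following decision problem is \NPH{}: given an instance of \textsc{\Netdisapprov{}-outliers}, is there a committee $\mathcal{C}^*$ of size $m^*$ and a non-outlier set $\mathcal{V}^*$ of size at least $n^*$ whose net disapproval score is at most $0$? The crucial feature of that reduction, already exploited in Lemma~\ref{lem:hard_o}, is the \emph{gap}: a yes-instance admits a feasible solution of net disapproval score exactly $0$, so $OPT \le 0$, whereas in a no-instance \emph{every} feasible solution has score at least $1$ (scores are integral), so $OPT \ge 1$.

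Next I would show that a polynomial-time $\alpha(m,n)$-approximation algorithm $\mathcal{A}$ for minimizing the net disapproval score would decide this \NPH{} problem in polynomial time, contradicting $\Pshort \ne \NPshort$. Given an instance $\mathcal{I}$ produced by the reduction, run $\mathcal{A}$ on $\mathcal{I}$ and compute the net disapproval score $v$ of the committee and non-outlier set it returns. Since the returned pair is feasible, $v \ge OPT(\mathcal{I})$ always, and the approximation guarantee gives $v \le \alpha(m,n)\cdot OPT(\mathcal{I})$ with $\alpha(m,n)\ge 1$. In a yes-instance, $OPT(\mathcal{I}) \le 0$ together with $\alpha(m,n) \ge 1$ forces $v \le \alpha(m,n)\cdot OPT(\mathcal{I}) \le OPT(\mathcal{I}) \le 0$; in a no-instance, $v \ge OPT(\mathcal{I}) \ge 1$. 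Therefore $\mathcal{I}$ is a yes-instance if and only if $v \le 0$, a test that takes polynomial time.

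The argument is short, and the one point that needs a little care — which I would flag as the main (minor) obstacle — is precisely that the \netdisapprov{} score is not sign-definite, so the textbook slogan ``an $\alpha$-approximation on a $0$-optimum instance must output $0$'' has to be replaced by the observation that, because $\alpha(m,n)\ge 1$, an $\alpha$-approximation on \emph{any} instance with $OPT \le 0$ must output a value $\le 0$ (indeed it is forced to output the exact optimum whenever $OPT<0$). The remaining ingredients — that \textsc{Clique} on regular graphs is \NPC{}, and that the construction in Lemma~\ref{lem:hard_o} runs in polynomial time — are routine. I would also remark that computability of $\alpha(\cdot,\cdot)$ is never actually used: the reduction never evaluates $\alpha$, it only runs $\mathcal{A}$ and inspects the score of the returned solution.
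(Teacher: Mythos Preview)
Your argument is correct and is precisely the intended one: the paper simply records the corollary as following from Lemma~\ref{lem:hard_o}, and you have spelled out that implication. One tiny imprecision: you write that a yes-instance admits a solution of net disapproval score ``exactly $0$'', but the reduction in Lemma~\ref{lem:hard_o} only guarantees a solution of score $\le 0$ (and in fact the optimum can be strictly negative there); since you immediately use only $OPT\le 0$, this does not affect anything. Your remarks about the sign issue and about computability of $\alpha$ being unnecessary are both valid side observations.
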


Next, we show that it is \NPH{} to check even if there is a subset of outliers and a corresponding committee for which the \disapprov{} score is zero. For this, we reduce from a problem closely related to \textsc{Max-Clique}, namely that of finding a largest-sized biclique in a graph $G$, which is known to be \NPC{}~\cite{garey1979computers,Johnson}.

\defproblem{Biclique}
{An undirected bipartite graph $\mathcal{G} = (V_L, V_R, E)$ and an integer $k$}
{Do there exist two subsets of vertices $A\subset V_L$ and $B\subset V_R$ with $|A|=|B|=k$ such that $(A,B)$ forms a biclique?}

\begin{lemma}\label{thm:no_apx_other}
 There does not exist any polynomial time $\alpha(m,n)$-approximation algorithm for \textsc{\Disapprov{}-outliers} for any computable function $\alpha(\cdot,\cdot)$, unless $\Pshort=\NPshort$.
\end{lemma}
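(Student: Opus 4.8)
The plan is to reduce from \textsc{Biclique}, mirroring the style of Corollary~\ref{thm:no_apx_net}: since approximating the target score to within any computable factor would in particular let us decide whether the optimal score is $0$, it suffices to show that deciding whether the \textsc{\Disapprov{}-outliers} problem has a solution with score exactly $0$ is \NPH{}. Recall that for the disapproval rule the score of a committee $\CC^*$ against a vote $\SSS$ is $|\CC^* \setminus \SSS|$ (the top-row variant in Table~\ref{table:approvalscores}). So a score of $0$ means: every non-outlier vote $\SSS \in \VV^*$ approves \emph{all} of $\CC^*$, i.e. $\CC^* \subseteq \SSS$. Thus the zero-score question is precisely: is there a set of $m^*$ candidates that is simultaneously approved by at least $n^*$ of the votes?

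Given a \textsc{Biclique} instance $(\GG = (V_L, V_R, E), k)$, I would build an election whose candidates are the vertices of $V_L$ and whose votes are indexed by $V_R$, with vote $\SSS_v$ (for $v \in V_R$) approving exactly the neighbourhood $N(v) \subseteq V_L$. Set $m^* = k$, $n^* = k$ (equivalently $\overline{n} = |V_R| - k$), and $t = 0$. Then a committee $\CC^* \subseteq V_L$ of size $k$ together with $k$ non-outlier votes $\{\SSS_v : v \in B\}$, $B \subseteq V_R$, $|B| = k$, achieves disapproval score $0$ iff $\CC^* \subseteq N(v)$ for every $v \in B$, i.e. iff $(\CC^*, B)$ is a biclique in $\GG$. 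This gives a polynomial-time many-one reduction, establishing that deciding whether the minimum disapproval score is $0$ is \NPH{}.

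To finish: suppose, for contradiction, that there is a polynomial-time $\alpha(m,n)$-approximation algorithm $\mathcal{A}$ for \textsc{\Disapprov{}-outliers} for some computable $\alpha$. Run $\mathcal{A}$ on the instance above; if $OPT = 0$ then $\mathcal{A}$ must return a solution of value $\le \alpha(m,n)\cdot 0 = 0$, and if $OPT \ge 1$ then $\mathcal{A}$ returns a value $\ge 1$. Hence we can distinguish $OPT = 0$ from $OPT \ge 1$ in polynomial time, deciding \textsc{Biclique} — contradicting $\Pshort \ne \NPshort$. (One subtlety worth a line: if $\mathcal{A}$ is only guaranteed to output a \emph{feasible} solution and a bound, feasibility of the constructed instance is automatic since $n^* \le |V_R|$, so there is always some choice of outliers and committee; and the value it reports is $0$ exactly when a valid biclique exists.)

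The step that needs the most care is simply verifying the exact arithmetic of the reduction — that score $0$ corresponds precisely to the biclique condition and nothing is lost by the asymmetry between the two disapproval variants; by the remark following the problem definition and Lemma~\ref{lem:translate-disapprov}, the other variant ($|\SSS \setminus \CC^*|$) is handled symmetrically. No genuinely hard obstacle is expected here: the construction is essentially the ``incidence'' encoding already used in Lemmas~\ref{lem:hard_mbar_nstar} and~\ref{lem:hard_o}, specialised so that the gap between a yes-instance (score $0$) and a no-instance (score $\ge 1$) is what rules out every multiplicative approximation.
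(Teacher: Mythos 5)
Your proposal is correct and follows essentially the same route as the paper: a reduction from \textsc{Biclique} with $m^*=n^*=k$ and $t=0$, where votes on one side of the bipartition approve the neighbourhoods on the other side, so that disapproval score $0$ holds iff the chosen committee and non-outliers form a $k\times k$ biclique, and any $\alpha(m,n)$-approximation would then decide the zero-versus-positive gap. The only (immaterial) difference is that you swap the roles of $V_L$ and $V_R$ relative to the paper's construction.
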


\begin{proof}
We will show that deciding whether the optimal disapproval score is zero or not, is \NPC, thereby implying the result. The problem is clearly in \NP. To prove \NP-hardness, we reduce the \textsc{Biclique} problem to the \textsc{\Disapprov{}-outliers} problem. Let $(\mathcal{G}=(V_L, V_R, E), k)$ be an arbitrary instance of the \textsc{Biclique} problem. Let $V_L = \{u_1, \cdots, u_n\}$ and $V_R = \{u_1^\prime, \cdots, u_m^\prime\}$. We define the corresponding instance $(\mathcal{V}, \mathcal{C}, n^*, m^*, t)$ of the \textsc{\Disapprov{}-outliers} problem as follows: 
 \[ \mathcal{V} = \{\mathcal{S}_1, \cdots, \mathcal{S}_n\}, \mathcal{C} = \{c_1, \cdots, c_m\}, \]
 \[ \mathcal{S}_i = \{ c_j : (u_i, u_j^\prime) \in E \}~\forall i\in [n],n^* = k, m^* = k, t = 0 \] 
 We claim that the two instances are equivalent. In the forward direction, suppose $A\subset V_L$ and $B\subset V_R$ with $|A|=|B|=k$ forms a biclique. Consider the set of non-outliers to be $\mathcal{V}^* = \{ \mathcal{S}_i : u_i \in A \}$ and the committee $\mathcal{C}^* = \{ c_j : u_j^\prime \in B \}$. This achieves \disapprov{} score of $0$.
 
 In the reverse direction, suppose there exist a set of non-outliers $\mathcal{V}^*$ and a committee $\mathcal{C}^*$ which achieves disapproval score of $0$. Without loss of generality, we assume that $|\mathcal{V}^*|=k$ since removing votes from the set of non-outliers can only reduce the score. Consider $A = \{u_i : \mathcal{S}_i \in \mathcal{V}^*\} \subset V_L$ and $B = \{ c_j : c_j \in \mathcal{C}^* \}\subset V_R$. We claim that $A$ and $B$ must form a biclique. Indeed, otherwise the \disapprov{} score would be strictly more than $0$.
\end{proof}

\section{Concluding Remarks and Future Directions}
\label{sec:conclusions}

We have studied the problem of determining the a committee that is good for a large portion of the electorate, based on some common voting rules. We begin this is section by showing that our choice of the notion of disapproval was an arbitrary one, by demonstrating the following lemma. We use the notation \disapprov{}$^\prime$ to denote the voting rule given by the corresponding bottom row in Table\nobreakspace \ref {table:approvalscores}.

\begin{restatable}{lemma}{Equivalence}\label{lem:translate-disapprov}
 There is a polynomial time approximation preserving reduction from \textsc{\Disapprov{}-outliers} problem to \textsc{\Disapprov{}$^\prime$-outliers} problem and vice-versa. Moreover, the reductions are such that $m^*$ in the original instance becomes $\overline{m}$ in the reduced instance, $\overline{m}$ in the original instance becomes $m^*$ in the reduced instance, and every other parameter remain exactly same. 
\end{restatable}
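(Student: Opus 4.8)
The plan is to use a single transformation --- replacing every vote by its complement with respect to the candidate set --- together with the observation that, once the committee is also complemented, the \Disapprov{}-distance $|\XX \setminus \SSS|$ turns into the \Disapprov{}$^\prime$-distance $|\SSS \setminus \XX|$. Concretely, given an instance $(\mathcal{V}, \mathcal{C}, n^*, m^*, t)$ of \textsc{\Disapprov{}-outliers}, I would output the instance $(\mathcal{V}', \mathcal{C}, n^*, \overline{m}, t)$ of \textsc{\Disapprov{}$^\prime$-outliers}, where $\mathcal{V}' = \{\mathcal{S}_1^c, \dots, \mathcal{S}_n^c\}$ with $\mathcal{S}_i^c = \mathcal{C} \setminus \mathcal{S}_i$, and $\overline{m} = m - m^*$ is the new committee size. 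The candidate set, the bound $n^*$ on non-outliers, the implied bound $\overline{n}$ on outliers, and the target $t$ are all left unchanged; thus $m^*$ and $\overline{m}$ are interchanged and every other parameter is preserved, exactly as claimed. The construction is clearly polynomial time.

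First I would set up the natural bijection: a committee $\mathcal{X} \subseteq \mathcal{C}$ of size $m^*$ in the original instance corresponds to the committee $\mathcal{Y} = \mathcal{C} \setminus \mathcal{X}$ of size $\overline{m}$ in the new instance, and a subset $\mathcal{V}^* \subseteq \mathcal{V}$ of non-outliers corresponds to the subset $\{\mathcal{S}_i^c : \mathcal{S}_i \in \mathcal{V}^*\}$ of $\mathcal{V}'$ (the same index set, so the constraint $|\mathcal{V}^*| \ge n^*$ transfers verbatim). The core computation is the pointwise identity
\[
  |\mathcal{S}_i^c \setminus \mathcal{Y}| \;=\; \bigl|(\mathcal{C} \setminus \mathcal{S}_i) \cap \mathcal{X}\bigr| \;=\; |\mathcal{X} \setminus \mathcal{S}_i|,
\]
which holds for every $i$ and every committee $\mathcal{X}$ (using $(\mathcal{C}\setminus\mathcal{X})^c = \mathcal{X}$). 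Summing over $\mathcal{S}_i \in \mathcal{V}^*$ shows that the \Disapprov{}$^\prime$-score of $(\mathcal{Y}, \{\mathcal{S}_i^c : \mathcal{S}_i \in \mathcal{V}^*\})$ equals the \Disapprov{}-score of $(\mathcal{X}, \mathcal{V}^*)$. Since the correspondence is a bijection on committees of the relevant size and the identity on non-outlier index sets, the two instances have identical sets of achievable scores; in particular one is a yes-instance iff the other is, and their optimal values coincide, so the reduction is approximation preserving --- indeed value preserving, hence any $\alpha$-approximate solution of one instance maps to an $\alpha$-approximate solution of the other.

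For the reverse direction I would simply observe that complementation is an involution ($\mathcal{S}_i^{cc} = \mathcal{S}_i$) and that swapping $m^*$ with $\overline{m}$ is its own inverse, so the very same construction, applied to an instance of \textsc{\Disapprov{}$^\prime$-outliers}, yields an equivalent instance of \textsc{\Disapprov{}-outliers} with the stated parameter behaviour. There is no substantive obstacle here: the only points needing care are the verification of the displayed set identity and the bookkeeping that the non-outlier set is carried over index for index; once those are in place, the statements about parameter correspondence and about the reduction being approximation preserving are immediate.
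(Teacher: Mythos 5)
Your proposal is correct and follows essentially the same route as the paper's proof: complement every vote, trade $m^*$ for $\overline{m}$, keep all other parameters, and pair each committee $\mathcal{X}$ with its complement $\mathcal{C}\setminus\mathcal{X}$ so that the scores coincide pointwise. You merely spell out the set identity and the involution argument for the reverse direction in more detail than the paper does.
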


\begin{proof}
 We provide the reduction from \textsc{\Disapprov{}-outliers} to \textsc{\Disapprov{}$^\prime$-outliers}. The other cases are exactly same. Let $(\mathcal{V},\mathcal{C}, n^*, m^*, t)$ be an arbitrary instance of \textsc{\Disapprov{}-outliers}. We define the corresponding instance $(\mathcal{V}^\prime,\mathcal{C}^\prime, n^{*\prime}, m^{*\prime}, t^\prime)$ of \textsc{\Disapprov{}$^\prime$-outliers} as follows. 
 \[ \mathcal{C}^\prime = \mathcal{C}, \mathcal{S}_i^\prime = \{ x\in\mathcal{C}^\prime : x\notin \mathcal{S}_i \}, n^{*\prime} = n^*, m^{*\prime} = \overline{m}, t^\prime = t \] 
 We claim that the two instances are equivalent. If there exist a $\mathcal{V}^*\subset\mathcal{V}$ and $\mathcal{C}^*\subset\mathcal{C}$ such that $s_{disapproval}(\mathcal{V}^*, \mathcal{C}^*) \le t$ then $s_{disapproval^\prime}(\mathcal{V}^*, \overline{\mathcal{C}})\le t$ and vice versa.
\end{proof}

There are some very special cases that are solvable in polynomial time. In particular, when every candidate is approved by at most one vote, or when every vote approves at most one candidate, then the \textsc{Minisum-outliers} problem can be solved in polynomial amount of time.

\begin{lemma}
 The \textsc{Minisum-outliers} problem is in \Pshort when every vote approves at most one candidate.
\end{lemma}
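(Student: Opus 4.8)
The plan is to reduce the problem to a minimum-weight bipartite matching / assignment problem (equivalently a min-cost flow), which is solvable in polynomial time. The key observation is that when every vote approves at most one candidate, a committee $\CC^*$ of size $m^*$ interacts with a vote $\SSS_i$ in a very limited way: if $\SSS_i=\emptyset$, then $h(\SSS_i,\CC^*)=m^*$ regardless of which committee we pick; if $\SSS_i=\{c_j\}$ for some candidate $c_j$, then $h(\SSS_i,\CC^*)=m^*-1$ when $c_j\in\CC^*$ and $h(\SSS_i,\CC^*)=m^*+1$ when $c_j\notin\CC^*$. So the Hamming distance from any vote depends only on whether the (unique) approved candidate of that vote, if any, is in the committee.

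First I would observe that this means the only decisions that matter are (i) which $m^*$ candidates go into $\CC^*$, and (ii) which $n^*$ votes are kept as non-outliers; and once the committee is fixed, the best choice of non-outliers is simply to keep the $n^*$ votes with smallest Hamming distance, i.e.\ prefer votes whose approved candidate lies in $\CC^*$, then votes approving a candidate outside $\CC^*$, with empty votes being indifferent (they all contribute $m^*$). So the whole problem reduces to choosing the committee so as to maximize the number of kept votes whose approved candidate is inside it. Concretely, for each candidate $c_j$, let $d_j$ denote the number of votes approving exactly $c_j$; picking $c_j$ into the committee converts, for each such vote that we keep, a cost of $m^*+1$ into $m^*-1$, a saving of $2$. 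Since $n^*$ is fixed and the contribution of every kept vote is $m^*$ plus a correction of $\pm 1$ (or $0$ for empty votes), the total score is an explicit function of how many kept votes have their candidate inside $\CC^*$.

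Next I would make this precise: sort candidates by $d_j$, and greedily put the $m^*$ candidates with largest $d_j$ into $\CC^*$ (breaking ties arbitrarily); then keep as non-outliers first all votes whose approved candidate is in $\CC^*$ (there are $\sum_{c_j\in\CC^*} d_j$ of them), then fill the remaining slots up to $n^*$ with any other votes. One then checks that this is optimal by an exchange argument: swapping a committee member for a non-member with weakly larger $d_j$ value never increases the number of ``bad'' kept votes, and re-choosing the kept votes greedily is optimal for any fixed committee. The final answer compares the resulting score against $t$. (Alternatively, and more robustly against boundary cases such as $n^* > $ number of non-empty votes, one phrases it as an assignment LP / min-cost flow and invokes its polynomial-time solvability, which also handles the bookkeeping uniformly.)

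The main obstacle is purely the case analysis around degenerate inputs — empty votes, the regime $m^* = m$ or $m^* = 0$, and the situation where fewer than $n^*$ votes approve committee candidates so that some kept votes are ``forced bad'' — but none of these affect the polynomial running time; the greedy/flow formulation absorbs them. No \WO{}-style difficulty arises because there is no interaction between the choices of candidates beyond their individual degrees $d_j$, which is exactly what breaks down once votes may approve two or more candidates (cf.\ Lemma~\ref{lem:hard_tkno}).
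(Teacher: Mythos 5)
Your proposal is correct and is essentially the paper's own algorithm: pick the $m^*$ candidates with the most approvals and discard as outliers the $\overline{n}$ farthest votes (the paper makes the tie structure explicit by removing first votes approving a non-committee candidate, then empty votes, then votes approving a committee member, and your exchange-argument sketch for why the max-approval committee is optimal is in fact more justification than the paper gives). One small slip: ``fill the remaining slots up to $n^*$ with any other votes'' is not quite right, since empty votes (cost $m^*$) must be preferred over votes approving a candidate outside $\CC^*$ (cost $m^*+1$); your earlier rule of keeping the $n^*$ votes of smallest Hamming distance already handles this, so just apply that rule consistently (or fall back on your min-cost flow formulation).
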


\begin{proof}
 The following greedy algorithm computes the committee and the set of outliers which achieves lowest minisum score. First choose the $m^*$ candidates with maximum number of approvals. Let $\mathcal{C}^*$ be the chosen set of $m^*$ candidates. Let $U$ be the set of votes who approve some candidate in $\mathcal{C}^*$, $W$ be the set of votes who approve some candidate in $\overline{\mathcal{C}}(=\mathcal{C}\setminus \mathcal{C}^*)$, and $Y = \mathcal{V} \setminus (U \cup W)$. Then, if $|W| \ge \overline{n},$ then define the set of outliers to be $\overline{\mathcal{V}}$ for any arbitrary $\overline{\mathcal{V}} \subseteq W$ with $|\overline{\mathcal{V}}|=\overline{n}$. If $|W|<\overline{n}$ and $|W| + |Y| \ge \overline{n},$ then define the set of outliers to be $\overline{\mathcal{V}} = W \cup Z,$ for any arbitrary $Z \subseteq Y$ with $|Z| = \overline{n} - |W|$. Otherwise, define the set of outliers to be $W \cup Y \cup A,$ for any arbitrary $A \subseteq U$ with $|A|=\overline{n}-|W|+|Y|.$
\end{proof}

\begin{restatable}{lemma}{Polycases}
 The \textsc{Minisum-outliers} problem is in \Pshort when every candidate is approved in at most one vote
\end{restatable}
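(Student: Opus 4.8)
The plan is to reduce the problem to a cardinality-constrained subset-sum computation. First I would observe that the hypothesis ``every candidate is approved by at most one vote'' is exactly the statement that the votes $\mathcal{S}_1,\dots,\mathcal{S}_n$ are pairwise disjoint subsets of $\mathcal{C}$. Write $s_i = |\mathcal{S}_i|$ and note that $\sum_i s_i = \bigl|\bigcup_i \mathcal{S}_i\bigr| \le m$, so all the integers we shall manipulate are bounded by $m$. For any committee $\mathcal{C}^*$ with $|\mathcal{C}^*| = m^*$ and any vote $\mathcal{S}_i$, one has $h(\mathcal{S}_i,\mathcal{C}^*) = m^* + s_i - 2x_i$, where $x_i := |\mathcal{C}^* \cap \mathcal{S}_i|$; thus the distance from $\mathcal{S}_i$ depends on the committee only through the single number $x_i$. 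Since all Hamming distances are nonnegative, we may assume that the set of non-outliers $\mathcal{V}^*$ has size exactly $n^*$ (shrinking a larger $\mathcal{V}^*$ never increases the score), and then the score equals $n^*m^* + \Sigma - 2\sum_{i\in\mathcal{V}^*} x_i$ where $\Sigma := \sum_{i\in\mathcal{V}^*} s_i$.

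Next I would show that, once $\mathcal{V}^*$ is fixed, the best achievable score depends only on $\Sigma$. Because the votes are pairwise disjoint, the counts $x_1,\dots,x_n$ may be chosen independently, subject only to $0 \le x_i \le s_i$ and $\sum_i x_i \le m^*$: any remaining committee slots are filled by candidates approved by no vote, and, if those run out, by candidates inside outlier votes. A short feasibility check — using $\sum_i s_i \le m$ and splitting on whether $\Sigma \le m^*$ — confirms this completion is always possible, so $\sum_{i\in\mathcal{V}^*} x_i$ can attain any value up to $\min(m^*,\Sigma)$ and no more. Hence the minimum score for this $\mathcal{V}^*$ is $n^*m^* + \Sigma - 2\min(m^*,\Sigma) = m^*(n^*-1) + |\Sigma - m^*|$. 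Consequently the global optimum is
\[ m^*(n^*-1) + \min\bigl\{\, |\Sigma(T) - m^*| \;:\; T \subseteq \mathcal{V},\ |T| = n^* \,\bigr\}, \qquad \Sigma(T) := \textstyle\sum_{i\in T} s_i . \]

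Finally, to evaluate $\min\{|\Sigma(T)-m^*| : |T|=n^*\}$ I would run the standard dynamic program for cardinality-constrained subset sum: a boolean table indexed by (number of votes considered, number selected so far, running sum) records which sums are realizable. Since the running sum never exceeds $\sum_i s_i \le m$, the table has $O(n^2 m)$ entries and is computed in polynomial time; reading off the realizable value of $\Sigma(T)$ closest to $m^*$ gives the optimal score, and back-tracking through the table followed by a greedy water-filling assignment of the $m^*$ committee slots (first into non-outlier votes up to their caps, then into candidates approved by nobody, then into outlier votes) produces an explicit committee $\mathcal{C}^*$ and outlier set $\overline{\mathcal{V}}$; comparing the optimum against $t$ decides the instance. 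This puts \textsc{Minisum-outliers} in \Pshort{} in this case. The only mildly delicate point is the feasibility bookkeeping in the middle step — verifying that the claimed vector $(x_i)$ always extends to a committee of size exactly $m^*$ — which I expect to be the main (though routine) obstacle; everything else is elementary counting.
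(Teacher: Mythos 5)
Your argument is correct. The closed form is right: with pairwise-disjoint votes, for a fixed set of non-outliers $T$ of size $n^*$ the best committee of size $m^*$ gives score $n^*m^*+\Sigma(T)-2\min(m^*,\Sigma(T))=m^*(n^*-1)+|\Sigma(T)-m^*|$, your feasibility check for attaining $\min(m^*,\Sigma)$ is sound (using $m^*\le m$, which you should state as the trivial preprocessing step), and the cardinality-constrained subset-sum table over sums bounded by $\sum_i s_i\le m$ is a standard polynomial-time DP. The paper proves the same lemma by a dynamic program directly over triples (votes processed, outliers used, remaining committee budget), declaring that a non-outlier vote may WLOG be maximally covered by the committee and decrementing the budget by $|\mathcal{S}_i|$ (clamped at $0$). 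Structurally the two tables are near mirror images (remaining budget $\ell$ versus accumulated size $\sigma$, clamped at $m^*$), but your route differs in how the optimization is handled: you first prove that, given the committee size, each vote's distance depends only on its overlap, so the optimal score is a function of $\Sigma(T)$ alone, and then solve a pure feasibility (subset-sum) question; the paper folds the cost minimization into the recurrence and leaves the ``maximum overlap is WLOG'' accounting — in particular the fact that every non-outlier's distance contains the global committee size $m^*$ — largely implicit. What your version buys is a cleaner correctness proof and an explicit description of the optimum ($n^*$ votes whose total approval size is as close to $m^*$ as possible); what the paper's buys is a single direct DP without the intermediate formula. Only cosmetic additions are needed: handle the trivial infeasible cases $m^*>m$ or $n^*>n$ explicitly, and note that the final comparison is against the target $t$ of the decision problem, both of which you essentially do.
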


\begin{proof} 
Consider the following dynamic programming based algorithm. Let the votes be $\mathcal{S}_1, \mathcal{S}_2, \cdots, \mathcal{S}_n.$ We define $M(i, j, \ell)$ to be a set of $j$ outliers (call it an optimal set of outliers) that minimizes the sum of Hamming distances when we are selecting a committee of size $\ell$ considering only votes in $\{\mathcal{S}_1, \mathcal{S}_2, \cdots, \mathcal{S}_i\}.$ We compute the value of $M(i, j, \ell)$. Consider the case when the vote $\mathcal{S}_i$ belongs to an optimal set of outliers when we are selecting a committee of size $\ell$ considering only votes in $\{\mathcal{S}_1, \mathcal{S}_2, \cdots, \mathcal{S}_i\}.$ In this case, $M(i, j, \ell) = M(i-1, j-1, \ell).$ Consider the other case when the vote $\mathcal{S}_i$ does not belong to any optimal set of outliers when we are selecting a committee of size $\ell$ considering only votes in $\{\mathcal{S}_1, \mathcal{S}_2, \cdots, \mathcal{S}_i\}.$ Then $M(i, j, \ell) = M(i-1, j, \ell - |\mathcal{S}_i|)$ if $\ell \ge |\mathcal{S}_i|$ and $M(i, j, \ell) = M(i-1, j, 0)$ otherwise, since when the vote $\mathcal{S}_i$ is not an outlier then we can assume without loss of generality that the committee has maximum overlap with the vote $\mathcal{S}_i.$ Base cases are taken care from the fact that $M(i, j, \ell)$ can be computed in polynomial time irrespective of the values of $\ell$ and $j$ when $i$ is a constant. We compute the value of $M(i, j, \ell)$ for every $i\in[n], j \le i, \ell\in[m].$  $M(n, \overline{n}, m^*)$ gives us the optimal set of outliers (which also in turn gives us the committee to choose).
\end{proof}

% \begin{lemma}
%  The \textsc{Minisum-outliers} problem is in \Pshort when every candidate is approved in at most one vote, or when every voter approves at most one candidate. 
% \end{lemma}

\paragraph{Future Directions} ~~One pertinent open problem is to close the only unresolved case of parameterized complexity of $s$-\textsc{Outliers} for the  \disapprov{} voting rule when parameterized by $(\overline{m}, \overline{n}, t)$. It is also open to improve, or prove lower bounds, for our approximation results in the context of the minsum voting rule. Refining the \FPT{} fragment of our dichotomy to kerenlization is also an exciting direction for further investigation. Finally, we are also interested in extending our dichotomous results to a dichotomy based on more parameters. We have made progress towards that by incorporating two more parameters, namely the maximum number of candidates that any vote approves and maximum number of approvals that any candidate obtains. 

\newpage
\longversion{\bibliographystyle{apalike}}
\shortversion{\bibliographystyle{unsrt}}
\bibliography{multiwinner}

%Switched to biblatex (see preamble)
%\bibliographystyle{plain}
%\bibliography{paperid}

% \newpage
\longversion{
\section*{Appendix}

Now we provide the missing proofs.

\FPTmnLemma*

\begin{proof}
 Notice that if we know the committee $\mathcal{C}^*\subset \mathcal{C}$ in the optimal solution then we can solve the problem in polynomial amount of time by choosing $n^*$ many votes as non-outliers that have the smallest score with respect to the committee $\mathcal{C}^*$. We guess the committee $\mathcal{C}^*$ of the optimal solution (by iterating over $O(2^m)$ many possibilities) and solve the problem in time $O^*(2^m)$.
 
 Notice that if we know the set of non-outliers $\mathcal{V}^*\subset \mathcal{V}$ in the optimal solution then we can solve the problem in polynomial amount of time by choosing $m^*$ many candidates that receive least disapprovals from the votes in $\mathcal{V}^*$. We guess the set of non-outliers $\mathcal{V}^*$ of the optimal solution (by iterating over $O(2^n)$ many possibilities) and solve the problem in time $O^*(2^n)$.
\end{proof}

\Whardtmstarnstar*

\begin{proof}
 First let us prove the result for the \textsc{Minisum-outliers} problem. We exhibit a parameterized reduction from the \textsc{Clique} problem to the \textsc{Minisum-outliers} problem thereby proving the result. Let $(\mathcal{G}=(U,E), k)$ be an arbitrary instance of the \textsc{Clique} problem. Let $U = \{u_1, \cdots, u_n\}$ and $E = \{e_1, \cdots, e_m\}$. We define the corresponding instance $(\mathcal{V}, \mathcal{C}, n^*, m^*, t)$ of the \textsc{Minisum-outliers} problem as follows: 
 \[ \mathcal{V} = \{\mathcal{S}_1, \cdots, \mathcal{S}_m\}, \mathcal{C} = \{c_1, \cdots, c_n\}, n^* = {k \choose 2}, m^* = k,\]
 \[ t = (k-2){k \choose 2}, \mathcal{S}_i = \{ c_j : v_j \in e_i \}, \forall i\in [m] \] 
 We claim that the two instances are equivalent. In the forward direction, suppose $W\subset U$ forms a clique with $|W|=k$ in $\mathcal{G}$, and let $Q$ denote the set of edges that have both endpoints in $W$. Consider the committee $\mathcal{C}^* = \{ c_i : u_i \in W\}$ and the set of non-outliers $\mathcal{V^*} = \{ \mathcal{S}_i : e_i \in Q \}$. This achieves the minisum score of $(k-2){k \choose 2}.$
 
 In the reverse direction, suppose there exist a set of non-outliers $\mathcal{V}^* \subset \mathcal{V}$ with $|\mathcal{V}^*|\ge {k\choose 2}$ and a committee $\mathcal{C}^* \subset \mathcal{C}$ with $|\mathcal{C}^*|=k$ such that $h(\mathcal{V}^*, \mathcal{C}^*) \le (k-2){k \choose 2}$. Without loss of generality, we assume $|\mathcal{V}^*| = {k\choose 2}$ since more votes in the set of non-outliers cannot reduce the minisum score. We claim that all the edges in $Q = \{e_j : \mathcal{S}_j \in \mathcal{V}^* \}$ have both the end points in $W=\{u_i : c_i \in \mathcal{C}^*\}$ thereby the vertices in $W$ forming a clique. Indeed otherwise, suppose that there is an edge $e_j \in Q$ for which both the end points are not in $W$. Then $h(\mathcal{S}_j, \mathcal{C}^*) > (k-2)$. On the other hand we have $h(\mathcal{S}, \mathcal{C}^*) \ge (k-2)$ for every $\mathcal{S} \in \mathcal{V}^*$. Hence we have $h(\mathcal{V}^*, \mathcal{C}^*) > (k-2){k \choose 2}$ which is a contradiction. 
 
The proofs for the other rules are identical, except for the values of the target score. We define $t=(k-2){k \choose 2}$ for the \disapprov{} and $t=(k-4){k \choose 2}$ for the \netdisapprov{} voting rules. It is easily checked that the details are analogous.
\end{proof}

\MinisumApprox*

\begin{proof}
 Fix any constant $\eps>0$. Consider the case that the minisum committee, say $\mathcal{C}^*,$ incurs a sum of Hamming distances of at most $\frac{n^*}{\eps}.$ In this case, by pigeon hole principle, there exists a vote say $\mathcal{S}_i$ whose Hamming distance from the committee $\mathcal{C}^*$ is at most $\frac{1}{\eps}$. We can guess the vote $\mathcal{S}_i$ (by iterating over all the votes) and try all possible committees of size $m^*$ which is at most $\frac{1}{\eps}$ far away from $\mathcal{S}_i$; this takes polynomial amount of time since there are only $O(m^{\frac{1}{\eps}})$ many such committees. Hence, in polynomial amount of time, we can check if the sum of Hamming distances of the minisum committee and the votes is less than $\frac{n^*}{\eps}$ or not. Now assume that $\mathcal{C}^*$ incurs a sum of the Hamming distances of at least $\frac{n^*}{\eps}.$ We choose any $n^*$ many voters and return the committee $\mathcal{X}$ consisting of $m^*$ many candidates that receive most approvals. Notice that the sum of the number of disapprovals that any candidate in the committee receives and the number of approvals that any candidate not in the committee receives is at most $n^*$. Hence, the sum of Hamming distances of $\mathcal{X}$ and the $n^*$ many votes chosen is at most $m^*n^*+(m-2m^*)n^* = \overline{m}n^*.$
\end{proof}

\Polycases*

\begin{proof} 
Consider the following dynamic programming based algorithm. Let the votes be $\mathcal{S}_1, \mathcal{S}_2, \cdots, \mathcal{S}_n.$ We define $M(i, j, \ell)$ to be a set of $j$ outliers (call it an optimal set of outliers) that minimizes the sum of Hamming distances when we are selecting a committee of size $\ell$ considering only votes in $\{\mathcal{S}_1, \mathcal{S}_2, \cdots, \mathcal{S}_i\}.$ We compute the value of $M(i, j, \ell)$. Consider the case when the vote $\mathcal{S}_i$ belongs to an optimal set of outliers when we are selecting a committee of size $\ell$ considering only votes in $\{\mathcal{S}_1, \mathcal{S}_2, \cdots, \mathcal{S}_i\}.$ In this case, $M(i, j, \ell) = M(i-1, j-1, \ell).$ Consider the other case when the vote $\mathcal{S}_i$ does not belong to any optimal set of outliers when we are selecting a committee of size $\ell$ considering only votes in $\{\mathcal{S}_1, \mathcal{S}_2, \cdots, \mathcal{S}_i\}.$ Then $M(i, j, \ell) = M(i-1, j, \ell - |\mathcal{S}_i|)$ if $\ell \ge |\mathcal{S}_i|$ and $M(i, j, \ell) = M(i-1, j, 0)$ otherwise, since when the vote $\mathcal{S}_i$ is not an outlier then we can assume without loss of generality that the committee has maximum overlap with the vote $\mathcal{S}_i.$ Base cases are taken care from the fact that $M(i, j, \ell)$ can be computed in polynomial time irrespective of the values of $\ell$ and $j$ when $i$ is a constant. We compute the value of $M(i, j, \ell)$ for every $i\in[n], j \le i, \ell\in[m].$  $M(n, \overline{n}, m^*)$ gives us the optimal set of outliers (which also in turn gives us the committee to choose).
\end{proof}
}
\end{document}